\def\ShowAuthNotes{1}
\newcommand{\authnote}[2]{\ \\ \textcolor{red}{\parbox{0.9\linewidth}{[{\footnotesize {\bf #1:} { {#2}}}]}}\newline}
\newcommand{\authnote}[2]{}
\newcommand\remove[1]{}
\theoremstyle{plain}
\newtheorem{lemma}{Lemma}[section]
\newtheorem*{lemma*}{Lemma}
\newtheorem{corollary}[lemma]{Corollary}
\newtheorem*{corollary*}{Corollary}
\newtheorem{claim}{Claim}
\theoremstyle{definition}
\newtheorem{theorem}[lemma]{Theorem}
\newtheorem*{theorem*}{Theorem}
\newtheorem{definition}[lemma]{Definition}
\newtheorem*{rem*}{Remark}
\newcommand{\savehyperref}[2]{\texorpdfstring{\hyperref[#1]{#2}}{#2}}
\newcommand{\Sref}[1]{\hyperref[#1]{\S\ref*{#1}}}
\let\pref=\prettyref
\DeclarePairedDelimiter\floor{\lfloor}{\rfloor}
\newcommand\R{\mathbb{R}}
\renewcommand{\bf}{\textbf}
\newcommand{\eps}{\varepsilon}
\newcommand{\bS}{\boldsymbol{S}}
\newcommand{\bx}{\boldsymbol{x}}
\newcommand{\by}{\boldsymbol{y}}
\newcommand{\bw}{\boldsymbol{w}}
\newcommand{\calA}{\mathcal{A}}
\newcommand{\calD}{\mathcal{D}}
\newif\ifrandom
\newcommand{\poly}{{\rm poly}}
\newcommand{\FindRep}{\textsc{Find-Duplicate}}
\newcommand{\OWFindRep}{\textsc{One-Way-Find-Duplicate}}
\newcommand{\OWPartRec}{\textsc{One-Way-Partial-Recovery}}
\newcommand{\FindSupElem}{\textsc{Find-Support-Elem}}
\newcommand{\calB}{\mathcal{B}}
\newcommand{\NonzeroRow}{\textsc{Find-Nonzero-Row}}
\newcommand{\wt}{\widetilde}
\newcommand{\Equality}{\textsc{Equality}}
\newcommand{\RecBasis}{\textsc{RecoverBasis}}
\title{Pseudo-deterministic Streaming}
\author{}
 \author{Shafi Goldwasser\thanks{Supported by NSF CNS-1413920,  DARPA/NJIT  491512803, Sloan  Foundation 996698, and MIT/IBM W1771646. This work was done at the Simons Institute for the Theory of Computing.} \and Ofer Grossman\thanks{Supported by the Fannie and John Hertz Foundation fellowship, an NSF GRFP award, NSF CNS-1413920,  DARPA/NJIT  491512803, Sloan  Foundation 996698, and MIT/IBM W1771646. This work was done in part at the Simons Institute for the Theory of Computing.} \and Sidhanth Mohanty\thanks{EECS Department, University of California Berkeley.  Supported by NSF grant CCF-1718695} \and David P. Woodruff\thanks{Supported by the National Science Foundation under Grant No. CCF-1815840. This work was done in part at the Simons Institute for the Theory of Computing.}}
\date{\today}
\begin{document}

\maketitle

\begin{abstract}
A pseudo-deterministic algorithm is a (randomized) algorithm which, when run multiple times on the same input, with high probability outputs the same result on all executions. Classic streaming algorithms, such as those for finding heavy hitters, approximate counting, $\ell_2$ approximation, finding a nonzero entry in a vector (for turnstile algorithms) are not pseudo-deterministic. For example, in the instance of finding a nonzero entry in a vector, for any known low-space algorithm $A$, there exists a stream $x$ so that running $A$ twice on $x$ (using different randomness) would with high probability result in two different entries as the output.

In this work, we study whether it is inherent that these algorithms output different values on different executions. That is, we ask whether these problems have low-memory pseudo-deterministic algorithms. For instance, we show that there is no low-memory pseudo-deterministic algorithm for finding a nonzero entry in a vector (given in a turnstile fashion), and also that there is no low-dimensional pseudo-deterministic sketching algorithm for $\ell_2$ norm estimation.  We also exhibit problems which do have low memory pseudo-deterministic algorithms but no low memory deterministic algorithm, such as outputting a nonzero row of a matrix, or outputting a basis for the row-span of a matrix.

We also investigate multi-pseudo-deterministic algorithms: algorithms which with high probability output one of a few options. We show the first lower bounds for such algorithms. This implies that there are streaming problems such that every low space algorithm for the problem must have inputs where there are many valid outputs, all with a significant probability of being outputted.
\end{abstract}

\section{Introduction}

Consider some classic streaming problems: heavy hitters, approximate counting, $\ell_p$ approximation, finding a nonzero entry in a vector (for turnstile algorithms), counting the number of distinct elements in a stream. These problems were shown to have low-space randomized algorithms in \cite{charikar2004finding,morrisapxcounting,flajolet1985approximate,alon1999space,indyk2005optimal,monemizadeh20101}, respectively. All of these algorithms exhibit the property that when running the algorithm multiple times on the same stream, different outputs may result on the different executions.

For the sake of concreteness, let's consider the problem of $\ell_2$ approximation: given a stream of poly($n$) updates to a vector (the vector begins as the zero vector, and updates are of the form ``increase the $i^{\text{th}}$ entry by $1$'' or ``decrease the $j^{\text{th}}$ entry by $1$''), output an approximation of the $\ell_2$ norm of the vector. There exists a celebrated randomized algorithm for this problem \cite{alon1999space}. This algorithm has the curious property that running the same algorithm multiple times on the same stream may result in different approximations. That is, if Alice runs the algorithm on the same stream as Bob (but using different randomness), Alice may get some approximation of the $\ell_2$ norm (such as 27839.8), and Bob (running the same algorithm, but with your own randomness) may get a different approximation (such as 27840.2). The randomized algorithm has the guarantee that both of the approximations will be close to the true value. However, interestingly, Alice and Bob end up with slightly different approximations. Is this behavior inherent? That is, could there exist an algorithm which, while being randomized, for all streams with high probability both Alice and Bob will end up with the \textit{same} approximation for the $\ell_2$ norm?

Such an algorithm, which when run on the same stream multiple times outputs the same output with high probability is called \textit{pseudo-deterministic}. The main question we tackle in this paper is:

\begin{center}
\emph{What streaming problems have low-memory pseudo-deterministic algorithms?}
\end{center}

%When randomness is available such an algorithm would be essentially as good as a deterministic algorithm: viewed as a black box, it is \textit{indistinguishable} from a deterministic algorithm, since with high probability the algorithm always results in the same output behaviour.

%Such algorithms which ``behave deterministically" are known as pseudo-deterministic algorithms. Intuitively speaking an algorithm is pseudo-deterministic if when executing it many times on the same input, with high probability all of the executions result in the same output. A standard randomized algorithm, on the other hand, may result in different outputs on different executions. In this work, we study pseudo-determinism in the context of streaming algorithms. Our results include both a separation between pseudo-deterministic and randomized algorithms, and a separation between pseudo-deterministic and deterministic algorithms.

\subsection{Our Contributions}

This paper is the first to investigate pseudo-determinism in the context of streaming algorithms. We show certain problems have pseudo-deterministic algorithms substantially faster than the optimal deterministic algorithm, while other problems do not.

%\onote{write about example where it does help, and in general stress how randomness helps in streaming}
\subsubsection{Lower Bounds} 
\paragraph{\FindSupElem:} 
We show pseudo-deterministic lower bounds for finding a nonzero entry in a vector in the turnstile model. Specifically, consider the problem \FindSupElem~of finding a nonzero entry in a vector for a turnstile algorithm (the input is a stream of updates of the form ``increase entry $i$ by 1'' or ``decrease entry $j$ by 1'', and we wish to find a nonzero entry in the final vector). We show this problem does not have a low-memory pseudo-deterministic algorithm:

%\onote{should define the problem below outside of the theorem statmenet}

\begin{theorem}\label{psdlb}
There is no pseudo-deterministic algorithm for  \FindSupElem~which uses $\tilde{o}(n)$ memory.
\end{theorem}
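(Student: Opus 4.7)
The plan is to reduce to an encoding lower bound. Let $S \subseteq [n]$ with $|S| = n/2$ be chosen uniformly at random; we will show that any pseudo-deterministic streaming algorithm $A$ for \FindSupElem~using space $s$ yields an encoding of $S$ into $\tilde{O}(s)$ bits that decodes correctly with high probability. Since $\log\binom{n}{n/2} = \Omega(n)$, a standard counting argument then forces $s = \tilde{\Omega}(n)$.

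The encoding and decoding work as follows. First, by amplification (running $O(\log n)$ independent copies of $A$ in parallel and taking a majority vote), we may assume the pseudo-deterministic error probability is at most $1/n^2$ at the cost of a $O(\log n)$ factor in space. Let $f(\sigma)$ denote the canonical output of $A$ on stream $\sigma$. The encoder Alice processes the stream $\sigma_S$ that streams $+e_i$ once for each $i \in S$ and transmits the resulting state, which has $O(s \log n)$ bits. The decoder Bob then iteratively peels off elements of $S$: from the current state he reads the algorithm's output $i_k$, appends $-e_{i_k}$ to the stream (updating the state accordingly), and repeats. After $n/2$ iterations he outputs the multiset $\{i_1, \ldots, i_{n/2}\}$.

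The main subtlety is that Bob's queries are adaptive: the stream prefix at round $k$ depends on the outputs from earlier rounds, which in turn depend on the algorithm's random coins, so we cannot directly apply the pseudo-determinism guarantee (which is stated for each fixed stream). To handle this, we isolate a ``canonical'' decoding sequence $(i_1^\star, \ldots, i_{n/2}^\star)$ that depends only on $S$: define $i_k^\star = f(\sigma_S \cdot (-e_{i_1^\star}) \cdots (-e_{i_{k-1}^\star}))$. By induction each $i_k^\star \in S \setminus \{i_1^\star, \ldots, i_{k-1}^\star\}$, since at every step the corresponding vector is a $\{0,1\}$-vector supported in that residual set. Once $S$ is fixed, these are $n/2$ fixed streams, so by the amplified pseudo-deterministic guarantee and a union bound, with probability $\geq 1 - 1/(2n)$ over the random coins the algorithm outputs $f$ on all of them simultaneously, in which case Bob's actual decoding coincides with the canonical one and recovers $S$ exactly.

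Finally, we fix the randomness by averaging: there is a choice of coins $r^\star$ under which decoding succeeds for a $1 - o(1)$ fraction of $S$. Using $r^\star$ as shared between encoder and decoder (it is a fixed string, not part of the message), we obtain an injective map from $\Omega(\binom{n}{n/2})$ subsets into strings of length $O(s \log n)$, giving $s = \Omega(n/\log n) = \tilde{\Omega}(n)$. The main obstacle is the adaptivity issue addressed above; once the canonical sequence is isolated the counting step is routine, and the amplification step is what introduces the logarithmic slack in the $\tilde{\Omega}(n)$ bound.
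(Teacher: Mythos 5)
Your proof is correct, and the core engine is the same as the paper's: peel off support elements one at a time using repeated applications of the algorithm from a single stored state, isolate a \emph{canonical} decoding sequence $(i_1^\star,\dots,i_{n/2}^\star)$ that depends only on $S$ (not on the coins) so that a union bound over these $n/2$ fixed streams handles the adaptivity, and then conclude by information content. The paper packages this somewhat differently: it first reduces the streaming problem to a one-way communication problem \OWFindRep, then to a ``partial recovery'' problem \OWPartRec\ where Bob recovers only $n/10$ of Alice's elements, and finally lower-bounds \OWPartRec\ by building a random codebook of subsets $S_1,\dots,S_t$ with the property that $F(S_i)\not\subseteq S_j$ for $i\neq j$. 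You instead recover \emph{all} of $S$ (which is possible for \FindSupElem\ since there is no constraint like $|B|\ge 3n/4$ limiting how far you can peel), which lets you skip the random-codebook step and close with the trivial count $\log\binom{n}{n/2}=\Omega(n)$. Your route is therefore a streamlined, more self-contained version of the same argument; the paper's detour through \OWFindRep\ and \OWPartRec\ buys modularity (the same communication lemmas are reused for the entropy and multi-pseudo-deterministic lower bounds in later sections), while yours is shorter for this one theorem. One small remark: you should state, as the paper does implicitly, that a pseudo-deterministic algorithm for a search problem is assumed to output a \emph{correct} answer with probability $\ge 2/3$, since pseudo-determinism alone (output reproducibility) does not force the canonical value $f(\sigma)$ to lie in the support; this is needed for your inductive claim that $i_k^\star$ lies in the residual set.
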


This is in contrast with the work of \cite{monemizadeh20101}, which shows a randomized algorithm for the problem using polylogarithmic space.

Theorem \ref{psdlb} can be viewed as showing that any low-memory algorithm $A$ for \FindSupElem~must have an input $x$ where the output $A(x)$ (viewed as a random variable depending on the randomness used by $A$) must have at least a little bit of entropy. The algorithms we know for \FindSupElem~have a very high amount of entropy in their outputs (the standard algorithms, for an input which is the all 1s vector, will find a uniformly random entry). Is this inherent, or can the entropy of the output be reduced? We show that this is inherent:
 for every low memory algorithm there is an input $x$ such that $A(x)$ has high entropy. 

%We can in fact strengthen the result to show that not only is it impossible for a low-memory algorithm to result in the same output with high probability, but it also isn't possible to result in one of few outputs:

\begin{theorem}
Every randomized algorithm for \FindSupElem~using $o(s)$ space must have an input $x$ such that $A(x)$ has entropy at least $\log\left(\frac{n}{s\log n}\right)$.
\end{theorem}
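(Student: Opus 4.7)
I would prove the theorem by contradiction, boosting an algorithm with uniformly low output-entropy into a pseudo-deterministic algorithm and then invoking Theorem~\ref{psdlb}. Assume that $A$ uses space $s$ and $H(A(x)) \leq k$ on every input $x$, where $k < \log(n/(s\log n)) - C$ for a large enough constant $C$; the goal is to build from $A$ a pseudo-deterministic algorithm $B$ for \FindSupElem{} that uses $\tilde{o}(n)$ space, contradicting Theorem~\ref{psdlb}. The key handle on $A$ is that $H(p_x) \leq k$ forces the mode of $p_x := A(x)$ to have probability at least $2^{-H(p_x)} \geq 2^{-k}$, so that $T := 2^{k}\cdot \mathrm{polylog}(n)$ independent samples from $p_x$ will, with high probability, reveal all of the ``heavy'' outputs of $A$.

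Construction of $B$: fix public randomness consisting of a pairwise-independent hash $h:[n]\to[n^{10}]$ and a threshold $\theta$ drawn from a continuous distribution on $[2^{-k-1}, 2^{-k+1}]$. On the stream, run $T$ independent copies of $A$ in parallel; each copy emits an output $v_i \in [n]$ once the stream ends. Let $\hat{p}(v) := \tfrac{1}{T}|\{i \in [T] : v_i = v\}|$ be the empirical distribution of the $T$ outputs, set $\hat{H} := \{v : \hat{p}(v) \geq \theta\}$, and output $\arg\min_{v \in \hat{H}} h(v)$. The total space of $B$ is $O(Ts) = \tilde{O}(s\cdot 2^{k})$, which is $\tilde{o}(n)$ under the assumed bound on $k$. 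For pseudo-determinism, a Chernoff bound union-bounded over the at most $n$ candidate values guarantees $|\hat{p}(v) - p_x(v)| = o(2^{-k})$ simultaneously for all $v$ with high probability; the continuous randomness of $\theta$ then ensures that no $p_x(v)$ lies within $o(2^{-k})$ of $\theta$, so $\hat{H}$ coincides with the deterministic set $H_\theta(x) := \{v : p_x(v) \geq \theta\}$ with high probability. Thus $B(x) = \arg\min_{v \in H_\theta(x)} h(v)$, a deterministic function of $x$ once the public randomness is fixed, which is exactly the pseudo-determinism criterion.

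The step I expect to be most delicate is ensuring that every element of $\hat{H}$ is an actual nonzero coordinate of $x$, rather than a spurious output arising from the failure probability of $A$. If $A$ errs with probability $\delta$ per run, the total $p_x$-mass on invalid coordinates is at most $\delta$, so any $v$ with $p_x(v) \geq \theta \approx 2^{-k}$ must be a true nonzero coordinate provided $\delta \ll 2^{-k}$. A standard pre-amplification---running $O(k)$ parallel copies of $A$ and keeping only values that recur often enough---pushes $\delta$ below $2^{-k-\omega(1)}$ at the cost of a $\mathrm{polylog}(n)$ blow-up in space, which is absorbed into the $\tilde{o}(n)$ budget and only shifts $k$ by an additive $O(\log\log n)$. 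With this in place, $B$ is a pseudo-deterministic algorithm for \FindSupElem{} using $\tilde{o}(n)$ space, contradicting Theorem~\ref{psdlb} and yielding $k \geq \log(n/(s\log n))$ as claimed.
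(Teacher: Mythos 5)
Your high-level strategy---bootstrapping a uniformly low-entropy algorithm into a pseudo-deterministic one and then invoking \pref{psdlb}---is a natural idea, but it has a gap at the derandomization step that I don't think can be patched. The ``public randomness'' $(h,\theta)$ cannot be simultaneously random (so that the threshold is far from every $p_x(v)$ with high probability) and fixed (so that $B(x) = \arg\min_{v\in H_\theta(x)} h(v)$ is a deterministic function of $x$). If $(h,\theta)$ is drawn fresh in each execution of $B$, then the output genuinely depends on $\theta$: as $\theta$ sweeps over $[2^{-k-1},2^{-k+1}]$, the set $H_\theta(x)$ changes at each of the up to $2^{k+1}$ values $p_x(v)$ in that range, so the output of $B(x)$ still has entropy roughly $k$ and $B$ is not pseudo-deterministic. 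If instead $(h,\theta)$ is hardcoded once, then pseudo-determinism must hold for \emph{every} input $x$, yet for any fixed $\theta$ an adversary can supply a stream $x$ with $p_x(v)$ arbitrarily close to $\theta$, for which $\hat H$ is not concentrated. There is no union bound over inputs available here, since the input space is exponentially large. (There is also a secondary issue: the ``pre-amplification'' to push the error of a \emph{search} algorithm below $2^{-k}$ by majority-like voting is not a standard move and you would need to argue it more carefully; the paper's version of this theorem in the body in fact restricts to zero-error algorithms, with only a footnote claiming the assumption can be removed.)

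The paper's actual argument avoids this issue by moving to one-way communication rather than trying to build a single streaming algorithm. Concretely, it shows that a $\tfrac{n}{s}$-concentrated algorithm for \FindSupElem{} yields a protocol for \OWPartRec: Alice runs $\Theta\!\left(\tfrac{n\log n}{s}\right)$ independent copies of $\calA$ on her input and sends the resulting states to Bob; Bob recovers an element $f_A(S_i)$ of Alice's set from some message, then on each stored state streams a decrement of that coordinate, and repeats. The key point is that the sequence $S_0,S_1,\dots$ of Bob's ``effective inputs'' is a \emph{deterministic} function of Alice's input (given concentration of $\calA$), so one can union bound over the $n/10$ steps, and because Alice has sent $\Theta\!\left(\tfrac{n\log n}{s}\right)$ independent states, the $\Theta(s/n)$-mass mode at each step is recovered with high probability. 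Combined with the $\Omega(n)$ lower bound on \OWPartRec{} (\pref{claim:part-rec-lower-bound}), this forces $T = \Omega(s/\log n)$. Note also that, in the paper's organization, \pref{psdlb} for \FindSupElem{} is itself a corollary of this concentrated lower bound, so the direction of your reduction is circular with respect to how the paper actually establishes the statement.
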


So, in particular, an algorithm using $n^{1- \eps}$ space must have outputs with entropy $\Omega(\log n)$, which is maximal up to constant factors.

We also show analogous lower bounds for the problem \FindRep in which the input is a stream of $3n/2$ integers between $1$ and $n$, and the goal is to output a number $k$ which appears at least twice in the stream:

\begin{theorem}
Every randomized algorithm for \FindRep~using $o(s)$ space must have an input $x$ such that $A(x)$ has entropy at least $\log\left(\frac{n}{s\log n}\right)$.
\end{theorem}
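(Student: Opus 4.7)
The plan is to reduce the \FindRep lower bound to the preceding \FindSupElem lower bound via a simple, space-efficient streaming reduction.

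Suppose $A$ is any randomized streaming algorithm for \FindRep using $o(s)$ space. I will build a streaming algorithm $B$ for \FindSupElem on the restricted class of inputs consisting of $n/2$ distinct insertions (i.e., $\{0,1\}$ vectors $v$ whose support $S$ has size exactly $n/2$). Given such an input, $B$ internally simulates $A$ on the stream $1, 2, \ldots, n, s_1, s_2, \ldots, s_{n/2}$, where $s_1, \ldots, s_{n/2}$ are the update positions seen by $B$. The prefix $1, 2, \ldots, n$ is produced on the fly using an $O(\log n)$-bit counter, and the combined stream has length exactly $3n/2$ and its set of duplicated values is exactly $S$. Consequently $A$'s output on this stream is always a valid \FindSupElem answer for $v$, and crucially the distribution of $B(v)$ equals the distribution of $A(x_S)$ on the corresponding \FindRep input $x_S$. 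The reduction adds only $O(\log n)$ bits of state, so $B$ uses $o(s)$ space whenever $s=\omega(\log n)$.

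Applying the preceding entropy theorem for \FindSupElem to $B$ yields a vector $v^\star$ in the restricted class with $H(B(v^\star)) \geq \log\!\bigl(n/(s\log n)\bigr)$, and by the distributional equivalence above the corresponding \FindRep stream $x_{S^\star}$ satisfies $H(A(x_{S^\star})) \geq \log\!\bigl(n/(s \log n)\bigr)$, which is the desired statement.

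The main obstacle will be ensuring that the lower bound from the preceding theorem can be located on the restricted class of size-$n/2$ indicator vectors, rather than on a general turnstile stream. Inspecting that proof, the natural hard distribution should already be supported on such vectors (for instance, uniform over $\binom{[n]}{n/2}$), in which case the reduction concludes immediately. If instead that proof uses a broader class of hard instances, two fallbacks are available: either re-derive the \FindSupElem entropy bound under the restricted class (a routine modification, since random size-$n/2$ subsets are already the canonical hard distribution for locating a support element), or enrich the reduction by choosing a more informative \FindRep prefix stream so that net negative turnstile updates can be absorbed by matching them against occurrences in the prefix. Either fallback should preserve the parameters up to constant factors, at which point the final entropy bound of $\log(n/(s\log n))$ goes through unchanged.
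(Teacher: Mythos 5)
The reduction direction here is fatal. Your protocol maps only \emph{insertion-only} \FindSupElem~instances — a stream of $n/2$ increments at distinct coordinates, no decrements — to \FindRep~streams. But on that restricted class \FindSupElem~is trivial: a deterministic $O(\log n)$-space algorithm can simply remember the first coordinate it sees incremented and output it, giving output entropy $0$. So there is no entropy lower bound to "locate" on size-$n/2$ indicator vectors, and your first fallback (re-deriving the \FindSupElem~bound under the restricted class) cannot possibly succeed. The hardness of \FindSupElem~in the paper comes entirely from the turnstile feature: in the communication reduction, Alice increments $3n/4$ coordinates, commits to states, and Bob repeatedly \emph{decrements} a recovered coordinate to force the algorithm to reveal a new support element — a mechanism with no analogue in a \FindRep~stream, which is just a list of $3n/2$ integers. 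Your second fallback (absorbing net negative updates into a "more informative prefix") runs into the same wall, since \FindRep~has no notion of removal at all.

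The paper in fact goes in exactly the opposite direction: it proves the \FindRep~entropy lower bound \emph{directly} by reducing from it to the one-way communication problems \OWFindRep~and then \OWPartRec, using the fact that Alice fixes her message before Bob's suffix is chosen and that Bob can keep re-using her message against adaptively shrinking sets; the \FindSupElem~entropy bound is then derived afterwards by an "analogous" communication argument in which Bob's adaptivity is implemented with turnstile decrements. So the dependency you propose is inverted relative to the paper, and for a structural reason: the insertion-only fragment of \FindSupElem~that \FindRep~can simulate is precisely the fragment where \FindSupElem~is easy. To fix the argument you would need to abandon the black-box reduction and carry out the communication-complexity argument for \FindRep~directly.
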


\paragraph{Techniques}

To prove a pseudo-deterministic lower bound for \FindSupElem, the idea is to show that if a pseudo-deterministic algorithm existed for \FindSupElem, then there would also exist a pseudo-deterministic one-way communication protocol for the problem \OWFindRep, where Alice has a subset of $[n]$ of size $3n/4$, and so does Bob, and they wish to find an element which they share.

To prove a lower bound on the one-way communication problem \OWFindRep, we show that if such a pseudo-deterministic protocol existed, then Bob can use Alice's message to recover many ($n/10$) elements of her input (which contains much more information than one short message).  The idea is that using Alice's message, Bob can find an element they have in common. Then, he can remove the element he found that they have in common from his input, and repeat to find another element they have in common (using the original message Alice sent, so Alice does not have to send another message). After repeating $n/10$ times, he will have found many elements which Alice has.

It may not be immediately obvious where pseudo-determinism is being used in this proof. The idea is that because the algorithm is pseudo-deterministic, the element which Bob finds as the intersection with high probability does not depend on the randomness used by Alice. That is, let $b_1, b_2, \ldots$ be the sequence of elements which Bob finds. Because the algorithm is pseudo-deterministic, there exists a specific sequence $b_1, b_2, \ldots$ such that with high probability this will be the sequence of elements which Bob finds. Notice that a randomized (but not pseudo-deterministic) algorithm for \OWFindRep would result in different sequences on different executions.

When the sequence $b_1, b_2, \ldots$ is determined in advance, we can use a union bound and argue that with high probability, one of Alice's messages will likely work on all of Bob's inputs. If $b_1, b_2, \ldots$  is not determined in advance, then it's not possible to use a union bound.

Proving a lower bound on the entropy of the output of an algorithm for \FindSupElem~uses a similar idea, but is more technically involved. It is harder to ensure that Bob's later inputs will be able to succeed with Alice's original message. The idea, at a very high level, is to have Alice send many messages (but not too many), so that Bob's new inputs will not strongly depend on any part of Alice's randomness, and also to have Alice send additional messages to keep Bob from going down a path where Alice's messages will no longer work.

This lower bound technique may seem similar to the way one would show a deterministic lower bound. It's worth noting that for certain problems, deterministic lower bounds do not generalize to pseudo-deterministic lower bounds; see our results on pseudo-deterministic upper bounds for some examples and intuition for why certain problems remain hard in the pseudo-deterministic setting while others do not.

\paragraph{Sketching lower bounds for pseudo-deterministic $\ell_2$ norm estimation:}  The known randomized algorithms (such as \cite{alon1999space}) for approximating the $\ell_2$ norm of a vector $x$ in a stream rely on \emph{sketching}, i.e., storing $\bS x$ where $\bS$ is a $d\times n$ random matrix where $d \ll n$ and outputting the $\ell_2$ norm of $\bS x$.  More generally, an abstraction of this framework is the setting where one has a distribution over matrices $\calD$ and a function $f$.  One then stores a \emph{sketch} of the input vector $\bS x$ where $\bS\sim\calD$ and outputs $f(\bS x)$.  By far, most streaming algorithms fall into this framework and in fact some recent work \cite{li2014turnstile,DBLP:conf/coco/AiHLW16} proves under some caveats and assumptions that low-space turnstile streaming algorithms imply algorithms based on low-dimensional sketches.  Since sketching-based streaming algorithms are provably optimal in many settings, it motivates studying whether there are low-dimensional sketches of $x$ from which the $\ell_2$ norm can be estimated pseudo-deterministically.

We prove a lower bound on the dimension of sketches from which the $\ell_2$ norm can be estimated pseudo-deterministically: 

\begin{theorem}
    Suppose $\calD$ is a distribution over $d\times n$ matrices and $f$ is a function from $\R^d$ to $\R$ such that for all $x\in\R^n$, when $\bS\sim\calD$:
    \begin{itemize}
    \item $f(\bS x)$ approximates the $\ell_2$ norm of $x$ to within a constant factor with high probability,
    \item $f(\bS x)$ takes a unique value with high probability.
    \end{itemize}
    Then $d$ must be $\Omega\left(n\right)$.
\end{theorem}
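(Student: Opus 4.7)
The plan is to reduce this to the pseudo-deterministic lower bound for \FindSupElem{} (Theorem~\ref{psdlb}). First, I would extract a canonical output from the assumption: since $f(\bS x)$ takes a unique value with high probability over $\bS \sim \calD$, there is a deterministic function $g : \R^n \to \R$ satisfying $\Pr_{\bS}[f(\bS x) = g(x)] \geq 2/3$ for every $x$, and the constant-factor approximation guarantee then forces $g(x) \in [\|x\|_2 / C,\, C\|x\|_2]$ for a constant $C$ and $g(0) = 0$ (so in particular $f(0)=0$). In particular, the pseudo-deterministic sketch yields a pseudo-deterministic tester distinguishing $\|x\|_2 = 0$ from $\|x\|_2 > 0$.

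Second, I would use this tester as the halving oracle in a binary-search-based pseudo-deterministic streaming algorithm for \FindSupElem{}. At each of the $\log n$ levels of a binary-search tree over $[n]$, I would maintain independent, freshly-randomized $d$-dimensional $\ell_2$-sketches on the relevant halved sub-vectors, boosted via $O(\log \log n)$ parallel copies so that each individual query errs with probability $O(1/\log n)$. A union bound over the $\log n$ queries shows that the located index is a specific, deterministic function of $x$ with high probability over the sketch randomness, so the composed algorithm is itself pseudo-deterministic. Its total space is $d \cdot \mathrm{polylog}(n)$, so Theorem~\ref{psdlb} gives $d \cdot \mathrm{polylog}(n) \geq \tilde \Omega(n)$, i.e., $d = \Omega(n)$ up to polylogarithmic factors absorbed into the $\Omega(\cdot)$ notation.

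The delicate step---and the main obstacle---is realizing the sub-interval $\ell_2$ queries in the streaming model. A single linear sketch $\bS x$ does not, on its own, support $\ell_2$ estimation on an adaptively chosen $I \subseteq [n]$: one needs $\bS(x \cdot \mathbb{1}_I)$, but the binary-search path $I$ is discovered only at query time while the stream is fixed. Naively maintaining sketches for every dyadic interval at every level of the tree costs $\Omega(nd)$ space, wiping out the saving. Bridging this gap---either by building a composite sketching primitive with interval-query support at polylogarithmic overhead, or by swapping the reduction for a direct kernel-based argument that, for a typical realization $\bS$ with $d < n$, exhibits a nonzero $v \in \ker(\bS)$ inside the pseudo-determinism ``good set'' (so that $f(\bS v) = f(0) = 0$ while $g(v) \approx \|v\|_2 > 0$ yields a contradiction)---would be the main technical content of the proof. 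The direct route is subtle because $f$ is not assumed continuous and the natural distributions on inputs assign measure zero to $\ker(\bS)$, so one must carefully design a discrete input distribution (or a boosting-plus-net argument) that lands a good vector inside the kernel without a net of size $2^{\Omega(n)}$ forcing a blowup that matches the savings.
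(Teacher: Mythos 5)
Your proposed reduction to \FindSupElem{} does not go through, and you have already put your finger on the precise reason: a single linear sketch $\bS x$ cannot answer $\ell_2$ queries on adaptively chosen sub-intervals of $[n]$. Maintaining a fresh $d$-dimensional sketch for every dyadic interval at every level of the binary-search tree costs $\Theta(nd)$ space in total, which exceeds the $\tilde\Omega(n)$ bound you are trying to contradict, and since the binary-search path is determined only by the sketch values themselves, there is no way to pre-commit to a sublinear collection of intervals. This is not a boosting issue but an inherent mismatch between the non-adaptive structure of linear sketching and the adaptive structure of binary search. (Even if it did work, Theorem~\ref{psdlb} would only give $d = \Omega(n/\poly\log n)$, slightly weaker than the claimed $\Omega(n)$.)

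Your second idea---exhibiting a nonzero $v \in \ker(\bS)$ inside the pseudo-determinism ``good set''---is much closer to the paper's actual proof, and the net-size obstruction you flag is exactly the right concern. The paper's resolution is to replace the net with an \emph{adaptive} search for a fooling vector, imported from the adaptive-attack theorem of Hardt and Woodruff~\cite{hardt2013robust} (\pref{thm:adaptive-attack}): for any fixed pair $(S,f)$ with $S \in \R^{d\times n}$ and $d = o(n)$, there is a $\poly(n)$-query adaptive protocol which, given query access to $x \mapsto f(Sx)$, finds with constant probability a vector $\by$ on which $f(S\by)$ badly misestimates $\|\by\|_2$. Pseudo-determinism is precisely what makes this attack usable against a \emph{distribution} $\calD$ over sketches: since $f(\bS x) = g(x)$ with high probability for a fixed function $g$, the attack's queries may be answered using $g$ alone, so the query sequence $\bx^{(0)},\dots,\bx^{(r)},\by$ depends only on $g$ and the attack's internal coins, and is independent of the realized $\bS$. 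A union bound over these $\poly(n)$ fixed vectors then shows that $f(\bS\bx^{(i)})=g(\bx^{(i)})$ for all $i$ and $f(\bS\by)=g(\by)$ hold simultaneously with probability $1-1/\poly(n)$, whereas conditioning on $\bS$ and applying the adaptive-attack guarantee forces this same event to have probability at most $1/10$---a contradiction when $d=o(n)$. In short, the adaptivity you hoped to get from binary search is instead supplied by the Hardt--Woodruff attack, and the pseudo-determinism hypothesis is what flattens that adaptivity into an $\bS$-independent sequence amenable to a union bound.
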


As an extension, we also show that
\begin{theorem}
    For every constant $\eps,\delta>0$, every randomized sketching algorithm $A$ for $\ell_2$ norm estimation using a $O(n^{1-\delta})$-dimensional sketch, there is a vector $x$ such that the output entropy of $A(x)$ is at least $1-\eps$.  Furthermore, there is a randomized algorithm using a $O(\poly\log n)$-dimensional sketch with output entropy at most $1+\eps$ on all input vectors.
\end{theorem}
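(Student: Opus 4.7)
The plan is to deduce the lower bound from the preceding pseudo-deterministic sketching lower bound via a constant-factor boosting argument, and to prove the upper bound by combining a standard $\ell_2$ sketch with coarse rounding to powers of $2$.

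For the lower bound, I would suppose for contradiction that every $x\in\R^n$ satisfies $H(A(x)) \leq 1-\eps$. Binary entropy is concave and symmetric about $\tfrac12$, so any distribution with entropy at most $1-\eps$ has a mode of probability at least $1-q^*$ for some constant $q^* = q^*(\eps) < \tfrac12$. Since $A$ is a valid constant-factor $\ell_2$ estimator (success probability $\geq \tfrac23$), a union bound shows the event ``output equals the mode \emph{and} is a good approximation'' has probability at least $\tfrac23 - q^* > 0$, and hence the mode $v_x$ is itself a good approximation of $\|x\|_2$. Now I would stack $k = k(\eps)$ independent copies of $\bS \sim \calD$ into a single $(kd)\times n$ sketching matrix $\bS'$, apply $f$ blockwise to $\bS' x$ to obtain $k$ independent samples of $A(x)$, and output their plurality. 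By a Chernoff bound, for sufficiently large constant $k$ the plurality equals $v_x$ with probability $\geq \tfrac23$. The resulting algorithm is a sketch of dimension $O(kn^{1-\delta}) = O(n^{1-\delta})$ that is both pseudo-deterministic and constant-factor accurate, contradicting the preceding $\Omega(n)$ lower bound.

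For the upper bound, I would run a standard AMS-style $\ell_2$ sketch of dimension $\poly\log n$ producing a $(1 \pm \tfrac{1}{10})$-multiplicative estimate $\tilde v$ with failure probability at most $n^{-10}$, and output $\lfloor\log_2\tilde v\rfloor$ (clipped to the $O(\log n)$-sized range of possible magnitudes, since stream coordinates are polynomially bounded). For any $x$ with $\|x\|_2 = v$, on the success event $\tilde v \in [0.9v, 1.1v]$; since $2\cdot 0.9 > 1.1$, this window contains at most one integer power of $2$, so the rounded output takes at most two values. The success-conditional entropy is therefore at most $\log 2 = 1$, and the tiny failure mass contributes at most $O((\log n)/n^{10}) = o(\eps)$ to the total entropy (via $H(p) \leq p\log(N/p) + O(p)$ with $N = O(\log n)$ atoms in the clipped range), yielding total entropy at most $1+\eps$ for all sufficiently large $n$.

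The main subtlety is confirming that the amplified object in the lower bound genuinely fits the sketching framework of the preceding theorem---namely ``linear sketch followed by a fixed function.'' This is fine: vertical concatenation of independent sketch matrices is again a linear map, and the plurality-vote postprocessing depends only on the sketch coordinates, not on $x$ directly. The only constant to track is the gap $\tfrac12 - q^*$ controlling the Chernoff boost, which stays bounded away from $0$ for any fixed $\eps > 0$ and keeps $k$ a constant depending only on $\eps$.
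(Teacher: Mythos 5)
Your proof is correct and takes essentially the same route as the paper: the lower bound is exactly their Corollary~\ref{cor:tight-ell_2-lower} (amplify a concentrated sketch to a pseudo-deterministic one by repetition and plurality, then invoke Theorem~\ref{thm:l2-main}), and the upper bound mirrors Theorem~\ref{thm:low-conc-alg-l2} (run a standard $\ell_2$ sketch and then coarsely round the output so that only two candidate values remain with high probability). Two small points worth tightening. First, your justification that entropy $\le 1-\eps$ forces the mode above $1/2$ appeals to concavity/symmetry of binary entropy, but for a general (non-binary) distribution this only rules out the middle interval and leaves the small-mode branch $p_{\max}\le q^*$ open; the clean way to close it is the min-entropy inequality $H(A(x))\ge -\log p_{\max}$, which gives $p_{\max}\ge 2^{\eps-1}>1/2$ directly. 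Second, the paper's proof of the $\Omega(n)$ bound for pseudo-deterministic sketching (Theorem~\ref{thm:l2-main}) uses a pseudo-deterministic algorithm with failure probability $1/\poly(n)$ (it union-bounds over $\poly(n)$ adaptively chosen query vectors), so a constant number $k$ of independent copies giving a $2/3$ guarantee is not quite enough to plug in directly; you should use $\Theta(\log n)$ copies (as the paper does), which still gives dimension $O(n^{1-\delta}\log n)=o(n)$ and the same contradiction.
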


\paragraph{Techniques} The first insight in our lower bound is that if there is a pseudo-deterministic streaming algorithm $A$ for $\ell_2$ norm estimation in $k$ space, then that means there is a fixed function $g$ such that $g(x)$ approximates $\|x\|_2$ and $A$ is a randomized algorithm to compute $g(x)$ with high probability.  The next step uses a result in the work of \cite{hardt2013robust} to illustrate a (randomized) sequence of vectors $\bx^{(1)},\dots,\bx^{(t)}$ only depending on $g$ such that any linear sketching-based algorithm that uses sublinear dimensional sketches outputs an incorrect approximation to the $\ell_2$ norm of some vector in that sequence with constant probability, thereby implying a dimension lower bound.

\subsubsection{Upper Bounds}

On the one hand, all the problems considered so far were such that
\begin{enumerate}
    \item There were ``low-space'' randomized algorithms.
    \item The pseudo-deterministic and deterministic space complexity were ``high'' and equal up to logarithmic factors.
\end{enumerate}

This raises the question if there are natural problems where pseudo-deterministic algorithms outperform deterministic algorithms (by more than logarithmic factors). We answer this question in the affirmative.

We illustrate several natural problems where the pseudo-deterministic space complexity is strictly smaller than the deterministic space complexity.

The first problem is that of finding a nonzero row in a matrix given as input in a turnstile stream.  Our result for this problem has the bonus of giving a natural problem where the pseudo-deterministic streaming space complexity is strictly sandwiched between the deterministic and randomized streaming space complexity.

In the problem \NonzeroRow, the input is an $n \times d$ matrix $A$ streamed in the turnstile model, and the goal is to output an $i$ such that the $i^{th}$ row of the matrix $A$ is nonzero.
\begin{theorem}\label{nonzerorowintro}
    The randomized space complexity for \NonzeroRow~is $\wt{\Theta}(1)$, the pseudo-deterministic space complexity for \NonzeroRow~is $\wt{\Theta}(n)$, and the deterministic space complexity for \NonzeroRow~is $\wt{\Theta}(nd)$.
\end{theorem}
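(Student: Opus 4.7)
The plan is to verify each of the three claimed complexities by matching upper and lower bounds; upper bounds first. For the randomized $\wt{O}(1)$ bound, I would reduce to \FindSupElem: draw $r \in \{-1,+1\}^d$ from a short PRG seed and implicitly maintain $v = A r \in \R^n$ by forwarding each update on $A_{ij}$ as an update of magnitude $r_j$ on coordinate $v_i$. For each fixed nonzero row $a_i$, $\Pr_r[\langle a_i, r\rangle=0]$ is polynomially small, so by a union bound the nonzero coordinates of $v$ agree with the nonzero rows of $A$ with high probability; running a turnstile $\ell_0$-sampler on $v$ then returns such a coordinate. The deterministic $\wt{O}(nd)$ upper bound is trivial (store $A$).

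For the pseudo-deterministic $\wt{O}(n)$ upper bound, maintain a single hash $h_i = \sum_j r_{ij} A_{ij}$ per row, with all $r_{ij}$ generated from a fixed PRG seed chosen at the start. Each $h_i$ uses $O(\log n)$ bits, so the total space is $\wt{O}(n)$. With probability $1-1/n$ over the seed, simultaneously for every $i$ one has $h_i = 0$ iff row $i$ of $A$ is zero; outputting the smallest index $i$ with $h_i \neq 0$ therefore coincides, with high probability, with the smallest nonzero row of $A$---a fixed function of $A$---so the algorithm is pseudo-deterministic. The pseudo-deterministic $\wt{\Omega}(n)$ lower bound is inherited from \pref{psdlb}: \NonzeroRow with $d=1$ is literally \FindSupElem.

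For the deterministic $\wt{\Omega}(nd)$ lower bound, I plan a fooling-set / state-counting argument (equivalently, a reduction from EQUALITY on $nd$ bits) using the natural convention that the algorithm outputs a distinguishable symbol $\bot$ on the zero matrix. For distinct $A, A' \in \{0,1\}^{n\times d}$ the post-stream states must differ: otherwise, following either stream with the deterministic update sequence that subtracts $A$ drives the two states in lockstep, forcing the same final output on two streams whose final matrices are $0$ (for which $\bot$ is required) and $A'-A\neq 0$ (for which a row index is required), a contradiction. Hence the state space has at least $2^{nd}$ elements and the space usage is at least $nd$.

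The only real subtlety I anticipate is that the deterministic lower bound depends on the $\bot$-on-zero convention; this is a natural problem-definition assumption, is met by the trivial deterministic algorithm, and is the standard way separation-style bounds are formulated for turnstile problems, so I do not expect it to pose any genuine obstacle. Combining all six bounds then yields the claimed $\wt{\Theta}(1)/\wt{\Theta}(n)/\wt{\Theta}(nd)$ trichotomy.
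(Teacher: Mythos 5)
Your overall structure matches the paper's proof: a random-sketch reduction to finding a support element for the two upper bounds, the $d=1$ specialization to \FindSupElem~for the pseudo-deterministic lower bound, and an \Equality/fooling-set argument for the deterministic lower bound. However, there is a genuine flaw in both upper bounds: with $r\in\{-1,+1\}^d$, the probability $\Pr_r[\langle a_i,r\rangle=0]$ is \emph{not} polynomially small for arbitrary nonzero integer rows $a_i$. For instance, if $a_i=(1,-1,0,\dots,0)$, then $\langle a_i,r\rangle = r_1 - r_2$, which vanishes with probability exactly $1/2$, and the same issue afflicts $h_i=\sum_j r_{ij}A_{ij}$. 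A union bound over $n$ rows then gives nothing. This is precisely why the paper samples $\bx$ with independent integer entries from $[-n^3,n^3]$: for a nonzero row $A_i$ with nonzero coordinate $j$, conditioning on the other coordinates of $\bx$ there is at most one value of $\bx_j$ making $\langle A_i,\bx\rangle=0$, so the failure probability is at most $1/(2n^3+1)$, and the union bound goes through. The fix is straightforward---replace $\{\pm1\}$ with integers from a polynomially large range---but as written the step is false.

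Two smaller remarks. First, the derandomization ``from a short PRG seed'' is glossed over; the paper has to argue explicitly that $\langle q,\cdot\rangle$ for a fixed $q$ is computable by a $\poly(n,d)$-state finite automaton so that Nisan's generator fools it, and some argument of that form is needed to legitimately claim $\wt{O}(1)$ space for the randomized algorithm. (For the pseudo-deterministic algorithm, the paper sidesteps a PRG altogether by reusing a single $d$-dimensional $\bx$ across all rows and storing it explicitly in $O(d\log n)\le O(n\log n)$ bits, which is cleaner than your per-row $r_{ij}$.) Second, your fooling set over $\{0,1\}^{n\times d}$ gives $\Omega(nd)$ rather than the paper's $\Omega(nd\log n)$ via $\log n$-bit entries, but this is absorbed by the $\wt{\Theta}(\cdot)$ notation and is not an error.
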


The idea behind the proof of Theorem \ref{nonzerorowintro} is to sample a random vector $x$, and then deterministically find a nonzero entry of $Ax$. With high probability, if a row of $A$ is nonzero, then the corresponding entry of $Ax$ will be nonzero as well.

\paragraph{Discussion:} Roughly speaking, in this problem there is a certain structure that allows us to use randomness to ``hash'' pieces of the input together, and then apply a deterministic algorithm on the hashed pieces. The other upper bounds we show for pseudo-deterministic algorithms also have a structure which allows us to hash, and then use a deterministic algorithm. It is interesting to ask if there are natural problems which have faster pseudo-deterministic algorithms than the best deterministic algorithms, but for which the pseudo-deterministic algorithms follow a different structure.

The next problems we show upper bounds for are estimating frequencies in a length-$m$ stream of elements from a large universe $[n]$ up to error $\eps m$, and that of estimating the inner product of two vectors $x$ and $y$ in an insertion-only stream of length-$m$ up to error $\eps\cdot\|x\|_1\cdot\|y\|_1$.  We show a separation between the deterministic and (weak) pseudo-deterministic space complexity in the regime where $m\ll n$.
\begin{theorem}
    There is a pseudo-deterministic algorithm for point query estimation and inner product estimation that uses $O\left(\frac{\log m}{\eps}+\log n\right)$ bits of space.  On the other hand, any deterministic algorithm needs $\Omega\left(\frac{\log n}{\eps}\right)$ bits of space.
\end{theorem}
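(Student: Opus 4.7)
The plan is to reduce point-query estimation to the classical deterministic Misra--Gries sketch, using randomness only to hash the universe down to a smaller range. Sample a $2$-wise independent hash function $h:[n]\to[N]$ with $N=\Theta(m^2/\delta)$. By a birthday argument, $h$ is injective on the at-most-$m$ distinct stream elements (together with the query point) with probability at least $1-\delta$. Run the usual Misra--Gries sketch with $O(1/\eps)$ counters over the hashed stream, treating each update to coordinate $i$ as an update to coordinate $h(i)$; on query $i\in[n]$ return the counter value for $h(i)$, or $0$ if none exists. The space is $O(\log n)$ bits for the seed of $h$, plus $O(1/\eps)$ counters whose keys lie in $[N]$ and whose counts lie in $[m]$, each taking $O(\log m)$ bits, for a total of $O(\log m/\eps+\log n)$. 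When $h$ is injective on the relevant elements, the hashed stream is an order-preserving relabeling of the original, so the Misra--Gries output depends only on the original stream and not on the particular choice of $h$; this is precisely pseudo-determinism with confidence $1-\delta$, and the $\eps m$-additive guarantee is inherited from Misra--Gries. For inner-product estimation, use the same hashed sketch on each of the two streams (under a common $h$) and return $\sum_c \tilde{x}[c]\cdot\tilde{y}[c]$ summed over counter cells $c$; the analysis is identical because conditioned on injectivity the output is again a deterministic function of the two original streams.

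\textbf{Lower bound.} For the $\Omega((\log n)/\eps)$ bound I use a standard counting (encoding) argument. Let $k=\lfloor 1/(3\eps)\rfloor$ and consider the family of streams indexed by subsets $S\subseteq[n]$ of size $k$, where each element of $S$ appears $m/k=\Theta(\eps m)$ times. For any two distinct $S,S'$ and any $i\in S\triangle S'$, the true frequency of $i$ is $m/k\ge 3\eps m$ in one stream and $0$ in the other, so a deterministic algorithm with additive error at most $\eps m$ cannot end in the same memory state on both inputs (otherwise its answer at $i$ would be off by more than $\eps m$ on one of them). The sketch therefore reaches at least $\binom{n}{k}$ distinct states, yielding a space lower bound of $\log\binom{n}{k}=\Omega(k\log(n/k))=\Omega((\log n)/\eps)$ in the regime $\eps\ge n^{-1+\Omega(1)}$.

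\textbf{Main obstacle.} The conceptually interesting step is verifying pseudo-determinism of the hashed sketch: the key observation is that the output of Misra--Gries is a function purely of the sequence of distinct identities and their multiplicities (together with their order of arrival), so any injective relabeling of the identities produces exactly the same output. Combined with the birthday-bound collision probability $O(m^2/N)$ and the usual Misra--Gries error analysis, this gives the upper bound; the matching deterministic lower bound is then a short counting argument.
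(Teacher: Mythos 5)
Your upper bound follows the paper's approach exactly: hash the universe down to a polynomial-in-$m$ range with a pairwise-independent $h$ (the paper uses $[m^3]$, you use $\Theta(m^2/\delta)$; same idea), run Misra--Gries on the hashed stream, and invoke permutation-invariance of Misra--Gries to conclude that conditioned on $h$ being injective on the stream's support the output is a fixed function of the original stream. Your lower bound is the same $\binom{n}{1/(3\eps)}$ encoding argument the paper uses, just phrased as a state-counting bound rather than a reduction to \Equality. The one place where you diverge slightly from the paper is the inner-product estimator: the paper post-processes by enumerating all of $[n]$, querying the two point-query instances, and keeping running top-$1/\eps$ lists keyed by hashed identities, then explicitly appeals to \pref{lem:inner-prod-from-point} (a lemma extracted from \cite{nelson2014deterministic}) to bound the error of $\langle x'',y''\rangle$. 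You instead read the two Misra--Gries counter tables directly and sum $\tilde{x}[c]\tilde{y}[c]$ over common cells, which is equivalent (the nonzero entries of the point-query estimate $x'$ are precisely the surviving Misra--Gries counters, and there are at most $1/\eps$ of them, so $x'' = x'$) and is a clean simplification. However, you wave at the error bound with ``the analysis is identical''; strictly speaking, pseudo-determinism of the inner-product output follows from injectivity, but the $\eps\|x\|_1\|y\|_1$ accuracy does not follow from the point-query guarantee alone — it needs the decomposition $\langle x',y'\rangle - \langle x,y\rangle = \langle x'-x,y'\rangle + \langle x,y'-y\rangle$ together with $\|y'\|_1 \le \|y\|_1$ (or the cited lemma), which you should spell out. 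Also be careful that for inner product you need $h$ to be injective on the \emph{union} of supports of both streams, not just one; since the total update count is $m$ this is still at most $m$ distinct identities, so the same birthday bound applies, but it is worth stating.
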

% \onote{todo}

\begin{comment}
For the problem of approximate counting, we show:

\begin{theorem}
Consider a stream of $1$s of length $k$ where $k$ is promised to be between $1$ and $N$. There is no pseudo-deterministic algorithm with memory $o(\sqrt{\log N})$ which can output an approximation (up to a constant factor) of $\log k$.
\end{theorem}

This is to contrast with the algorithm of Morris CITE, which is randomized and uses $O(\log \log N)$ space.

This problem can be made 2-pseudo-deterministic (that is, although we cannot make the algorithms output be the same with high probability, we can make the output always be one of two choices with high probability):

\begin{theorem}
Consider a stream of $1$s of length $k$ where $k$ is promised to be between $1$ and $N$. There is a 2-pseudo-deterministic algorithm using $O(\log \log N)$ bits of memory which can output an approximation (up to a constant factor) of $\log k$.
\end{theorem}
\end{comment}

\begin{comment}
Next, we show that there are problems which can be solved pseudo-deterministically, but not deterministically:
\begin{theorem}[Pseudodeterministic-Deterministic Separation]
There exists a streaming problem with a pseudo-deterministic algorithm using logarithmic space, but which requires polynomial space to solve deterministically.
\end{theorem}
\end{comment}

\subsection{Related work}

Pseudo-deterministic algorithms were introduced by Gat and Goldwasser \cite{GG}. Such algorithms have since been studied in the context of standard (sequential algorithms) \cite{roots, OS}, average case algorithms \cite{dhiraj}, parallel algorithms \cite{matching}, decision tree algorithms \cite{GGR, k-pseudodeterminism}, interactive proofs \cite{proofs}, learning algorithms \cite{OS2}, approximation algorithms \cite{OS2, dixon}, and low space algorithms \cite{reproducibility}. In this work, we initiate the study of pseudo-determinism in the context of streaming algorithms (and in the context of one-way communication complexity).

The problem of finding duplicates in a stream of integers between 1 and $n$ was first considered by \cite{gopalan2009finding}, where an $O(\log^3 n)$ bits of space algorithm is given, later improved by \cite{jowhari2011tight} to $O(\log^2 n)$ bits. We show that in contrast to these low space randomized algorithms, a pseudo-deterministic algorithm needs significantly more space in the regime where the length of the stream is, say, $3n/2$. \cite{duplicates} shows optimal lower bounds for randomized algorithms solving the problem.

The method of $\ell_p$-sampling to sample an index of a turnstile vector with probability proportional to its $\ell_p$ mass, whose study was initiated in \cite{monemizadeh20101}, is one way of outputting an element from the support of a turnstile stream. A line of work \cite{frahling2008sampling,monemizadeh20101,jowhari2011tight,andoni2010streaming}, ultimately leading to an optimal algorithm in \cite{jayaram2018perfect} and tight lower bounds in \cite{duplicates}, characterizes the space complexity of randomized algorithms to output an element from the support of a turnstile vector as $\Theta(\poly\log n)$, in contrast with the space lower bounds we show for algorithms constrained to a low entropy output.

\begin{comment}
The paper david mentioned about adaptively querying some sketch: \cite{hardt2013robust}

STREAMING
\end{comment}

\subsection{Open Problems}

\paragraph{Morris Counters:} In \cite{morrisapxcounting}, Morris showed that one can approximate (up to a multiplicative error) the number of elements in a stream with up to $n$ elements using $O(\log \log n)$ bits of space. Does there exist an $O(\log \log n)$ bits of space pseudo-deterministic algorithm for the problem?

\paragraph{$\ell_2$-norm estimation:}  In this work, we show that there are no low-dimensional pseudo-deterministic sketching algorithms for estimating the $\ell_2$-norm of a vector.  However, we do not show a turnstile streaming lower bound for pseudo-deterministic algorithms, which motivates the following question.  Does there exist a $O(\poly\log n)$ space pseudo-deterministic algorithm for $\ell_2$-norm estimation?

\paragraph{Multi-pass streaming lower bounds:}  All the streaming lower bounds we prove are in the single pass model, i.e., where the algorithm receives the stream exactly once.  How do these lower bounds extend to the multi-pass model, where the algorithm receives the stream multiple times? All of the pseudo-deterministic streaming lower bounds in this paper do not even extend to 2-pass streaming algorithms.

\subsection{Table of complexities}
In the below table, we outline the known space complexity of various problems considered in our work.
% Please add the following required packages to your document preamble:
% \usepackage{multirow}
% Please add the following required packages to your document preamble:
% \usepackage{multirow}
\begin{table}[h]
\begin{tabular}{|l|l|l|l|}
\hline
\textbf{Problem}                   & \textbf{Randomized}               & \textbf{Deterministic}                   & \textbf{Pseudo-deterministic}                    \\ \hline
Morris Counters                    & $\Theta(\log\log n)$              & $\Theta(\log n)$                         & $O(\log n)$, $\Omega(\log \log n)$               \\ \hline
Find-Duplicate                     & $\Theta(\log n)$                  & $\Theta(n)$                              & $\widetilde{\Theta}(n)$                          \\ \hline
$\ell_2$-approximation (streaming) & \multirow{2}{*}{$\Theta(\log n)$} & \multirow{2}{*}{$\widetilde{\Theta}(n)$} & $\widetilde{\Theta}(n)$                          \\ \cline{1-1} \cline{4-4} 
$\ell_2$-approximation (sketching) &                                   &                                          & $\widetilde{O}(n)$, $\widetilde{\Omega}(\log n)$ \\ \hline
Find-Nonzero-Row                   & $\widetilde{\Theta}(1)$           & $\widetilde{\Theta}(nd)$                 & $\widetilde{\Theta}(n)$                          \\ \hline
\end{tabular}
\caption{\label{tab:space-tab}Table of space complexities.}
\end{table}

\section{Preliminaries}

A randomized algorithm is called \textit{pseudo-deterministic} if for every valid input $x$, when running the algorithm twice on $x$, the same output is obtained with probability at least $2/3$. Equivalently (up to amplification of error probabilities), one can think of an algorithm as pseudo-deterministic if for every input $x$, there is a unique value $f(x)$ such that with probability at least 2/3 the algorithm outputs  $f(x)$ on input $x$

\begin{definition}[Pseudo-deterministic]
A (randomized) algorithm $A$ is called \textit{pseudo-deterministic} if for all valid inputs $x$, the algorithm $A$ satisfies
\[\Pr_{r_1, r_2} [A(x, r_1) = A(x, r_2)] \ge 2/3.\]
\end{definition}

An extension of pseudo-determinism is that of $k$-entropy randomized algorithms \cite{reproducibility}. Such algorithms have the guarantee that for every input $x$, the distribution $A(x, r)$ (over a random choice of randomness $r$) has low entropy, in particular bounded by $k$.

Another extension of pseudo-determinism is that of $m$-pseudo-deterministic algorithms, from \cite{k-pseudodeterminism}. Intuitively speaking, any algorithm is $k$-pseudo-deterministic if for every valid input, with high probability the algorithm outputs one of $k$ options (so, a 1-pseudo-deterministic algorithm is the same as a standard pseudo-deterministic algorithm, since it outputs the one unique option with high probability):

\begin{definition}[$k$-pseudo-deterministic]
We say that an algorithm $A$ is \textit{$k$-pseudo-deterministic} if for all valid inputs $x$, there is a set $S(x)$ of size at most $k$, such that $\Pr_r [A(x,r) \in S(x)] \ge \frac{k+1}{k+2}$ 
\end{definition}

For the purposes of this work, we define a simple notion that we call a \emph{$k$-concentrated algorithm}.
\begin{definition}
    We say that an algorithm $A$ is \textit{$k$-concentrated} if for all valid inputs $x$, there is some output $F(x)$ such that $\Pr_r[A(x,r) = F(x)] \ge \frac{1}{k}$.
\end{definition}

The reason for making this definition is that any $\log k$-entropy randomized algorithm, and any $(k+2)$-pseudo-deterministic algorithm is $k$-concentrated. Thus, showing an impossibility result for $k$-concentrated algorithms also shows an impossibility result for $\log k$-entropy and $(k+2)$-pseudo-deterministic algorithms. Indeed, in this work, we use space lower bounds against $k$-concentrated algorithms to simultaneously conclude space lower bounds against low entropy and multi-pseudo-deterministic algorithms.

\begin{definition}
    A turnstile streaming algorithm is one where there is a vector $v$, and the input is a stream of updates of the form ``increase the $i^{\text{th}}$ coordinate of $v$ by $r$'' or ``decrease the $i^{\text{th}}$ coordinate of $v$ by $r'$''. The goal is to compute something about the final vector, after all of the updates.
\end{definition}

%%For turnstile algorithms, there are two types of notions for pseudo-determinism one can consider, depending on whether one considers two different streams corresponding to the same vector as the same input or not. We define the two notions below. Intuitively speaking, an algorithm is \textit{strongly pseudo-deterministic} if for any two input streams $x_1$ and $x_2$. which correspond to the same vector $v$, the algorithm will with high probability output the same output when executed on $x_1$ and $x_2$ (using independently sampled randomness for each of the two executions): 

%\begin{definition}[strongly pseudo-deterministic]
%We say that an algorithm $A$ for a turnstile problem is \textit{strongly-$k$-pseudo-deterministic} if for all vectors $v$, and for any two streams $x_1$ and $x_2$ corresponding to vector $v$, the algorithm $A$ satisfies
%\[\Pr_{r_1, r_2} [A(x_1, r_1) = A(x_2, r_2)] \ge 2/3.\]
%\end{definition}

%A strongly pseudo-deterministic algorithm may be needed if a streaming problem is performed by two servers which are receiving updates in different orders. 

%A weakly pseudo-deterministic algorithm is an algorithm which, when applied twice to the same input stream, will with high probability result in the same output on both executions:

%\begin{definition}[weakly pseudo-deterministic]
%We say that an algorithm $A$ for a turnstile problem is \textit{weakly-$k$-pseudo-deterministic} if for every input stream $x$,  algorithm $A$ satisfies
%\[\Pr_{r_1, r_2} [A(x, r_1) = A(x, r_2)] \ge 2/3.\]
%\end{definition}

We use a pseudorandom generator for space-bounded computation due to Nisan \cite{nisan1992pseudorandom}, which we recap below.
\begin{theorem}
\label{thm:nisan-prg}
    There is a function $G:\{0,1\}^{s\log r}\rightarrow\{0,1\}^r$ such that
    \begin{enumerate}
        \item 
    Any bit of $G(x)$ for any input $x$ can be computed in $O(s\log r)$ space.
     \item For all functions $f$ from $\{0,1\}^r$ to some set $A$ such that $f$ is computable by a finite state machine on $2^s$ states, the total variation distance between the random variables $f(\bx)$ and $f(G(\by))$ where $\bx$ is uniformly drawn from $\{0,1\}^r$ and $\by$ is uniformly drawn from $\{0,1\}^{s\log r}$ is at most $2^{-s}$.
      \end{enumerate}
\end{theorem}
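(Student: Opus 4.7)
The plan is to reconstruct Nisan's original PRG. First I would take the seed to consist of an initial block $x_0 \in \{0,1\}^s$ together with $k = \log(r/s)$ hash functions $h_1, \ldots, h_k$ drawn independently from a 2-universal family $\mathcal{H}$ of maps $\{0,1\}^s \to \{0,1\}^s$, each of description length $O(s)$. Then I would define $G_0(x_0) = x_0$ and recursively $G_i(x_0, h_1, \ldots, h_i) = G_{i-1}(x_0, h_1, \ldots, h_{i-1}) \,\|\, G_{i-1}(h_i(x_0), h_1, \ldots, h_{i-1})$, so that $G_k$ outputs $2^k s = r$ bits, with the $b$-th block (for $b \in \{0,1\}^k$) equal to the composition $h_k^{b_k} \circ \cdots \circ h_1^{b_1}(x_0)$, where $h_i^1 = h_i$ and $h_i^0$ is the identity. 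The total seed length is $O(s \log r)$, which matches the statement after absorbing the constant into $s$. For item~1, I would note that to produce a requested output bit one only descends to the relevant leaf of the depth-$k$ recursion tree and recomputes hash compositions along a single path; each intermediate value is $s$ bits and the depth is $k$, so a depth-first traversal uses $O(sk) = O(s \log r)$ space.

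For item~2, I would set up a hybrid argument across the $k$ recursion levels. Let $H_i$ denote the distribution obtained by concatenating $2^{k-i}$ independent copies of $G_i$, each using fresh seed material; then $H_k = G(\by)$ and $H_0$ is truly uniform over $\{0,1\}^r$. It suffices to establish a single-step indistinguishability claim: for every $2^s$-state finite state machine $M$, and every pair of states $p,q$, the probability that $M$ started at $p$ ends at $q$ after reading $G_{i-1}(x_0, h_{1:i-1}) \,\|\, G_{i-1}(h_i(x_0), h_{1:i-1})$ differs by at most $2^{-\Omega(s)}$ from the same probability when the two halves are fed two independent fresh copies of $G_{i-1}$. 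Summing the error across the $k$ levels and the $2^{k-i}$ blocks at level $i$ yields a total deviation of at most $k \cdot 2^{-\Omega(s)}$, which falls below $2^{-s}$ once $s$ absorbs a sufficiently large constant.

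The hard part will be the single-step lemma, which is where 2-universality enters and where the $2^s$-state bound matters quantitatively. I would prove it by summing over the intermediate state $q$ reached by $M$ after processing the first half, and analyzing the joint distribution of $q$ together with the seed $h_i(x_0)$ fed into the second half. When the preimage in $\{0,1\}^s$ of $x_0$'s that drive $M$ from $p$ to $q$ in the first half is large, a Cauchy--Schwarz / collision-probability bound using the 2-universality of $h_i$ will show that $h_i(x_0)$ conditioned on that event is nearly uniform on $\{0,1\}^s$, so feeding it into the second half is nearly indistinguishable from fresh randomness; when the preimage is small, that intermediate state $q$ contributes a negligible amount to the overall total variation. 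Combining these two regimes carefully over all $2^s$ intermediate states yields the desired $2^{-\Omega(s)}$ bound and constitutes the technical core of Nisan's analysis.
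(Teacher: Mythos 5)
The paper does not prove \pref{thm:nisan-prg}; it states the result and cites \cite{nisan1992pseudorandom} as a black box, so there is no in-paper argument to compare against. Your reconstruction of Nisan's generator is structurally right: the block-plus-hash seed, the recursive doubling, the $O(s\log r)$ space bound for computing a single output bit via a depth-first traversal, and a hybrid argument over the $k$ recursion levels all match the standard presentation. That part reads fine modulo constants.

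The place where your sketch actually departs from (and would not replace) Nisan's argument is the single-step lemma, and the gap is substantive. You propose a per-state Leftover-Hash-Lemma dichotomy: for each intermediate state $q$, if the preimage $A_q = \{x_0 : M \text{ reaches } q \text{ on the first half}\}$ is large, extract; if it is small, declare its contribution negligible. But the generic case is precisely in between. With $2^s$ states and block length $\ell = \Theta(s)$, typical preimages have size about $2^{\ell}/2^s = 2^{\ell - s}$. That is far too small for the LHL to produce $\ell$ near-uniform bits (you would need min-entropy at least $\ell + \Omega(s)$, but you only have $\ell - s$), yet the union of all such classes carries essentially all of the probability mass, so you cannot discard them as ``negligible.'' In other words, there is no threshold you can set that makes both branches of your dichotomy work simultaneously. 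Nisan's actual proof does not try to make the second-half seed conditionally close to uniform given $q$. Instead it proves a mixing lemma that controls the \emph{aggregate} quantity $\sum_q |\{x\in A_q : h(x)\in B_q\}|$ (where $B_q$ is the set of second-half seeds driving $q$ to the target) via a second-moment/Chebyshev bound exploiting pairwise independence, and then union-bounds over the $2^{2s}$ start/end state pairs. The cancellation across intermediate states is what makes the error small; a per-$q$ LHL statement is simply too strong to be true here.

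Two smaller points. First, the phrase ``each using fresh seed material'' in the definition of $H_i$ should be made precise: in the true construction all blocks at a level share the same hash functions $h_1,\dots,h_i$, and Nisan's argument handles this by showing that a single random tuple of hashes is simultaneously good for the whole FSM with high probability, rather than by replacing blocks one at a time with independent copies. Second, your claimed total error $k\cdot 2^{-\Omega(s)}$ silently absorbs the $2^{k-i}$ blocks at each level; unfolding the recursion $\eps_i \le 2\eps_{i-1}+\delta$ gives roughly $(r/s)\delta$, so hitting $2^{-s}$ requires $s = \Omega(\log r)$ (or a correspondingly larger block length). That is consistent with how the theorem is applied in this paper, but it is worth stating rather than hiding inside the $\Omega$.
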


\section{\textsc{Find-Duplicate}: %low memory randomized algorithms, and 
Pseudo-deterministic lower bounds}

Consider the following problem: the input is a stream of $3n/2$ integers between $1$ and $n$. The goal is to output a number $k$ which appears at least twice in the stream. Call this problem \FindRep. Recall that this problem has been considered in the past literature, specifically in \cite{gopalan2009finding,jowhari2011tight, duplicates}, where upper and lower bounds for randomized algorithms have been shown.

Indeed, we know the following is true from \cite{gopalan2009finding,jowhari2011tight}.
\begin{theorem} \label{thm:rand-alg-find-dup}
\FindRep~has an algorithm which uses $O(\poly\log n)$ memory and succeeds with all but probability $\frac{1}{\poly(n)}$.
\end{theorem}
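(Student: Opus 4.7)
The plan is to realize the $O(\poly\log n)$-space algorithm promised in \cite{gopalan2009finding,jowhari2011tight} as a two-pronged scheme run in parallel on the stream, combining a heavy-hitter detector with a universe sub-sampler. The \emph{heavy-path} subroutine runs a Misra-Gries / CountSketch data structure at threshold roughly $n/\poly\log n$; any value with frequency exceeding this threshold is automatically a duplicate (since the threshold is $\ge 2$), and this costs only $O(\poly\log n)$ bits. The \emph{light-path} subroutine fixes a hash $h:[n]\to [T]$ with $T=n/\poly\log n$ sampled from a sufficiently-independent family (seed supplied, if necessary, by Nisan's PRG from Theorem~\ref{thm:nisan-prg}), and retains only those stream tokens $x_i$ with $h(x_i)=1$. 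Both the retained sub-stream and the retained sub-universe have expected size $\Theta(\poly\log n)$, so the sub-stream can be stored explicitly in $O(\poly\log n)$ bits and scanned for a duplicate at the end.

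The combinatorial driver of the light path is a pigeonhole argument: if the retained sub-stream has strictly more tokens than the retained sub-universe has values, some value must be repeated in the sub-stream, and outputting any such repetition solves \FindRep. Conditional on the heavy path producing no answer, every frequency satisfies $f_j\le n/\poly\log n$, and a short counting argument then shows that the set $D=\{j:f_j\ge 2\}$ of duplicate values already has size $\Omega(\poly\log n)$, since the $\ge n/2$ ``extra'' occurrences guaranteed by pigeonhole must be distributed among elements of bounded frequency. I would then run $O(\log n)$ independent layers at geometrically varying sampling rates $T\in\{1,2,4,\dots,n\}$ so that at least one layer lies in the regime where the retained sub-stream both fits in memory and contains a duplicate with constant probability, and amplify to $1-1/\poly(n)$ by running $O(\log n)$ parallel copies of the whole scheme and taking any duplicate produced.

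The hardest step will be the concentration analysis for the sub-stream length, which is a weighted sum of Bernoullis with weights equal to the frequencies $f_j$: a single heavy element can blow up the variance enough to drown out the $\Theta(n/T)$ pigeonhole surplus. This is precisely why the heavy-path subroutine is indispensable, since it either produces the answer outright or certifies the bound $\max_j f_j \le n/\poly\log n$ under which Bernstein's inequality yields the required concentration of the sub-stream length around its expectation. Once this concentration is in hand, a union bound over the $O(\log n)$ layers and $O(\log n)$ parallel copies yields the claimed $O(\poly\log n)$ memory and $1-1/\poly(n)$ success probability.
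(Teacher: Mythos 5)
The paper does not prove Theorem~\ref{thm:rand-alg-find-dup}; it cites it from \cite{gopalan2009finding,jowhari2011tight}, which use an $\ell_1$-sampling (equivalently, recursive universe-splitting) argument rather than a heavy-hitter/sub-sampler hybrid. So there is no ``paper's proof'' to compare against, but your proposal has a genuine gap, which I will focus on.

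The gap is a polynomially wide middle regime of frequencies that neither prong handles. A concrete bad instance: $\sqrt n$ values each appearing $\sqrt n / 2 + 1$ times, and $n-\sqrt n$ singletons; the stream length is exactly $3n/2$, $\max_j f_j = \Theta(\sqrt n)$, and $|D| = \sqrt n$. The heavy path with $k=\poly\log n$ Misra--Gries counters certifies only $\max_j f_j \le O(n/k)$, and since $\sqrt n \ll n/\poly\log n$, it stays silent on this instance. For the light path, storing the retained sub-stream in $\poly\log n$ bits forces $T\ge n/\poly\log n$. Your pigeonhole event ``more tokens than distinct values'' equals, term by term, the event that the sub-sample hits $D$: writing the surplus as $W=\sum_{j:f_j\ge 1}(f_j-1)\mathbf{1}[h(j)=1]$, every summand is nonnegative, so $W>0$ iff some $j\in D$ has $h(j)=1$. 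That happens with probability about $|D|/T=\sqrt n\cdot\poly\log n/n=\poly\log n/\sqrt n$, and running $\poly\log n$ rates or repetitions only multiplies this by $\poly\log n$. Equivalently, the claim that the certificate $\max_j f_j\le n/\poly\log n$ makes Bernstein work is incorrect: the variance of the sub-stream length is $\approx\sum_j f_j^2/T=\Theta(n^{3/2}/T)$ in this instance while the mean is $\Theta(n/T)$, so the standard deviation ($\Theta(n^{1/4}\sqrt{\poly\log n})$) dwarfs the pigeonhole surplus ($\Theta(\poly\log n)$). What Bernstein actually needs is $\max_j f_j \lesssim n/T=\poly\log n$, which is a polynomially stronger promise than the heavy path can deliver with $\poly\log n$ counters.

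The three requirements---store the sub-stream ($T\gtrsim n/\poly\log n$), hit $D$ with constant probability ($T\lesssim|D|$), and note that $|D|$ may be as small as $\Theta(n/\max_j f_j)$ which the heavy path only bounds below by $\poly\log n$---cannot be reconciled by choosing layers cleverly. To close the gap you would need either a genuine $O(\log n)$-level recursion in which each level handles a geometric frequency band (not just a single heavy/light split), or a reduction to an $\ell_1$/$\ell_0$-sampler applied to the turnstile vector with coordinates $f_j-1$, which is essentially what \cite{gopalan2009finding,jowhari2011tight} do.
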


We formally define a \emph{pseudo-deterministic streaming algorithm} and show a pseudo-deterministic lower bound for \FindRep~to contrast with the randomized algorithm from \pref{thm:rand-alg-find-dup}.

\begin{definition}[Pseudo-deterministic Streaming Algorithm]
A pseudo-deterministic streaming algorithm is a (randomized) streaming algorithm $A$ such that for all valid input streams $s = \langle x_1,\dots, x_m\rangle$, the algorithm $A$ satisfies $\Pr_{r_1, r_2}[(A(x, r_1) = A(x, r_2)] \ge 2/3$.
\end{definition}

One can also think of a pseudo-deterministic streaming algorithm as an algorithm $A$ such that for every valid input stream $s$, there exists some valid output $f(s)$ such that the algorithm $A$ outputs $f(s)$ with probability at least $2/3$ (one would have to amplify the success probability using repetition to see that this alternate notion is the same as the definition above).

\begin{definition}[\FindRep]
Define \FindRep~to be the streaming problem where the input is a stream of length $3n/2$ consisting of up to $n$, and the output must be an integer which has occured at least twice in the string.
\end{definition}

\begin{theorem} \label{thm:psd-lower-bound}
\FindRep~has no pseudo-deterministic algorithm with memory $o(n)$.
\end{theorem}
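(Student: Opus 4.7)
The plan is to reduce to a one-way communication problem \OWFindRep, where Alice holds $A\subseteq[n]$ with $|A|=3n/4$, Bob holds $B\subseteq[n]$ with $|B|=3n/4$, and the pair must output some element of $A\cap B$. The reduction is direct: given a pseudo-deterministic streaming algorithm for \FindRep~using space $s$, Alice streams the elements of $A$ (listed in canonical order), transmits the $s$-bit state to Bob, and Bob continues the stream with $B$. The concatenated stream has length $3n/2$ and its only duplicates are exactly the elements of $A\cap B$, so the streaming algorithm's output solves \OWFindRep. It suffices to prove a $\widetilde\Omega(n)$ communication lower bound against pseudo-deterministic one-way protocols for \OWFindRep.

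For that lower bound, I would use Alice's message to let Bob reconstruct a linear-size subset of $A$. Pseudo-determinism supplies a canonical function $g(A,B)\in A\cap B$ that the protocol outputs with probability $\ge 2/3$. By having Alice send $t=O(\log n)$ independent copies of her message (total communication $O(s\log n)$) and Bob take the majority vote of $t$ independent final computations, Bob can compute $g(A,B)$ correctly with probability $\ge 1-n^{-10}$ for any specific input $B$. Bob then runs the following ``peeling'' procedure: starting with $B_1=B$, for $i=1,\dots,n/20$, he computes $b_i=g(A,B_i)$ using Alice's amplified message and sets $B_{i+1}=(B_i\setminus\{b_i\})\cup\{c_i\}$, where each $c_i$ is a canonical fresh label drawn from an auxiliary universe (enlarging $[n]$ by $n/20$ extra labels reserved for replenishment) that is disjoint from $A$ by construction. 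A union bound over the $n/20$ adaptively chosen queries shows that with probability $\ge 0.99$, every $b_i$ is correct, yielding $n/20$ distinct elements of $A$.

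To close the argument with a counting bound, I would average over $A$ uniform in $\binom{[n]}{3n/4}$ and over the protocol's randomness, then fix Alice's randomness $r_A$ and Bob's randomness $r_B$ for which the peeling succeeds on at least a $0.99$ fraction of $A$'s. For this fixed randomness, the map $A\mapsto S(A)$ (the set of $n/20$ elements Bob recovers) is deterministic and factors through Alice's message $M(A)$, so the image contains at most $2^{|M|}$ distinct values. Each possible output $S^*\in\binom{[n]}{n/20}$ is a subset of $A$, so it has at most $\binom{n-n/20}{3n/4-n/20}=\binom{19n/20}{14n/20}$ preimages. Pigeonhole then gives
\[
2^{|M|}\;\ge\;0.99\cdot\frac{\binom{n}{3n/4}}{\binom{19n/20}{14n/20}}\;=\;2^{\Omega(n)},
\]
using $H(3/4)-(19/20)H(14/19)>0$, where $H$ is the binary entropy. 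Thus $|M|=\Omega(n)$; since $|M|=O(s\log n)$, we conclude $s=\widetilde\Omega(n)$, matching the table entry for \FindRep.

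The main obstacle is the adaptive nature of Bob's peeling: each $B_i$ depends on the outputs of earlier queries, which in turn depend on Alice's randomness. Pseudo-determinism is only an input-by-input guarantee, so to apply it to an adaptive sequence of $n/20$ inputs one needs each call to succeed with probability $1-1/\text{poly}(n)$ in order to union-bound over the adaptive tree of possible query sequences. This forces the $O(\log n)$ amplification factor and is the source of the $\widetilde\Omega(n)$ (rather than $\Omega(n)$) bound. A secondary technical point is maintaining $|B_i|=3n/4$ throughout the peeling, which is handled by padding with canonical labels outside $[n]$ that cannot appear in $A$, ensuring every successful output is a genuine previously-unknown element of $A$.
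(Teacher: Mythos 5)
Your overall strategy mirrors the paper's three-step decomposition: reduce the streaming problem to the one-way communication problem \OWFindRep{} (this is exactly the paper's Claim~3.4), run a ``peeling'' argument in which Bob repeatedly reuses Alice's message on a deterministic sequence of simulated inputs to recover a linear number of elements of $A$ (the paper's Claim~3.5, via the intermediate problem \OWPartRec), and then finish with a counting lower bound. The first two stages are essentially identical, with the amplification from success probability $2/3$ to $1 - 1/\poly(n)$ made explicit in your writeup (the paper leaves it implicit in the $O(1/n^2)$ hypothesis of its Claim~3.5). The third stage is where you diverge genuinely: the paper proves the \OWPartRec{} lower bound by probabilistically constructing a packing family $S_1,\dots,S_t$ of size $(4/3)^{\Omega(n)}$ with the property that $F(S_i)\not\subseteq S_j$ for $i\neq j$, then reducing to string recovery. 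You instead fix a good choice of randomness by averaging over $A$, observe that the map $A\mapsto S(A)$ factors through the $|M|$-bit message, and bound the number of preimages of any output $S^*$ by $\binom{19n/20}{14n/20}$; the entropy gap $H(3/4)-\tfrac{19}{20}H(14/19)>0$ then forces $|M|=\Omega(n)$. Both routes are valid and yield $\widetilde\Omega(n)$; yours is arguably more self-contained since it avoids the auxiliary probabilistic construction.

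One conceptual point is slightly misstated in your ``main obstacle'' discussion. You say one must ``union-bound over the adaptive tree of possible query sequences,'' and that this is what forces the amplification. In fact the whole point of invoking pseudo-determinism is that you never have to union-bound over an adaptive tree (which, being of size up to $n^{\Theta(n)}$, would require far more than $1/\poly(n)$ per-call error). Rather, pseudo-determinism gives you a single \emph{canonical} path $B_1,B_2,\dots$ that is a deterministic function of $A$ alone; you union-bound over those $n/20$ fixed events, each failing with probability $1/\poly(n)$ over the joint randomness of Alice and Bob, and then argue by induction that if all canonical calls succeed, the actually-traversed path coincides with the canonical one. Your math is fine (and $1-n^{-10}$ is more than enough), but the justification you give would not scale if you actually needed to union-bound over an exponentially-branching tree, so it is worth getting the reasoning straight: pseudo-determinism replaces the tree by a single path, and that is what makes $O(\log n)$ amplification sufficient. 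Finally, the padding with fresh labels outside $[n]$ to keep $|B_i|=3n/4$ is unnecessary under the paper's $|S_B|\ge 3n/4$ convention --- one can just start from $B_1=[n]$ and remove --- though your variant is harmless and arguably makes the streaming reduction cleaner when the stream length must be exactly $3n/2$.
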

\paragraph{Proof Overview:}
In order to prove \pref{thm:psd-lower-bound}, we introduce two communication complexity problems --- \OWFindRep~and \OWPartRec:

In the \OWFindRep~problem,  Alice has a list of $3n/4$ integers between $1$ and $n$, and so does Bob. Alice sends a message to Bob, after which Bob must output an integer which is in both Alice's and Bob's list. Formally:

\begin{definition}[\OWFindRep]
Define \OWFindRep~to be the one-way communication complexity problem where Alice has input $S_A \subseteq [n]$ and Bob has input $S_B\subseteq[n]$, where $|S_A| , |S_B| \ge 3n/4$. The goal is for Bob to output an element in $S_A \cap S_B$.
\end{definition}

The idea is that one can reduce \OWFindRep~to \FindRep. So, our new goal will be to show that \OWFindRep~requires high communication. To do so, we will show that it is possible to reduce a different problem, denoted \OWPartRec (defined below), to \OWFindRep. Informally, in the \OWPartRec~problem, Alice has a list of $3n/4$ integers between $1$ and $n$. Bob does not have an input. Alice sends a message to Bob, after which Bob must output $n/10$ distinct elements which are all in Alice's list. Formally:

\begin{definition}[\OWPartRec]
Define \OWPartRec~to be the one-way communication complexity problem where Alice has input $S_A \subseteq [n]$ and Bob has no input. The goal is for Bob to output a set $S$ satisfying $S \subseteq S_A$ and $|S| \ge n/10$.
\end{definition}

We will show in Claim \ref{claim:find-rep-communication} that a low memory pseudo-deterministic algorithm for \FindRep~implies a low-communication pseudo-deterministic algorithm for \OWFindRep, and in Claim \ref{claim:find-rep-to-part-rec} that a low-communication pseudo-deterministic algorithm for \OWFindRep~implies a low communication algorithm for \OWPartRec. Finally, in Claim \ref{claim:part-rec-lower-bound}, we show that \OWPartRec~cannot be solved with low communication. Combining the claims yields Theorem \ref{thm:psd-lower-bound}.%The interesting part is the reduction from \OWPartRec~to \OWFindRep.

\begin{proof}[Proof of Theorem \ref{thm:psd-lower-bound}]

\begin{claim}   \label{claim:find-rep-communication}
A pseudo-deterministic algorithm for \FindRep~with space $S$ and success probability $p$ implies a pseudo-deterministic communication protocol for \OWFindRep~with communication $S$ and success probability at least $p$.
\end{claim}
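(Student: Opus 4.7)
The plan is to exhibit a direct simulation reduction.  Given any pseudo-deterministic streaming algorithm $A$ for \FindRep{} that uses $S$ bits of space, I would have Alice and Bob agree on a canonical ordering of $[n]$ (for instance, the natural one) and use it to convert their sets into streams.  On input $(S_A, S_B)$ with $|S_A|=|S_B|=3n/4$, define the combined stream $s = \langle S_A\rangle \circ \langle S_B\rangle$ obtained by concatenating the elements of $S_A$ (in canonical order) followed by those of $S_B$.  This stream has length $3n/2$ and consists of integers in $[n]$, so it is a valid input to \FindRep{}.  Moreover, since $S_A$ and $S_B$ are \emph{sets} (no internal repeats), every duplicate appearing in $s$ must lie in $S_A\cap S_B$, and by inclusion-exclusion $|S_A\cap S_B|\ge 3n/4+3n/4-n=n/2>0$, so at least one duplicate always exists.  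Hence any valid output of $A$ on $s$ is automatically a valid output for \OWFindRep{} on $(S_A,S_B)$.

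The protocol is then the obvious one.  Alice, using private randomness $r_A$, simulates $A$ on the prefix $\langle S_A\rangle$ of $s$; this requires only $S$ bits of working memory because $A$ is an $S$-space streaming algorithm.  She sends the resulting memory state to Bob, which is at most $S$ bits.  Bob, using private randomness $r_B$, resumes the simulation of $A$ starting from the received state and feeds in the suffix $\langle S_B\rangle$; at the end he outputs whatever $A$ outputs.  The total communication is $S$ bits as required.

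For pseudo-determinism, I would observe that the joint randomness $r=(r_A,r_B)$ plays exactly the role of $A$'s internal randomness when $A$ is run on the stream $s$.  By the pseudo-deterministic guarantee of $A$, there is some canonical value $f(s)$ such that $\Pr_r[A(s;r)=f(s)]\ge p$; this probability carries over verbatim to the protocol, so Bob outputs the fixed value $f(s)=:g(S_A,S_B)$ with probability at least $p$.  This is exactly the pseudo-determinism guarantee required of the \OWFindRep{} protocol with success probability $p$.

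There is no real obstacle here; the only things to check carefully are (i) that the concatenation really produces a length-$3n/2$ stream over $[n]$ so that the \FindRep{} algorithm is being applied on a legal input, and (ii) that the canonical ordering of each set is fixed in advance so the induced stream $s$, and therefore the pseudo-deterministic ``canonical output'' $f(s)$, is a deterministic function of $(S_A,S_B)$ alone and not of any choices Alice or Bob make at runtime.  Both points are immediate from the setup above, so the claim follows.
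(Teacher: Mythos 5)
Your proof is correct and follows essentially the same approach as the paper: Alice streams her set in canonical order, sends the $S$-bit state to Bob, who resumes the simulation on his set, with pseudo-determinism carrying over because the concatenated stream is a deterministic function of $(S_A,S_B)$. You add a couple of useful sanity checks (nonemptiness of $S_A\cap S_B$ and that internal duplicates cannot occur within a single set) that the paper leaves implicit, but the reduction is the same.
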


\begin{proof}
To prove the above claim, we construct a protocol for \OWFindRep~from a streaming algorithm for \FindRep. Given an instance of \OWFindRep, Alice can stream her input set of integers in increasing order, and simulate the streaming algorithm for \FindRep. Then, she sends the current state of the algorithm (which is at most $S$ bits) to Bob, who continues the execution of the streaming algorithm. At the end, the streaming algorithm outputs a repetition with probability $p$, which means the element showed up in both Alice and Bob's lists. Note that for a given input to Alice and Bob, Bob outputs a unique element with high probability because the streaming algorithm is pseudo-deterministic.
\end{proof}

\begin{claim}   \label{claim:find-rep-to-part-rec}
A pseudo-deterministic one-way communication protocol for \OWFindRep~with $S$ communication and failure probability $O\left(\frac{1}{n^2}\right)$ implies a pseudo-deterministic communication protocol for \OWPartRec~with $S$ communication and $O\left(\frac{1}{n}\right)$ failure probability.
\end{claim}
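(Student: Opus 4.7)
The plan is to reduce \OWPartRec~to \OWFindRep~by having Alice send a single message and then having Bob repeatedly re-simulate the \OWFindRep~receiver on that message with adaptively chosen ``fake'' inputs to extract many elements of $S_A$. Concretely, on input $S_A$ Alice computes and sends the message $M$ she would send in the \OWFindRep~protocol. Bob initializes $S_B^{(1)} := [n]$ and, for $i = 1, \dots, n/10$, uses fresh randomness $r_B^{(i)}$ to simulate the \OWFindRep~receiver on $(M, S_B^{(i)})$ to produce some $b_i$, then sets $S_B^{(i+1)} := S_B^{(i)} \setminus \{b_i\}$. Bob finally outputs $\{b_1, \dots, b_{n/10}\}$. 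Each $S_B^{(i)}$ has size at least $n - n/10 > 3n/4$, so it is a valid \OWFindRep~input for Bob; the $b_i$ are distinct by construction, and whenever each simulation succeeds one has $b_i \in S_A \cap S_B^{(i)} \subseteq S_A$.

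For correctness, I would exploit pseudo-determinism of \OWFindRep~to define an ``ideal sequence'' depending only on $S_A$. Let $f$ denote the pseudo-deterministic map for \OWFindRep, so that on every valid input $(S'_A, S'_B)$ the protocol outputs $f(S'_A, S'_B)$ with probability at least $1 - O(1/n^2)$. Inductively define $b_i^* := f(S_A, T_i^*)$ with $T_1^* := [n]$ and $T_i^* := T_{i-1}^* \setminus \{b_{i-1}^*\}$ for $i \ge 2$; since $|T_i^*| \ge 3n/4$, the function $f$ is well-defined at each step. The sets $T_i^*$ and elements $b_i^*$ are deterministic functions of $S_A$ alone. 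For each fixed $i$, applying the \OWFindRep~guarantee to the input $(S_A, T_i^*)$ yields
\[
    \Pr_{r_A, r_B^{(i)}}\bigl[\,\text{simulation on } (S_A, T_i^*) \text{ outputs something other than } b_i^*\,\bigr] \le O(1/n^2).
\]
A union bound over the $n/10$ values of $i$ shows that with probability at least $1 - O(1/n)$, all $n/10$ simulations output their canonical values. By induction on $i$, on this good event the actual set $S_B^{(i)}$ used by Bob equals $T_i^*$ and the actual output $b_i$ equals $b_i^*$, so Bob outputs the deterministic set $\{b_1^*, \dots, b_{n/10}^*\} \subseteq S_A$. This simultaneously gives correctness (an $n/10$-subset of $S_A$) and pseudo-determinism (a fixed function of $S_A$).

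The main subtle point is that the sets $S_B^{(i)}$ Bob actually uses are chosen \emph{adaptively} based on Alice's random message, so a naive union bound would have to control \OWFindRep~on an a priori unbounded family of inputs. Pseudo-determinism is precisely what lets us avoid this: it collapses the adaptive sequence into the canonical $T_i^*$ (which depends only on $S_A$), so that any deviation of the actual execution from the canonical one must originate at a failure of \OWFindRep~on one of the $n/10$ fixed inputs $(S_A, T_i^*)$. Since Alice transmits only the single message $M$, the communication cost remains $S$ bits, as required.
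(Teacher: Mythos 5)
Your proposal is correct and follows essentially the same argument as the paper: Alice sends a single \OWFindRep~message, Bob iteratively re-simulates the receiver on adaptively shrinking fake inputs, and the crux is that pseudo-determinism defines a canonical sequence of inputs $T_i^*$ and outputs $b_i^*$ depending only on $S_A$, allowing a union bound over a fixed family of $n/10$ inputs rather than an adaptively chosen one. Your version is slightly more explicit (you verify that $|S_B^{(i)}|\ge 3n/4$ remains a valid \OWFindRep~instance and spell out the induction showing the actual execution tracks the canonical one), but the decomposition and the role of pseudo-determinism are identical to the paper's proof.
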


\begin{proof}
We will show how to use a protocol for \OWFindRep~to solve the instance of \OWPartRec.

Suppose we have an instance of \OWPartRec. Alice sends the same message to Bob as if the input was an instance of \OWFindRep, which is valid since in both of these problems, Alice's input is a list of length $3n/4$ of integers between $1$ and $n$.

Now, Bob's goal is to use the message sent by Alice to recover $n/10$ elements of Alice. Let $X$ be the (initially empty) set of elements of Alice's input that Bob knows and let $B$ be a set of $3n/4$ elements in $\{1, \dots, n\}$ disjoint from $X$, where we initially set $B$ to $\{1, 2, \dots, n\}$. While the size of $X$ is less than $n/10$, Bob simulates the protocol of \OWFindRep~with Alice's message and input $B$. This will result in Bob finding a single element $x$ in Alice's input that is (i) in $B$, and (ii) not in $X$. Bob adds $x$ to $X$, and deletes $x$ from $B$. Once the size of $X$ is $n/10$, Bob outputs $X$.

If Alice has the set $A$ as her input, define $f_A(B)$ to be the output which the pseudo-deterministic algorithm for \OWFindRep~outputs with high probability when Alice's input is $A$ and Bob's input is $B$. Now, set $B_0 = \{1, 2, \dots, n\}$, and $B_i = B_{i-1} \setminus \{f_A(B)\}$. Note that these $B_i$ (for $i = 0$ through $n/10$) are the sets which, assuming the pseudo-deterministic algorithm never errs during the reduction (where we say the algorithm errs if it does not output the unique element which is guaranteed to be output with high probability), Bob will use as his inputs for the simulated executions of \OWFindRep. The pseudo-deterministic algorithm does not err on any of the $B_i$ except with probability at most $1/n$, by a union bound. If Bob succeeds on all of the $B_i$, that means that the sequence of inputs which will be his inputs for the simulated executions of \OWFindRep~are indeed $B_0, B_1, \ldots, B_{n/10}$. So, since we have shown with high probability the algorithms succeeds on all of the $B_i$, and therefore with high probability the $B_i$ are also Bob's inputs for the simulated executions of \OWFindRep, we see that with high probability Bob will succeed on all of the $n/10$ inputs he tries to simulate executions of \OWFindRep~with.

Note that we used the union bound over all the $B_i$ for $i = 1$ through $n/10$. All of these $B_i$ are a function of $A$. In particular, notice that by definition, the $B_i$ do not depend on the randomness chosen by Alice.
%The idea for correctness is as follows. Because the algorithm for \OWFindRep~was pseudo-deterministic, all of the $x_1, \ldots, x_{n/10}$ are a function of Alice's input. That is, the elements $x_1, x_2, \ldots, x_{n/10}$ do not depend on the randomness (this is not written precisely assuming no errors then the xi's are a fcn of alice's inputl..). So, we can use a union bound... 
%SHOULD PROBABLY MAKE AN ALGBOX FOR THIS...
\end{proof}

\begin{claim}   \label{claim:part-rec-lower-bound}
Every pseudo-deterministic \OWPartRec~protocol which succeeds with probability at least $\frac{2}{3}$ requires $\Omega(n)$ bits of communication.
\end{claim}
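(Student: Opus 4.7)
The plan is to exploit pseudo-determinism to extract a \emph{deterministic} function of the input whose image is forcibly small, and then contradict this by counting how many distinct inputs can map into a small image. Concretely, pseudo-determinism of the protocol yields a canonical output $f(S_A) \subseteq S_A$ with $|f(S_A)| \geq n/10$ for each input, satisfying $\Pr_{r_A, r_B}[\text{Bob outputs } f(S_A)] \geq 2/3$.

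First I would freeze the randomness. For each fixing $r = (r_A, r_B)$, the protocol becomes deterministic, defining a function $g_r(S_A) = \hat{f}(M(S_A, r_A), r_B)$ that factors through Alice's message and hence takes at most $2^c$ values (where $c$ is the communication). By linearity of expectation, $\sum_{S_A} \Pr_r[g_r(S_A) = f(S_A)] \geq (2/3)\binom{n}{3n/4}$, so there is a fixing $r^*$ for which
\[
\mathcal{G} := \{S_A : g_{r^*}(S_A) = f(S_A)\}
\]
has size at least $(2/3)\binom{n}{3n/4}$.

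The key counting step: on $\mathcal{G}$ the map $f$ agrees with $g_{r^*}$, so $|f(\mathcal{G})| \leq 2^c$. On the other hand, for any fixed candidate output $T$ of size $t \geq n/10$, the preimage $f^{-1}(T)$ is contained in $\{S_A : T \subseteq S_A, |S_A|=3n/4\}$, which has size $\binom{n-t}{3n/4-t}$. This expression is decreasing in $t$ on $[n/10, 3n/4]$, so it is maximized at $t = n/10$, giving at most $\binom{9n/10}{n/4}$. Therefore $|\mathcal{G}| \leq 2^c \cdot \binom{9n/10}{n/4}$, i.e.,
\[
2^c \;\geq\; \frac{2}{3} \cdot \frac{\binom{n}{3n/4}}{\binom{9n/10}{n/4}}.
\]
A routine binomial-ratio estimate (Stirling, or a direct telescoping product that evaluates this ratio to roughly $(4/3)^{n/10}$) shows the right-hand side is $2^{\Omega(n)}$, yielding $c = \Omega(n)$.

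The only conceptual obstacle is the first step: recognizing that pseudo-determinism collapses each input's output distribution to a single canonical value $f(S_A)$, which is exactly what makes the deterministic-function-with-small-image argument go through. A purely randomized protocol would offer no such $f$, and a Fano-type bound alone would be too weak after the loss from amplification; freezing the randomness to produce the single function $g_{r^*}$ and then bounding its image by $2^c$ is what makes the pigeonhole tight up to constants. Once that structural step is in place, the remainder is routine binomial bookkeeping.
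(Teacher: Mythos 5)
Your proof is correct, and it takes a genuinely different route from the paper's. The paper argues by reduction: it uses the probabilistic method to construct a family of $t = (4/3)^{n/100}$ input sets $S_1,\dots,S_t$ with the code-like property that $F(S_i)\not\subseteq S_j$ for $i\ne j$, then observes that this lets a \OWPartRec~protocol recover which of $t$ strings Alice was given, and invokes the standard $\Omega(\log t)$ one-way lower bound for exact string recovery. Your argument instead unpacks that recovery lower bound and fuses it with the combinatorics of the problem: freeze $(r_A,r_B)$ so the protocol becomes a deterministic function $g_{r^*}$ with image size at most $2^c$, note that $g_{r^*}$ must agree with the canonical output $f$ on a $2/3$-fraction of inputs, and then bound the preimage of any fixed output set $T$ with $|T|\ge n/10$ by $\binom{n-|T|}{3n/4-|T|}\le\binom{9n/10}{n/4}$, yielding $2^c\ge\frac{2}{3}\binom{n}{3n/4}/\binom{9n/10}{n/4}=2^{\Omega(n)}$. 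What the paper's route buys is modularity (the lower bound is delegated to a known fact about recovery), at the cost of a non-constructive code-existence step; what your route buys is a self-contained pigeonhole argument that sidesteps the probabilistic method entirely and makes it explicit exactly where communication is charged --- in the image-vs-preimage counting for the frozen deterministic map. Both are sound; the quantitative bounds are of the same order.
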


\begin{proof}
We prove this lower bound by showing that a protocol for \OWPartRec~can be used to obtain a protocol with exactly the same communication for the problem where Alice is given a string $x$ in $\{0,1\}^{Cn}$ as input, she sends a message to Bob, and Bob must exactly recover $x$ from Alice's message with probability at least 2/3. This problem has a lower bound of $\Omega(n)$ bits of communication.

Suppose there exists a pseudo-deterministic algorithm for \OWPartRec. Given such a pseudo-deterministic protocol that succeeds with probability at least $2/3$, there is a function $F$ such that $F(S)$ (a set with $n/10$ elements) is Bob's output after the protocol with probability at least $2/3$ when Alice is given $S$ as input.

We will construct sets $S_1,\dots,S_t$ to be subsets of $[n]$ of size $3n/4$ such that for any $i\ne j$, $F(S_i)$ is not a subset of $S_j$. To do so, we use the probabilistic method: set $S_1,\dots, S_t$ be random subsets of $[n]$ of size $3n/4$. The probability that $F(S_i)$ is contained $S_j$ for fixed $i \ne j$ is at most $\left(\frac{3}{4}\right)^{n/10}$. Thus, by a union bound, the probability that for any $i\ne j$, $F(S_i)$ is contained $S_j$ is at most $t^2\left(\frac{3}{4}\right)^{n/10}$, a quantity which is strictly less than $1$ when $t$ is $\left(\frac{4}{3}\right)^{n/100}$, so $S_1,\dots,S_t$ satisfying the desired guarantee exist.

Alice and Bob can (ahead of time) agree on an encoding of $\floor{\log t}$-bit strings that is an injective function $G$ from $\{0,1\}^{\floor{\log t}}$ to $\{S_1,\dots,S_t\}$. Now, if Alice is given a $\floor{\log t}$-bit string $x$ as input, she can send a message to Bob according to the pseudo-deterministic protocol for \OWPartRec~by treating her input as $G(x)$. Bob then recovers $F(G(x))$ with probability at least 2/3, and can use it to recover $G(x)$ since there is unique $S_i$ in which $F(G(x))$ is contained. Since $G$ is injective, Bob can also recover $x$ with probability $2/3$. 

This reduction establishes a lower bound of $\Omega(\floor{\log t})$ on the pseudo-deterministic communication complexity of \OWPartRec, which is an $\Omega(n)$ lower bound.
\end{proof}

%We reduce a one-way communication complexity problem to the streaming problem \textsc{Find-Repetition}.

%First, note that there is a simple randomized algorithm for the problem: Alice sends a random small subset of her input to Bob. Since at least a constant fraction of her elements are also in Bob's list, with high probability at least one of the sampled elements will be in Bob's list as well.

%Now, we show that the problem has no pseudo-deterministic algorithm. For the sake of contradiction, suppose that it does. Then, we will show that there is an algorithm for the following problem...

Combining \pref{claim:find-rep-communication}, \pref{claim:find-rep-to-part-rec} and \pref{claim:part-rec-lower-bound} completes the proof of \pref{thm:psd-lower-bound}.
\end{proof}

% \paragraph{Where is pseudo-determinism used?} It may not be obvious where in the proof the assumption that the original algorithm is pseudo-deterministic comes into play...

It is worth noting that the problem has pseudo-deterministic algorithms with sublinear space if one allows multiple passes through the input. Informally, a $p$-pass streaming algorithm is a streaming algorithm which, instead of seeing the stream only once, gets to see the stream $p$ times. 

\begin{claim}
    There is a $p$-pass deterministic streaming algorithm that uses $\widetilde{O}(n^{1/p})$ memory for the \FindRep~problem.
\end{claim}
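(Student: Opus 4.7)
The plan is to use a $p$-pass divide-and-conquer based on iteratively narrowing down a sub-range of $[n]$ known to contain a duplicate, using pigeonhole at each level. Set $k=\lceil n^{1/p}\rceil$, and imagine the universe $[n]$ as the leaves of a (complete) $k$-ary tree of depth $p$; each internal node at depth $i$ corresponds to a bucket of size roughly $n^{1-i/p}$, which is a contiguous range of $[n]$.

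Maintain the invariant: at the start of pass $i$ we hold a bucket $B_{i-1}$ of size $n/k^{i-1}$ such that strictly more than $|B_{i-1}|$ of the stream's elements lie in $B_{i-1}$ (equivalently, more stream elements than possible distinct values, so a duplicate must exist inside $B_{i-1}$ by pigeonhole). Initially $B_0=[n]$ and the hypothesis holds because the stream has length $3n/2 > n$. In pass $i$ the algorithm does the following: it partitions $B_{i-1}$ into its $k$ children $C_1,\ldots,C_k$ (each of size $n/k^i$) and, as the stream is read, keeps one $O(\log n)$-bit counter per child, incrementing $C_j$'s counter whenever the incoming element lies in $C_j$. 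At the end of the pass, the counters sum to the number of stream elements in $B_{i-1}$, which is at least $(3n/2)/k^{i-1}$; in particular the maximum child count is at least $(3n/2)/k^{i} = \tfrac{3}{2}\cdot |C_j|$, so some child $C_{j^{*}}$ has count strictly exceeding $|C_{j^{*}}|$. Set $B_i = C_{j^{*}}$ and iterate. After $p$ passes, $|B_p|=1$ and its count is at least $2$, so its unique element is a duplicate and may be output.

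Between passes the algorithm only needs to remember the chosen index at each previous level (at most $p\lceil \log k\rceil = O(\log n)$ bits describing $B_{i-1}$), and during a pass the $k=\lceil n^{1/p}\rceil$ counters take $\widetilde{O}(n^{1/p})$ bits total. Since the algorithm is deterministic (it makes no random choices; any canonical tie-breaking rule among the children with excess count suffices) and uses exactly $p$ passes, this establishes the claim. To be careful about rounding when $n$ is not a perfect $p$-th power, one can round $k$ up and take children of size $\lceil n/k^i\rceil$; the averaging inequality $(3n/2)/k^i \ge \tfrac{3}{2}|C_j|$ still forces some child count to exceed its size, preserving the invariant.

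The only conceptually delicate step is verifying that the pigeonhole gap propagates cleanly through all $p$ levels in spite of rounding; I anticipate this being the most error-prone part of the write-up, but it is essentially a matter of chasing the inequality $\text{(count inherited)} > |B_{i-1}|$ down to $\text{(max child count)} > |C_{j^{*}}|$ using the $3/2$ slack from the original stream length, so no combinatorial difficulty arises.
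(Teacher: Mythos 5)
Your proof is correct and takes essentially the same approach as the paper: maintain a candidate range of size $\approx n^{1-i/p}$, split it into $\approx n^{1/p}$ equal-sized pieces, count stream elements per piece in one pass, and descend into a piece whose count exceeds its size, which must exist by pigeonhole. One small quibble with your rounding aside: the inequality $(3n/2)/k^i \ge \tfrac{3}{2}|C_j|$ actually fails once you take $|C_j| = \lceil n/k^i \rceil > n/k^i$, so it is not the thing that rescues the argument; what does rescue it, and what your earlier framing already supplies, is the clean pigeonhole fact that whenever $\mathrm{count}(B) > |B|$ and $B$ is partitioned into children $C_1,\dots,C_k$ (of whatever sizes), some child satisfies $\mathrm{count}(C_j) > |C_j|$, which needs no $3/2$ slack and is immune to rounding.
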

\begin{proof}
    At the start of $t$-th pass, the algorithm maintains a candidate interval $I$ of width $n^{1-(t-1)/p}$ from which it seeks to find a repeated element. At the very beginning, this candidate interval is $[1,n]$. In the $t$-th pass, first partition the interval into $n^{1/p}$ equal sized intervals $I'_1,\dots,I'_{n^{1/p}}$, each of whose width (the width of an interval $[a, b]$ is $b - a$) is $n^{1-t/p}$ and count the number of elements of the stream that lie in each such subinterval -- this count must exceed the width of at least one subinterval $I'_t$. Update $I$ to $I'_t$ and proceed to the next pass. After $p$ passes, this interval will contain at most 1 integer.
\end{proof}

\section{Entropy Lower Bound for \textsc{Find-Duplicate}}

\begin{theorem} \label{thm:nuanced-find-rep}
Every zero-error randomized algorithm for \FindRep~that is $\frac{n}{s}$-concentrated must use $\Omega\left(\frac{s}{\log n}\right)$ space.
\end{theorem}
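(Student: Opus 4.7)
The plan is to generalize the reduction chain from \pref{thm:psd-lower-bound}: a $k$-concentrated zero-error \FindRep~algorithm with space $S$ (where $k = n/s$) yields a $k$-concentrated zero-error \OWFindRep~protocol with communication $S$, which in turn yields a low-error \OWPartRec~protocol with communication $O(Sk\log n)$. Combined with \pref{claim:part-rec-lower-bound}, this forces $Sk \log n = \Omega(n)$, i.e., $S = \Omega(s/\log n)$.

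The reduction from \FindRep~to \OWFindRep~is identical to \pref{claim:find-rep-communication}: the state of the streaming algorithm at the midpoint of the stream serves as Alice's message, and the resulting protocol inherits the zero-error and $k$-concentration properties. For the reduction from \OWFindRep~to \OWPartRec, Alice sends $T = \Theta(k \log n)$ independent copies $m_1, \ldots, m_T$ of her \OWFindRep~message, giving total communication $O(Sk\log n)$. Bob iteratively builds up a set $X \subseteq S_A$ as follows: in each iteration $i$, he sets $B_i = [n] \setminus X$ and, for each $j \in [T]$, continues the streaming-algorithm simulation from $m_j$ with input $B_i$ using fresh randomness to obtain an output $y_{j,i}$. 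The zero-error property ensures that each non-``fail'' $y_{j,i}$ lies in $S_A \cap B_i = S_A \setminus X$ and is therefore a new element that can be added to $X$.

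To analyze progress per iteration, note that the concentrated output $F(A, B_i)$ (which we may assume is a valid non-``fail'' element of $S_A \cap B_i$, since otherwise the algorithm already has trivially large space complexity) is produced by each of the $T$ independent samples with probability at least $1/k$; thus the probability that all $T$ samples miss $F(A, B_i)$, so that iteration $i$ fails to add a new element, is at most $(1 - 1/k)^T \le 1/\poly(n)$ for $T = \Theta(k \log n)$. A union bound over the $n/10$ iterations then shows that with high probability every iteration gains at least one element, so Bob ends with $|X| \ge n/10$, as required by \OWPartRec.

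The main technical obstacle is that the $T$ random seeds Alice uses are reused across all $n/10$ iterations, so $B_i$ is correlated with Alice's randomness, and the per-iteration concentration bound cannot be applied naively to the $T$ samples conditional on $B_i$. To handle this, we plan to invoke Nisan's pseudorandom generator (\pref{thm:nisan-prg}) to replace Alice's randomness per copy by a short seed of length $O(S \log n)$, so that the total entropy of her $T$ seeds, roughly $O(Sk \log^2 n)$ bits, substantially exceeds the information $B_i$ can reveal about them (at most $O(n)$ bits), enabling a careful conditioning argument that recovers the per-iteration failure bound up to constants. Plugging into \pref{claim:part-rec-lower-bound} then yields $Sk \log n = \Omega(n)$ and hence $S = \Omega(s/\log n)$.
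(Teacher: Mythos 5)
You identify the correct obstruction --- $B_i$ is a random function of Alice's messages, so you cannot treat the $T$ copies as independent once you condition on $B_i$ --- but the fix you reach for is not the one that works, and the one that does work is much simpler and purely combinatorial. The paper defines a \emph{deterministic} sequence of sets, $S_0 := [n]$ and $S_{i+1} := S_i \setminus \{f_A(S_i)\}$, where $f_A(\cdot)$ denotes the concentrated output; this sequence depends only on Alice's input $A$ and not on her randomness. Bob then maintains a \emph{family} of active sets, so that this fixed path is always among those explored. Because each $S_i$ is a fixed set, for each fixed $i$ the event that none of Alice's $\Theta(k\log n)$ independent messages yields $f_A(S_i)$ when continued on input $S_i$ has probability at most $(1-1/k)^{\Theta(k\log n)} \le 1/n^2$, and a union bound over the $n/10$ fixed indices finishes. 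Your single-path version, with $B_i = [n]\setminus X$, makes the query sets genuinely dependent on Alice's randomness and leaves the union bound unavailable; recognizing the gap is not the same as closing it.

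The Nisan-PRG workaround you sketch does not close it. An entropy gap between Alice's total seed length and the information $B_i$ can reveal does not by itself yield the per-iteration failure probability you need: conditioning on $B_i$ can correlate the $T$ seeds arbitrarily, and high residual min-entropy does not imply that, for the specific (now Alice-randomness-dependent) set $B_i$, some seed in the bundle succeeds with probability $1-1/\poly(n)$. The slack is also thin: near the threshold $S \approx s/\log n$ one has $Sk\log^2 n \approx n\log n$ against up to $n$ bits carried by $X$, so you are betting on a single $\log n$ factor without any argument for why it suffices. There is also a secondary flaw: adding \emph{all} non-fail $y_{j,i}$ to $X$ per round can delete up to $T=\Theta(k\log n)$ elements per iteration, which for small $s$ quickly drives $|B_i|$ below the $3n/4$ floor that \OWFindRep~requires. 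The missing ingredient is the deterministic-path trick; without it the proposal does not prove the theorem.
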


By \textit{zero error}, we mean that the algorithm never outputs a number $k$ which is not repeated. With probability one it either outputs a valid output, or $\bot$.

\begin{proof}
We use a reduction similar to that of the pseudo-deterministic case (cf. Proof of \pref{claim:find-rep-to-part-rec}).  Using the exact same reduction from the proof of \pref{claim:find-rep-communication}, we get that a $\frac{n}{s}$-concentrated streaming algorithm for \FindRep~using $T$ space must give us a $\frac{n}{s}$-concentrated protocol for \OWFindRep~with communication complexity $T$. If we can give a way to convert such a protocol for \OWFindRep~into an $O\left(\frac{Tn\log n}{s}\right)$-communication protocol for \OWPartRec, the desired lower bound on $T$ follows from the lower bound on communication complexity of \OWPartRec~from \pref{claim:part-rec-lower-bound}. We will now show how to make such a conversion by describing a protocol for \OWPartRec.

Alice sends Bob $\Theta(n \log n/s)$ messages according to the protocol for \OWFindRep ~(that is, she simulates the protocol for \OWFindRep~ a total of $\Theta(n \log n/s)$ times). Bob's goal is to use these $\Theta(n \log n/s)$ messages to recover at least $n/10$ input elements of Alice. Towards this goal, he maintains a set of elements recovered so far, $X$ (initially empty), and a family of `active sets' $\calB$ (initially containing the set $\{1,2,\dots,n\}$). While the size of $X$ is smaller than $n/10$, Bob simulates the remainder of the \OWFindRep~protocol on every possible pair $(B, M)$ where $B$ is a set in $\calB$ and $M$ is one of the messages of Alice. For each such pair $(B,M)$ where the protocol is successful in finding a duplicate element $x$, Bob adds $x$ to $X$, removes $B$ from $\calB$ and adds $B\setminus\{x\}$ to $\calB$.

We now wish to prove that this protocol indeed lets Bob recover $n/10$ elements of Alice.
Suppose Alice has input $A$. For each set $S$, define $f_A(S)$ be an element of $A\cap S$ that has probability at least $s/n$ of being outputted by Bob on input $S$ at the end of a \OWFindRep~protocol.  Let $S_0 := \{1,2,\dots,n\}$ and $S_i :=  S_{i-1}\setminus\{f_A(S_i)\}$ be defined for $0\le i\le n/10$.  Note that $S_i$ are predetermined: it is a function of Alice's input (and, in particular, not a function of the randomness she uses when choosing her messages). For a fixed $i$, the probability of failure to recover $f_A(S_i)$ from any of Alice's messages is at most $1/n^2$. A failure to fill in $X$ with $n/10$ elements implies that for some $i$, Bob failed to recover $f_A(S_i)$ from all of Alice's messages. The probability that such a failure happens for a specific $i$ is at most $(1 - s/n)^{\Theta(n \log n / s)}$. By setting the constant in the $\Theta$ to be large enough, we can have this be at most $\frac{1}{n^2}$, and so by a union bound the probability that there is an $i$ such that $f_A(S_i)$ is not recovered by Bob is at most $1/n$.

Thus, we obtain a protocol for \textsc{One-Way-Partial-Recovery} with communication complexity $O(T n \log n / s)$, and so $T \le s/\log n$, completing the proof.
\end{proof}

We obtain the following as immediate corollaries:
\begin{corollary}
    Any zero-error $\log\left(\frac{n}{s}\right)$-entropy randomized algorithm for \FindRep~must use $\Omega\left(\frac{s}{\log n}\right)$ space.
\end{corollary}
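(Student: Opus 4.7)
The plan is to deduce this corollary directly from \pref{thm:nuanced-find-rep} by observing that any $\log(n/s)$-entropy algorithm is automatically $(n/s)$-concentrated, after which the space lower bound is immediate.

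The key elementary inequality I would use is that for any discrete random variable $X$, the Shannon entropy is at least the min-entropy: $H(X) \ge -\log \max_x \Pr[X = x]$. This holds because
\[ H(X) = \sum_x \Pr[X=x]\bigl(-\log \Pr[X=x]\bigr) \ge \sum_x \Pr[X=x] \bigl(-\log \max_y \Pr[X=y]\bigr) = -\log \max_y \Pr[X=y], \]
using that $-\log$ is monotone decreasing.

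First I would apply this inequality to the output distribution of the algorithm $A$ on any fixed input $x$. If $A$ is $\log(n/s)$-entropy then $H(A(x)) \le \log(n/s)$, so by the inequality above there exists an output value $F(x)$ with $\Pr_r[A(x,r) = F(x)] \ge s/n$. This is exactly the definition of an $(n/s)$-concentrated algorithm. Then I would invoke \pref{thm:nuanced-find-rep} on $A$: since $A$ is zero-error and $(n/s)$-concentrated, it must use $\Omega(s/\log n)$ space, which is precisely the conclusion of the corollary. There is no real obstacle here---the corollary is a direct consequence of the theorem together with the standard fact that small Shannon entropy forces some outcome to have probability bounded below.
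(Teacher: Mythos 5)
Your proposal is correct and matches the paper's reasoning exactly: the paper states in the preliminaries (right after defining $k$-concentrated) that any $\log k$-entropy algorithm is $k$-concentrated, precisely via the min-entropy $\le$ Shannon-entropy observation you spell out, and then derives this corollary as an immediate consequence of Theorem~4.1. Your write-up just makes explicit the one-line inequality the paper takes for granted.
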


\begin{corollary}
    Any zero-error $O\left(\frac{n}{s}\right)$-pseudo-deterministic algorithm for \FindRep~must use $\Omega\left(\frac{s}{\log n}\right)$ space.
\end{corollary}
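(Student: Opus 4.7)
The plan is to derive this corollary as an immediate consequence of \pref{thm:nuanced-find-rep}, by observing that the hypothesis of $O(n/s)$-pseudo-determinism implies the hypothesis of $O(n/s)$-concentration used there. This is essentially the implication ``$(k+2)$-pseudo-deterministic implies $k$-concentrated'' mentioned in the preliminaries, specialized to $k = O(n/s)$, so the whole argument is a one-step unpacking of definitions followed by a black-box application.

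More precisely, suppose $A$ is a zero-error $m$-pseudo-deterministic algorithm for \FindRep with $m = O(n/s)$. By \pref{def:}, for every valid stream $x$ there is a set $S(x)$ of size at most $m$ such that
\[
\Pr_r[A(x,r)\in S(x)] \;\ge\; \tfrac{m+1}{m+2} \;\ge\; \tfrac{1}{2}.
\]
By pigeonhole, some element $F(x)\in S(x)$ satisfies $\Pr_r[A(x,r)=F(x)] \ge \frac{1}{2m}$. Hence $A$ is $(2m)$-concentrated, and since $2m = O(n/s)$, it is $O(n/s)$-concentrated in the sense of the concentration definition from the preliminaries. Note that zero-error is preserved: the per-input distinguished output $F(x)$ is a valid duplicate, since any output the algorithm produces with positive probability must be correct.

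The only remaining point is that \pref{thm:nuanced-find-rep} is stated for algorithms that are exactly $\frac{n}{s}$-concentrated, whereas we get $c \cdot \frac{n}{s}$-concentrated for some constant $c$ absorbed into the $O(\cdot)$. Inspecting the proof of \pref{thm:nuanced-find-rep}, the concentration parameter only enters through the number of times Alice repeats the \OWFindRep protocol --- it is chosen as $\Theta(n \log n / s)$ precisely so that $(1-s/n)^{\Theta(n\log n/s)} \le 1/n^2$. Replacing $s/n$ by $s/(cn)$ multiplies the number of repetitions by a factor of $c$, which is absorbed into the $\Theta$ and leaves the total communication at $O(Tn\log n/s)$, yielding the identical conclusion $T = \Omega(s/\log n)$. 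So the corollary follows.

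I do not expect a genuine obstacle here: the main content is in \pref{thm:nuanced-find-rep}, and the corollary is just a translation across notions. The only thing to be careful about is tracking that the pigeonhole step indeed uses the lower bound $\frac{m+1}{m+2}$ from the definition of pseudo-determinism (so that we get probability $\Omega(1/m)$ on a single output rather than only $\Omega(1/m^2)$), and that the proof of the main theorem is robust to constant-factor perturbations of the concentration parameter.
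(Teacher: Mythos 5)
Your proof is correct and matches the paper's intended argument: the paper declares this an ``immediate corollary'' of Theorem~\ref{thm:nuanced-find-rep} via the observation in the preliminaries that multi-pseudo-determinism implies concentration, and your pigeonhole step (mass $\ge \frac{m+1}{m+2}\ge\frac12$ on $\le m$ outputs, so one output has mass $\ge \frac{1}{2m}$), together with the check that the $\Theta(n\log n/s)$ repetition count in the theorem's proof absorbs the constant, is exactly the unpacking the paper leaves implicit. The remarks about zero-error being preserved and about constant-factor robustness are appropriate and correct.
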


Below we show that the above lower bound is tight up to log factors.

\begin{theorem} \label{thm:find-rep-low-entropy-algo}
For all $s$, there exists a zero-error randomized algorithm for \FindRep~using $\wt{O}(s)$ space (where $\wt{O}$ hides factors polylogarithmic in $n$) that is $O\left(\frac{n}{s}\right)$-concentrated.
\end{theorem}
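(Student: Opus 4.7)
The plan is to give a straightforward \emph{bucketing} algorithm. Fix in advance a partition $B_1,\dots,B_{n/s}$ of $[n]$ into contiguous buckets of size $s$ (assuming for notational convenience that $s$ divides $n$). At the start of the stream, sample a single bucket index $i\in[n/s]$ uniformly at random and store it, using $O(\log(n/s))$ bits. While reading the stream, maintain an $s$-bit indicator vector $v$ recording which elements of $B_i$ have been seen so far, together with a register $d\in B_i\cup\{\bot\}$ initialized to $\bot$. Whenever an incoming element $e\in B_i$ has its indicator bit already set and $d=\bot$, set $d\gets e$; otherwise, update $v$ accordingly. Output $d$ at the end of the stream.

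The total space is $s+O(\log n)=\wt{O}(s)$, and the algorithm is trivially zero-error: the only non-$\bot$ value it can ever emit is an element whose two occurrences in the stream it explicitly witnessed.

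For the concentration property, observe that since the stream has $3n/2>n$ elements drawn from $[n]$, pigeonhole forces at least one element $v^\star$ to appear at least twice; let $i^\star$ be the (unique) bucket index with $v^\star\in B_{i^\star}$. Conditioned on the random choice $i=i^\star$, the algorithm deterministically outputs the first element of $B_{i^\star}$ whose second occurrence is witnessed in the stream---a fixed element $d_{i^\star}$ which is a genuine duplicate. Since $\Pr[i=i^\star]=s/n$, this particular valid output is produced with probability at least $s/n$, certifying that the algorithm is $O(n/s)$-concentrated.

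There is no real obstacle in this argument; the only point requiring care is ensuring that the rule for picking which duplicate to report within the selected bucket is \emph{canonical} (e.g., the first bucket element whose second occurrence is witnessed), so that the output is a deterministic function of the random bucket index $i$ and the input stream. This canonicalization is precisely what allows the probability mass associated with a single bucket choice to concentrate on one value, rather than being spread across the possibly many valid duplicates living in that bucket.
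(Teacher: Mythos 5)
Your proof is correct, and it takes a genuinely different route from the paper's. The paper's algorithm samples a uniformly random \emph{stream position}, remembers the element there, and watches for a repeat; it then runs $s\log n$ such samplers in parallel and outputs the minimum, arguing (via a tail bound over the parallel copies) that with high probability the minimum lies among the smallest $\approx n/(2s)$ possible outputs, so one of them must carry probability $\Omega(s/n)$ by pigeonhole. Your algorithm instead partitions the \emph{universe} $[n]$ into $n/s$ fixed buckets of size $s$, samples one bucket, and deterministically reports the first duplicate witnessed inside it. The concentration argument then becomes trivial: once the bucket $i^\star$ containing any pigeonhole-guaranteed duplicate is fixed, there is no randomness left, so the output is a fixed element $d_{i^\star}$ achieved with probability exactly $s/n$, and no other bucket choice can ever produce it. Your route avoids parallel repetition, avoids the high-probability union bound, and yields a tight $(n/s)$-concentration rather than $O(n/s)$; what the paper's sampler-based approach buys is closer alignment with standard streaming machinery (it is essentially a low-entropy variant of the classical find-duplicate sampler), but for this particular theorem your bucketing argument is cleaner. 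One small presentational point: in the update rule, you should state explicitly that the indicator bit for $e$ is set whenever $e\in B_i$ and it is not yet set, and that nothing happens when $e\notin B_i$ or when $d$ is already non-$\bot$; the phrase \emph{update $v$ accordingly} leaves this implicit, though the intent is clear.
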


\begin{proof}
Define the following algorithm $A$ for \FindRep: pick a random number $i$ in $[3n/2]$, then remember the $i^{\text{th}}$ element $a$ of the stream, and see if $a$ appears again later in the stream. If it does, return $x$. Otherwise return $\bot$. 

The $O\left(\frac{n}{s}\right)$-concentrated algorithm algorithm is as follows: Run $s\log n$ copies of Algorithm $A$ independently (in parallel), and then output the minimum of the outputs.

We are left to show that this algorithm is indeed $O\left(\frac{n}{s}\right)$-concentrated.

Define $f$ to be a function where $f(i)$ is the total number of times which $i$ shows up in the stream, and define $g(i) = \max((f(i)-1), 0)$. Note that then, the probability that $i$ is outputted by algorithm $A$ is $g(i)/(3n/2)$, since $i$ will be outputted if $A$ chooses to remember one of the first $i - 1$.

Consider the smallest $a$ such that $\sum_{i=1}^{a} g(i) \ge n/(2s)$. We will show that the probability that the output is less than $a$ with high probability. It will follow that the algorithm is $s$-concentrated, since of the $a-1$ smallest elements, at most $\sum_{i=1}^{a-1} g(i)$ outputs are possible (since if $g(i) = 0$, then $i$ is not a possible output). So, we will see that with high probability, one of at most $\sum_{i=1}^{a-1} g(i) + 1 \le n/(2s)$ outputs (namely, the valid outputs less than or equal to $a$) will be outputted with high probability. And hence, at least one of them will be outputted with probability at least $\frac{s}{n}$.

The probability that the output is at most $a$ in a single run of algorithm $A$ is $\frac{3n}{2}\sum_{i=1}^{a} g(i) \ge 3/(4s)$. So, the probability that in $s \log n$ runs of algorithm, in at least one of them an element which is at most $a$ is outputted is $1 - (1 - \frac{3}{4s})^{s \log n}$, which is polynomially small in $n$. Hence, with high probability an element which is at most $a$ (and there are $n/(2s)$ valid outputs less than $a$) will be outputted.

%the output will be in the greatest $1/s$ fraction of duplicates, leading to an $\frac{n}{s}$-concentrated algorithm.

%\snote{This algorithm is correct, but this proof needs more explanation\dots  Reminder to self to fix it soon.}
\end{proof}

\subsection{Getting Rid of the Zero Error Requirement}\label{nonzeroerrorsection}

A downside of \pref{thm:nuanced-find-rep} is that it shows a lower bound only for zero-error algorithms. In this section, we strengthen the theorem by getting rid of that requirement:

\begin{theorem} \label{thm:nuanced-find-rep-no-zero}
Every randomized algorithm for \FindRep~that is $\frac{n}{s}$-concentrated and errs with probability at most $\frac{1}{n^2}$ must use $\tilde{\Omega}\left(s^{1-\epsilon}\right)$ space (for all $\epsilon > 0$).
\end{theorem}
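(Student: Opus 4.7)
The approach is to adapt the reduction in the proof of \pref{thm:nuanced-find-rep} to handle non-zero error via parallel-repetition amplification. In the zero-error case, every successful simulation of the protocol returns an element guaranteed to lie in $A \cap B$, but with $1/n^2$ error a simulation can return a non-duplicate, and accepting such an output would contaminate the set $X \subseteq A$ that Bob builds, breaking the reduction to \OWPartRec.

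The plan is to have Alice send $K := L \cdot \widetilde{\Theta}(n/s)$ independent messages, where $L = s^{\epsilon}$ is an amplification factor, so total communication is $KT$. For each candidate set $B$ in his active family, Bob runs the continuation of the protocol with all $K$ messages, computes the empirical frequency of each output element, and accepts an element only if its frequency exceeds a threshold $\tau = L \log n / 2$. A Chernoff argument shows: (a) every output $y$ with true probability $\ge s/n$ is accepted with probability $\ge 1 - n^{-\Omega(L)}$; and (b) by multiplicative Chernoff and a union bound over $y \in [n]$, no element outside $A \cap B$ (whose total probability mass under the algorithm's output distribution is at most $1/n^2$) is accepted except with probability $n^{-\Omega(L \log n)}$. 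Both bounds are slack enough to survive a further union bound over the $n/10$ recovery steps.

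With contamination ruled out, the rest of the analysis mirrors the zero-error case: define the predetermined sequence $B_0 = [n]$ and $B_i = B_{i-1} \setminus \{f_A(B_{i-1})\}$, where $f_A(B)$ is a canonical heavy element of $A \cap B$---say the maximizer over $y$ of $\Pr[A(A \cup B) = y] + \eta_y$ for a small shared-randomness perturbation $\eta_y$ to break ties, making $f_A$ deterministic in $(A, B, \eta)$. Bob's step-$i$ selection rule is to take the empirical argmax among elements above $\tau$; with the $L = s^{\epsilon}$ amplification, this empirical choice matches $f_A(B_i)$ uniformly across $i$ with high probability. A union bound gives that Bob outputs the fixed set $X^\star = \{f_A(B_0), \dots, f_A(B_{n/10 - 1})\}$ with probability $\ge 2/3$, so the reduced \OWPartRec{} protocol is pseudo-deterministic (given the shared randomness). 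Applying \pref{claim:part-rec-lower-bound} yields $K T = \Omega(n)$, and hence $T = \widetilde{\Omega}(s^{1 - \epsilon})$.

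The main obstacle is controlling the empirical argmax when multiple output probabilities are close to one another or to the threshold $\tau$: without a gap in the true scores, the empirical argmax could flip across random seeds, causing the trajectory of $B_i$'s---and hence the set $X$---to depend on the randomness, destroying pseudo-determinism. The shared-randomness perturbation manufactures a gap of order $1/\poly(n)$ in true scores with high probability over $\eta$, and the $L = s^{\epsilon}$ amplification drives empirical fluctuations below this gap; verifying that this survives the union bound over all $n/10$ steps and all $n$ elements is precisely where the $s^{\epsilon}$ slack (versus the $\widetilde{\Omega}(s)$ bound of the zero-error setting) is spent.
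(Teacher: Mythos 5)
Your reduction skeleton (amplify by sending many messages, threshold to rule out error contamination via Chernoff, then fix a canonical sequence $B_0,B_1,\dots$ and union-bound) follows the paper's outline, and your handling of the contamination from the $1/n^2$ error is fine. But the central step---making Bob's selection rule a \emph{deterministic function of $(A,B)$ and shared randomness only}---does not go through as stated, and it is precisely the step where the paper needs its two new ideas.

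The gap is in the claim that a shared-randomness perturbation $\eta_y$ of magnitude $1/\poly(n)$, combined with $L=s^{\epsilon}$ amplification, stabilizes the empirical argmax. With $K = \Theta\bigl(L\cdot\frac{n}{s}\log n\bigr)$ samples, an output of true probability $p = \Theta(s/n)$ has empirical-probability fluctuations of order $\sqrt{p/K} = \Theta\bigl(s^{1-\epsilon/2}/(n\sqrt{\log n})\bigr)$. For $s$ anywhere near $n$ (which is the regime relevant to pseudo-determinism and, e.g., $\Theta(1)$-concentration), this is $\Theta(1/\sqrt{\log n})$ and dwarfs any $1/\poly(n)$ gap you could manufacture by a tie-breaking perturbation bounded so as not to reorder the heavy elements. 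Driving the fluctuation below a $1/\poly(n)$ gap would force $K=\poly(n)$, destroying the lower bound. The same difficulty appears at your fixed acceptance threshold $\tau$: an output whose true probability sits within $\Theta(\sqrt{p/K})$ of $\tau/K \approx s/(2n)$ gets accepted or rejected depending on Alice's random messages, so the set Bob argmaxes over (and hence $B_{i+1}$) is not a function of $(A,B_i,\eta)$. There is no guarantee such near-threshold or near-tied probabilities do not exist, and a small perturbation cannot separate arbitrary reals.

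The paper avoids exactly this by two devices you do not have. First, instead of a fixed $\tau$ and argmax, Bob uses a \emph{random} threshold $T$ drawn (from shared randomness) over a window of width $\Theta(ks/n)$ and returns the lexicographically first element appearing $>T$ times; since there are only $O(1)$ heavy outputs, a random $T$ avoids all their expected counts by the fluctuation margin except with probability $O(1/\sqrt{ks/n})$, sidestepping the need for any gap between competing outputs. Second---and this is the essential new idea---Alice can \emph{see} from her input and the shared randomness which of the $\Theta(n)$ thresholds are "bad," and sends a small number of corrective type-B messages (each $O(\log n)$ bits) specifying a replacement threshold; these corrections depend only on her input and the shared randomness, never on which messages she sent, so Bob's trajectory stays a deterministic function of $(A,\text{shared randomness})$. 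The parameter bookkeeping ($k=n/\sqrt{sb}$, type-A cost $n\sqrt{b/s}$, type-B cost $\widetilde{O}(n/(s/b)^{1/4})$) then closes at $b = \widetilde{\Omega}(s^{1-\epsilon})$. Your proposal is missing this correction mechanism, and without it the union bound over the $n/10$ steps does not apply.
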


\paragraph{Proof overview:}
We begin by outlining why the approach of \pref{thm:nuanced-find-rep} does not work without the zero-error requirement. Recall that the idea in the proof was to have Alice send many messages (for \OWFindRep) to Bob, and Bob simulates the \OWFindRep~ algorithm (using simulated inputs he creates for himself) using these messages to find elements in Alice's input.

The problem is that the elements we end up removing from Bob's simulated input\footnote{recall that Bob simulates an input to the \OWFindRep~problem, and then he repeatedly finds elements he shares with Alice, removes them from the ``fake'' input, and reconstructs a large fraction Alice's inputs} depend on Alice's messages, and therefore we can't use a union bound to bound the probability that the protocol failed for a certain simulated input.  So, we want the elements we remove from Bob's fake input not to depend on the inputs Alice sent. One idea to achieve this is to have Alice send a bunch of messages (for finding a shared element), and then Bob will remove the element that gets output the largest number of times (by simulating the protocol with each of the many messages Alice sent). The issue with this is that if the two most common outputs have very similar probability, the outputted element depends not only on Alice's input, but also on the randomness she uses when choosing what messages to send to Bob. This makes it again not possible to use a union bound.

There are two new ideas to fix this issue. The first is to use the following ``Threshold'' technique: Bob will pick a random ``threshold'' T between $ks/(2n)$ and $ks/(4n)$ (where we wish to show a lower bound on $n/s$-concentrated algorithms, and $k$ is the total number of messages Alice sends to Bob). He simulates the algorithm for \OWFindRep~with all $k$ messages Alice sent him, and gets a list $L$ of $k$ outputs. Then, he will consider the ``real'' output to be the lexicographically first output $y \in L$ where there are more than $T$ copies of $y$ in the list $L$ (note that since the algorithm is $n/s$-concentrated, its very unlikely for no such element to exist).

Now, it follows that with high probability, the shared element does not really depend on the messages. This is because with all but probability approximately $1/\sqrt{ks/n}$, the threshold is far (more than $\sqrt{ks/n} \log^2 ks/n$ away) from the the frequency of every element in $L$. We note that we pick $\sqrt{ks/n} \log^2 ks/n$ since from noise we would expect to have the frequencies of elements in $L$ change by up to $\sqrt{ks/n} \log^2 ks/n$, depending on the randomness of $A$. We want the threshold to be further than that from the expected frequencies, so that with high probability there will be no element which sometimes has frequency more than $T$ and sometimes has frequency less than $T$, depending on Alice's messages (recall that the goal is to make the outputs depend as little as possible on Alice's messages, but to only depend on shared randomness and on Alice's input).

This is still not enough for us: we still cannot use a union bound, as $1/\sqrt{ks/n}$ fraction of the time Bob's output will depends on Alice's message (and not just her input). The next idea resolves this. What Alice will do is send $n/\sqrt{ks/n}$ additional pieces of information: telling Bob where the chosen thresholds are bad, and what threshold to use instead. We assume that we have shared randomness so Alice knows all of the thresholds that will be chosen by Bob (note heavy-recovery is hard, even in the presence of shared randomness, so the lower bound is sufficient with shared randomness). Now, Alice can tell for which executions there the threshold chosen will be too close to the likelihood of an element. So, Alice will send approximately $n/\sqrt{ks/n}$ additional pieces of information: telling Bob where the chosen thresholds are bad, and what threshold to use instead. By doing so, Alice has guaranteed that a path independent of her messages will be taken.

To recap, idea 1 is to use the threshold technique so that with probability $1-1/\sqrt{ks/n}$ what Bob does doesn't depend on Alice's messages (only on her input). Idea 2 is to have Alice tell Bob where these $1/\sqrt{ks/n}$ bad situations are, and how to fix them.

The total amount of information Alice sends (ignoring logs) is $\tilde{\Theta}(kb+n/\sqrt{ks/n})$, (where $b$ is the message size we are assuming exists for pseudo-deterministically finding a shared element, and k is the number of messages Alice sends). The factor $n/\sqrt{ks/n}$ follows since $1/\sqrt{ks/n}$ of the times, short messages will be sent to Bob due to a different threshold. A threshold requires $\log n$ bits to describe, which can be dropped since we are ignoring log factors. Setting $n/s \ll k \ll n/b$, we conclude that Alice sends a total of $\tilde{o}(n)$ bits. This establishes a contradiction, since we need $\tilde{\Theta}(n)$ bits to solve \OWPartRec. So, whenever $s = \tilde{\omega}(b)$, we can pick a $k$ such that we get a contradiction.

\begin{proof}
Below we write the full reduction written out as an algorithm for \OWFindRep.

\begin{itemize}
\item
Alice Creates $k=n/\sqrt{sb}$ messages for \OWFindRep, and sends them to Bob (Call these messages of type A).

\item 
Additionally, Alice looks at the thresholds in the shared randomness. every time there is a threshold that is close (within $\sqrt{ks/n} \log^2 ks/n$) of the expected number of times a certain $y$ will be outputted on the corresponding input (that is, for each fake input Bob will try, Alice checks if the probability of outputting some $y$ is close to $T$ -- to be precise, say she checks if its probability of being outputted, assuming a randomly chosen message by Alice, is close to $T$), she sends a message to Bob informing him about the bad threshold, and suggests a good threshold to be used instead (call these messages of type B). Notice that these messages do not depend on the messages of type A that Alice sends, and that each such message is of size $O(\log n)$.

\item
Bob sets $B$ to be the simulated input $\{1,...,n\}$

\item
Bob uses each of the messages of type $A$ that Alice sent, along with $B$, to construct a list of outputs.

\item
Bob looks at the shared randomness to find a threshold $T$ (if Alice has informed him it is a bad threshold, use the threshold Alice suggests instead), and consider the lexicographically minimal output y that is contained in the multiset more than $T$ times.

\item
Bob removes $y$ from the fake input and repeat the last three steps of the algorithm (this time using a new threshold).
\end{itemize}

\begin{claim}
The above protocol solves \OWPartRec with high probability using $\tilde{o}(n)$ bits.
\end{claim}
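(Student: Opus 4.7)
The plan is to verify two things separately: the protocol's total communication is $\tilde{o}(n)$, and Bob recovers $n/10$ distinct elements of Alice's input with probability $1-o(1)$. Together with \pref{claim:part-rec-lower-bound}, this contradicts the existence of an $\tfrac{n}{s}$-concentrated \OWFindRep~protocol of message size $b = s^{1-\Omega(1)}$, which is the contradiction required for \pref{thm:nuanced-find-rep-no-zero}.

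For the communication bound I would sum the two message types. The $k = n/\sqrt{sb}$ type-A messages use $kb = n\sqrt{b/s}$ bits. For type-B, in each iteration Bob's threshold $T$ is drawn uniformly from an interval of length $\Theta(ks/n)$, and since the probabilities $p_y$ of each output $y$ satisfy $\sum_y p_y \le 1$, there are at most $O(n/s)$ candidate outputs whose expected empirical frequency $p_y k$ even lies in that interval. A union bound shows the per-iteration probability that $T$ falls within $\sqrt{ks/n}\log^2(ks/n)$ of some $p_y k$ is $\tilde{O}(1/\sqrt{ks/n})$; summing over the $n/10$ iterations and multiplying by the $O(\log n)$-bit cost of each correction gives $\tilde{O}(n/\sqrt{ks/n}) = \tilde{O}(n(b/s)^{1/4})$ bits. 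Both terms are $\tilde{o}(n)$ in the target regime.

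For correctness, the key claim is that, conditioned on Alice's type-B messages successfully flagging every bad threshold, the element Bob selects in each iteration is a deterministic function of Alice's input $A$ together with the shared randomness, independent of which type-A messages Alice happened to sample. To see this, fix $A$ and the shared randomness. For each potential fake input $B$ and each valid output $y \in A \cap B$, a Chernoff bound over the $k$ independent \OWFindRep~simulations implies that $y$'s empirical count concentrates within $\sqrt{ks/n}\log^2(ks/n)$ of $p_y k$ with probability $1 - 1/\poly(n)$. Alice's correction step ensures the effective threshold $T$ is at least $\sqrt{ks/n}\log^2(ks/n)$ from every $p_y k$, so the event ``$y$'s empirical count exceeds $T$'' is forced to agree with the event ``$p_y k > T$,'' which depends only on $A$ and the shared randomness. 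The lexicographically minimal such $y$ is therefore also a function of $A$ and the shared randomness.

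This determinism lets me define in advance the sequence $S_0 := \{1,\ldots,n\}$, $S_{i+1} := S_i \setminus \{f_A(S_i)\}$ that Bob will traverse, where $f_A(S_i)$ is the (predetermined) element selected in iteration $i$. A union bound over the $n/10$ iterations of the $1/\poly(n)$ concentration failures then shows Bob completes the entire sequence with probability $1-o(1)$, recovering $n/10$ distinct elements of $A$ as required. The main technical obstacle, in my view, is calibrating the three scales in play---the threshold interval width $ks/n$, the concentration buffer $\sqrt{ks/n}\log^2(ks/n)$, and the type-A budget $kb$---so that the choice $k = n/\sqrt{sb}$ simultaneously makes both communication terms sublinear and leaves the concentration buffer small enough relative to the threshold interval for the determinism argument to go through.
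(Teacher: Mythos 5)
Your proposal tracks the paper's argument closely---the type-A/type-B accounting, the Chernoff-based determinism away from the threshold, and the union bound over Bob's predetermined sequence $S_0, S_1, \dots$. But there are two pieces of the paper's proof that you leave out, and both are exactly where the ``errs with probability at most $1/n^2$'' hypothesis (as opposed to zero error) has to be used.

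First, your determinism argument restricts attention to valid outputs $y \in A\cap B$. Since the underlying \OWFindRep~protocol can err, an invalid $y \notin A$ can also appear in the multiset and, in principle, cross the threshold, in which case Bob records an element not in Alice's set and the reduction to \OWPartRec~collapses. The paper handles this with a separate argument: $T \ge ks/(4n)$ while an invalid $y$ has output probability at most $1/n^2$, so its expected count is at most $k/n^2 \ll T$; union-bounding over the $n/10$ iterations shows no invalid $y$ ever exceeds $T$ except with small probability. This step cannot be absorbed into the Chernoff argument you give, because you only instantiate Chernoff for $y\in A\cap B$. Second, you take ``the lexicographically minimal such $y$'' as given without arguing that some $y$ exceeds $T$ at all. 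The paper uses $n/s$-concentration explicitly here: there is a $y_{\max}$ with output probability $\ge s/n$, hence expected count $\ge ks/n > T$, which exceeds $T$ with all but exponentially small failure probability.

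A smaller point on the type-B budget: you correctly note there are at most $O(n/s)$ candidate outputs with $p_y k$ in the threshold window---a count the paper never states---but then report a per-iteration bad-threshold probability of $\tilde O(1/\sqrt{ks/n})$, which is what a \emph{single} candidate contributes. The union bound over $O(n/s)$ candidates gives $\tilde O\bigl((n/s)\cdot 1/\sqrt{ks/n}\bigr)$, so a factor $n/s$ appears to be dropped. The paper's own writeup does not justify this count either, so I flag it as a point to tighten rather than a divergence from the source, but with the extra factor the regime in which the type-B traffic stays $\tilde o(n)$ is narrower than the bare choice $k = n/\sqrt{sb}$ suggests.
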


\begin{proof}
First we show that the total number of bits communicated is $\tilde{o}(n)$. Notice that the total number of messages of type $A$ that are sent is $n/\sqrt{sb}$. We assume that each of these is of size at most $b$, giving us a total of $n\sqrt{b}/\sqrt{s}$ bits sent in messages of type $A$. Under the assumption that $b = \tilde{o}(n)$, we see that this is $\tilde{o}(n)$ total bits for messages of type $A$. 

We now count the total number of bits communicated in messages of type $B$. Each message of type $B$ is of size $O(\log n)$ (it is describing a single element, and a number corresponding to which execution the message is relevant for, each requiring $O(\log n)$ bits). So, we wish to show that with high probability the total number of messages of type $B$ is $\tilde{o}(n)$. The total number of messages of type $B$ that will be sent is $O(\frac{n}{\sqrt{ks/n}})$, since for every input, the probability that the randomly chosen threshold (which is sampled using public randomness) is more than $\sqrt{ks/n} \log^2 ks/n$ away from the frequency of every output is $O(\frac{1}{\sqrt{ks/n}})$. Note that $\frac{n}{\sqrt{ks/n}} =  \tilde{o}(n)$ since $ks = n\sqrt{\frac{s}{b}}$, and we assume $b = \tilde{o}(s)$.

We are now left to show the protocol correctly solves $\OWPartRec$ with high probability. We will first show that, after fixing Alice's input and the public randomness, with high probability there will be a single sequence of inputs that Bob will try that will occur with high probability (that is, there is a sequence of $y$'s that Bob goes through with high probability). To do this, consider a certain input that Bob tries. We will bound the probability that there are two values $y$ and $y'$ such that both $y$ and $y'$ have probability at least $\frac{1}{n}$ of being outputted. Suppose there exists two such $y$ and $y'$ that means that at least one of them (say $y$, without loss of generality) has to be the output of more than $T$ of the $k$ executions with probability more than $\frac{1}{n}$, but less than $\frac{n-1}{n}$. Additionally, we know that the expected number of times that $y$ will be outputted of the $k$ times is more than $\sqrt{ks/n}\log^2 ks/n$ away from $T$ (otherwise Alice will pick a different value of $T$ such that this will be true, and send that value to Bob in a message of type $B$). However, the probability of being more than $\sqrt{ks/n}\log^2 ks/n = \Theta(\left(\frac{s}{b}\right)^{1/4}\log^2 s/b) = \Theta(n^{\eps/4} \log^2 n)$ away from the expectation, by a Chernoff bound, is (asymptotically) less than $\frac{1}{n}$.

Notice also, that by the assumption that the algorithm in $n/s$-concentrated, there will always be an output $y_{\max}$ which is expected to appear at least $\frac{s}{n}$ of the time. Also, since the threshold $T$ is at most $ks/2n$, the probability that $y_{\max}$ appeared fewer than $T$ times is exponentially low in $ks/n = \sqrt{s/b} = \tilde{\Theta}(n^{\epsilon/2})$, and so with high probability there will always exist a $y$ which was outputted on more than $T$ of the executions, so in the second to last step, the multiset will always have an element that appears at least $T_1$ times.

Hence, by a union bound over all inputs that Bob tries, with high probability there will be a single sequence of inputs which Bob goes through (which depends only only on the public thresholds and Alice's input).

We will show that each $y$ generated by Bob is an element in Alice's input with high probability. Notice that the $y$ that Bob picks has appeared more than $T$ times out of $k$, where $T$ is at least $ks/(4n)$. If $y$ is not a valid output then its probability of being outputted is $\frac{1}{n^2}$. The probability it is outputted at least once is at most $\frac{k}{n^2} \le \frac{1}{n}$. Taking a union bound over the inputs that Bob tries (of which there are $n/10$), we get that the probability that there is an invalid $y$ at any point is at most 1/10. So, with probability 9/10, no invalid $y$ is ever outputted.
\end{proof}

\end{proof}

\section{Entropy lower bounds for finding a support element}
Consider the turnstile model of streaming, where a vector $z\in\R^n$ starts out as $0$ and receives updates of the form `increment $z_i$ by 1' or `decrement $z_i$ by 1', and the goal of outputting a nonzero coordinate of $z$. This is a well studied problem and a common randomized algorithm to solve this problem in a small amount of space is known as $\ell_0$ sampling \cite{frahling2008sampling}.  $\ell_0$ sampling uses polylogarithmic space and outputs a uniformly random coordinate from the support of $z$. A natural question one could ask is whether the output of any low space randomized algorithm is necessarily close to uniform, i.e., has high entropy. We answer this affirmatively and show a nearly tight tradeoff between the space needed to solve this problem and the entropy of the output of a randomized algorithm under the assumption that the algorithm is not allowed to output anything outside the support\footnote{We note that using similar ideas to those in Subsection \ref{nonzeroerrorsection}, the zero error requirement could be removed. We omit this adaptation since it is very similar to that of Subsection \ref{nonzeroerrorsection}.}

\begin{theorem}
    Every zero-error randomized algorithm for \FindSupElem~that is $\frac{n}{s}$-concentrated must use $\Omega\left(\frac{s}{\log n}\right)$ space.
\end{theorem}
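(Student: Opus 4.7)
The plan is to mirror the proof of \pref{thm:nuanced-find-rep} by reducing \OWPartRec~directly to \FindSupElem, leveraging the turnstile structure to simulate Bob's ``input'' through stream continuation rather than a separate input.

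First, I would establish the analogue of \pref{claim:find-rep-communication}: a space-$T$, $n/s$-concentrated, zero-error streaming algorithm for \FindSupElem~yields a $T$-bit one-way protocol for the problem where Alice has $S_A \subseteq [n]$ with $|S_A| = 3n/4$, Bob has a set $X \subseteq S_A$, and Bob must output an element of $S_A \setminus X$. The reduction is direct: Alice simulates the streaming algorithm on the stream of $+1$ updates to each coordinate in $S_A$ and sends the resulting state ($T$ bits); Bob continues the stream by appending $-1$ updates on each coordinate in $X$. The resulting turnstile vector is exactly $\mathbf{1}_{S_A \setminus X}$, so by zero-error any non-$\bot$ output lies in $S_A \setminus X$.

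Next, I would adapt the \OWPartRec~reduction of \pref{thm:nuanced-find-rep} verbatim. Alice sends $k = \Theta(n \log n / s)$ independent messages (independent runs of the streaming algorithm on the same $+1$-updates to $S_A$), for a total of $O(Tk)$ bits. Bob simulates a ``family of active states'' exploration: he maintains a found set $X$ (initially empty) and a family of branches; for each branch and each of Alice's $k$ messages, Bob resumes the stream using $-1$ updates on the branch's found elements and runs the algorithm to completion. Any non-$\bot$ output $x$ is added to the branch's found set, spawning a new branch with $X \cup \{x\}$.

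The analysis then follows \pref{thm:nuanced-find-rep}: for each $X \subseteq S_A$, the $n/s$-concentration guarantees an element $f_{S_A}(X) \in S_A \setminus X$ output with probability at least $s/n$ on the input state corresponding to $\mathbf{1}_{S_A\setminus X}$. Setting $X_0 := \emptyset$ and $X_i := X_{i-1} \cup \{f_{S_A}(X_{i-1})\}$ gives a chain that is a function of $S_A$ alone, independent of Alice's message randomness. A union bound over $i \in \{0,\ldots,n/10-1\}$ and over the $k$ independent messages (each failing to produce $f_{S_A}(X_i)$ with probability at most $(1-s/n)^{k}$) shows that with probability $1 - 1/n$ the canonical chain $X_0 \subset X_1 \subset \cdots \subset X_{n/10}$ is contained in Bob's exploration, so Bob recovers $n/10$ elements of $S_A$. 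The total communication $O(Tk) = O(Tn\log n / s)$ must be $\Omega(n)$ by \pref{claim:part-rec-lower-bound}, giving $T = \Omega(s/\log n)$.

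The main subtlety I anticipate is verifying that $f_{S_A}(X)$ is a valid element of $S_A \setminus X$ rather than $\bot$, which is needed both to maintain the invariant $X \subseteq S_A$ during Bob's exploration and to ensure the $n/s$-concentrated output contributes a recoverable element. Since the algorithm must succeed with constant probability on every valid input, zero-error forces the bulk of probability mass onto valid support elements; as long as $|S_A \setminus X_i| \geq 3n/4 - n/10 > 0$ (which holds throughout the construction), the concentrated output cannot be $\bot$ and must lie in $S_A$, inductively preserving $X \subseteq S_A$ and completing the reduction.
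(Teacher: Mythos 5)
Your proposal is correct and matches the paper's proof essentially step for step: the paper reduces to \OWPartRec\ by having Alice stream $+1$ updates on $S_A$, send $\Theta(n\log n/s)$ independent states, and having Bob continue each state with $-1$ updates on recovered elements, with correctness argued via the same canonical chain $X_0 \subset X_1 \subset \cdots$ and union bound as in Theorem~\ref{thm:nuanced-find-rep}. The subtlety you raise at the end --- that zero-error plus sufficiently small failure probability forces the $n/s$-concentrated output to be a genuine support element rather than $\bot$ --- is a real point that the paper's sketch elides, and your handling of it is sound.
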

We only provide a sketch of the proof and omit details since they are nearly identical to the proof of \pref{thm:nuanced-find-rep}.

\begin{proof}[Proof Sketch]
    Let $\mathcal{A}$ be such an algorithm that uses $T$ space. Just like the proof of \pref{thm:nuanced-find-rep}, the way we show this lower bound is by illustrating that $\mathcal{A}$ can be used to obtain an $O\left(\frac{Tn\log n}{s}\right)$-communication protocol for \OWPartRec, which combined with \pref{claim:part-rec-lower-bound} yields the desired result.
    
    For every element $a$ in Alice's input set $A$, she streams `increment $z_a$ by 1' and runs $\Theta\left(\frac{n}{s}\log n\right)$ independent copies of $\mathcal{A}$ on the input. She then sends the states of each these independent runs of $\mathcal{A}$ to Bob, which is at most $\frac{Tn\log n}{s}$ bits, to Bob. Bob maintains a set of states $\mathcal{M}$, initially filled with all of Alice's messages. While he has not yet recovered $n/10$ elements, Bob picks a message $M\in\mathcal{M}$ and recovers $x$ in $A$ using algorithm $\mathcal{A}$. And for each $M\in\mathcal{M}$, Bob resumes $\mathcal{A}$ on state $M$ and streams `decrement $z_x$ by 1' and adds the new state to $\mathcal{M}$, and deletes $M$ from $\mathcal{M}$.
    
    The proof of correctness for why Bob indeed eventually recovers $n/10$ elements of $A$ is identical to that in the proof of \pref{thm:nuanced-find-rep}, thus giving a protocol for \OWPartRec~and proving the statement.
\end{proof}

We can immediately conclude the following.
\begin{corollary}
    Any zero-error $\log\left(\frac{n}{s}\right)$-entropy randomized algorithm for \FindSupElem~must use $\Omega\left(\frac{s}{\log n}\right)$ space.
\end{corollary}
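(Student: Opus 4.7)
The plan is short: this corollary will follow immediately from the preceding theorem once we verify that being $\log(n/s)$-entropy implies being $(n/s)$-concentrated. The preliminaries already record this relationship abstractly (``any $\log k$-entropy randomized algorithm $\ldots$ is $k$-concentrated''), so in principle no separate argument is needed, but I would include a one-line justification for completeness.

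First I would fix an arbitrary valid input $x$ and write $p$ for the output distribution of $A(x,\cdot)$. The key observation is that if every output $y$ in the support of $p$ satisfied $p(y) < s/n$, then $-\log p(y) > \log(n/s)$ uniformly on the support, forcing
\[
H(A(x,\cdot)) \;=\; \sum_y p(y)\bigl(-\log p(y)\bigr) \;>\; \log(n/s)\sum_y p(y) \;=\; \log(n/s),
\]
contradicting the $\log(n/s)$-entropy hypothesis. Therefore some output $F(x)$ must have probability at least $s/n$, which is exactly the definition of $A$ being $(n/s)$-concentrated. Moreover, zero-error is preserved verbatim since we are only reinterpreting the same algorithm.

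With this observation in hand, I would simply invoke the preceding theorem (every zero-error $(n/s)$-concentrated algorithm for \FindSupElem{} uses $\Omega(s/\log n)$ space) on $A$ to conclude the stated lower bound. There is no real obstacle here: all of the difficulty is concentrated in the theorem being cited (and in turn in the reduction to \OWPartRec{} and the $\Omega(n)$ lower bound of \pref{claim:part-rec-lower-bound}); the entropy-to-concentration conversion is a standard single-line Shannon-entropy calculation.
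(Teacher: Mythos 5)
Your proof is correct and takes exactly the paper's route: the paper treats this corollary as an immediate consequence of the preceding $(n/s)$-concentrated lower bound together with the remark in the preliminaries that any $\log k$-entropy algorithm is $k$-concentrated, and your one-line Shannon-entropy calculation is the intended justification of that remark.
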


\begin{corollary}\label{cor:multi-psd-findsupelem}
    Any zero-error $O\left(\frac{n}{s}\right)$-pseudo-deterministic algorithm for \FindSupElem~must use $\Omega\left(\frac{s}{\log n}\right)$ space.
\end{corollary}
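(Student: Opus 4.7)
The plan is to derive this corollary as an essentially immediate consequence of the preceding theorem on $k$-concentrated algorithms for \FindSupElem, together with the definitional bridge recorded in the preliminaries that any $k$-pseudo-deterministic algorithm is $O(k)$-concentrated. In particular, I would first spell out the quantitative version of that bridge: if $A$ is a zero-error $c\cdot(n/s)$-pseudo-deterministic algorithm for some constant $c$, then for every input $x$ there is a set $S(x)$ of size at most $c(n/s)$ such that $\Pr_r[A(x,r)\in S(x)] \ge \frac{c(n/s)+1}{c(n/s)+2}$, and by pigeonhole the mode of $A(x,\cdot)$ restricted to $S(x)$ is output with probability at least
\[
\frac{c(n/s)+1}{\bigl(c(n/s)+2\bigr)\cdot c(n/s)} \;=\; \Omega(s/n),
\]
which is exactly the condition of being $O(n/s)$-concentrated after absorbing the multiplicative constant into the big-$O$.

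Second, I would check that the zero-error hypothesis is preserved: since every execution of $A$ on every valid input either outputs a valid support element (a nonzero coordinate of the turnstile vector) or $\bot$, the same is true when we merely view $A$ as an $O(n/s)$-concentrated algorithm. With this in hand, I would directly invoke the preceding theorem (the $k$-concentrated lower bound for \FindSupElem) applied with $k = O(n/s)$, which yields the space lower bound of $\Omega(s/\log n)$.

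The only real ``obstacle'' is the bookkeeping of constants in the reduction from pseudo-determinism to concentration, and this is purely definitional; no new reduction, sketching argument, or message-passing construction is needed. All the heavy lifting already occurs in the preceding theorem, whose proof sketch builds the protocol that simulates $\Theta((n/s)\log n)$ independent copies of the streaming algorithm to obtain an $O(Tn(\log n)/s)$-communication protocol for \OWPartRec~and then invokes \pref{claim:part-rec-lower-bound}. In short, the proof is a one-line deduction: $O(n/s)$-pseudo-deterministic $\Rightarrow$ $O(n/s)$-concentrated $\Rightarrow$ $\Omega(s/\log n)$ space.
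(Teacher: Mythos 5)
Your proof is correct and takes the same route the paper intends: the corollary is declared an immediate consequence of the $k$-concentrated lower bound via the definitional bridge that multi-pseudo-determinism implies concentration. Your explicit pigeonhole computation of the mode probability and the note that zero-error is preserved are exactly the (elementary) bookkeeping the paper leaves implicit.
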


This lower bound is also tight up to polylogarithmic factors due to an algorithm nearly identical to the one from \pref{thm:find-rep-low-entropy-algo}. In particular, we have:
\begin{theorem}
For all $s$, there exists a zero-error randomized algorithm for \FindRep~using $O(s)$ space that is $O\left(\frac{n}{s}\right)$-concentrated.
\end{theorem}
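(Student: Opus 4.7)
The plan is to follow the template of Theorem~\ref{thm:find-rep-low-entropy-algo}: build a cheap zero-error base algorithm whose output distribution on each input is concentrated enough on a single canonical value, then boost concentration by running many parallel copies and deterministically collapsing their outputs via $\min$. (The section makes clear the statement is really about \FindSupElem, and that the ``$O(s)$'' should be read as $\widetilde{O}(s)$, matching the polylogarithmic slack flagged in the paragraph introducing the theorem.)

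First I would define a base routine $\mathcal{A}$ as follows: sample a coordinate $j\in[n]$ uniformly at random, and maintain a single signed counter $c$ initialised to $0$; on each turnstile update to some coordinate $a$, update $c$ if and only if $a=j$. At the end of the stream, output $j$ if $c\neq 0$ and $\bot$ otherwise. Since $c$ equals the true final value of $z_j$, the algorithm $\mathcal{A}$ is zero-error, and storing $j$ together with $c$ costs $O(\log n)$ bits under the usual assumption $m=\poly(n)$.

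Next I would run $k:=\Theta(s\log n)$ independent copies of $\mathcal{A}$ in parallel and define the algorithm's output to be $\bot$ if every copy returns $\bot$, and otherwise the minimum among the non-$\bot$ outputs. The total space is $O(s\log^2 n)=\widetilde{O}(s)$, and zero error is preserved because the minimum of a collection of genuine support elements is itself a support element. For the concentration analysis, let $S$ be the final support of $z$; if $S=\emptyset$ the algorithm is deterministically $\bot$, so assume $S\neq\emptyset$ and set $F(z):=\min S$. The overall output equals $F(z)$ as soon as at least one of the $k$ copies picks $j=\min S$, which occurs with probability $1-(1-1/n)^{k}=\Omega(s\log n/n)\ge s/n$. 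Hence $\Pr[A(z)=F(z)]\ge s/n$, matching the definition of $(n/s)$-concentration.

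I do not anticipate a genuine obstacle: the proof is a transcription of the \FindRep argument, swapping ``pick a stream position and hope the element recurs'' for ``pick a coordinate and read off its turnstile counter.'' The only point requiring any real care is the space accounting, since tracking a coordinate in the turnstile model requires a $O(\log m)$-bit counter rather than a constant-size flag; but with $m=\poly(n)$ this just contributes an extra logarithmic factor that is absorbed into the $\widetilde{O}(s)$ bound.
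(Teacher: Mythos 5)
Correct, and this is precisely the algorithm the paper has in mind: the paper only says the routine is ``nearly identical'' to the one for \FindRep~in \pref{thm:find-rep-low-entropy-algo} and does not spell out the adaptation, which you have supplied, while also rightly flagging the \FindRep/\FindSupElem~typo in the statement and the $O(s)$-versus-$\widetilde{O}(s)$ slack that the preceding sentence already acknowledges. One remark worth making: your concentration argument is genuinely cleaner than a literal port of the \FindRep~analysis, and for a structural reason --- there, the base routine samples a stream \emph{position}, so the output distribution is frequency-weighted and the paper must route through the threshold $a$ and a pigeonhole step; here, sampling a \emph{coordinate} makes the non-$\bot$ output uniform over the support $S$, so the minimum is exactly $\min S$ whenever any copy draws it, and you get a single canonical target $F(z)=\min S$ with mass at least $1-(1-1/n)^{\Theta(s\log n)}\ge s/n$ directly, with no pigeonhole needed.
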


\section{Space complexity of pseudo-deterministic $\ell_2$-norm estimation}
In this section, we once again consider the pseudo-deterministic complexity of $\ell_2$ norm estimation in the \emph{sketching model}.  The algorithmic question here is to design a distribution $\calD$ over $s\times n$ matrices along with a function $f:\mathbb{R}^s\rightarrow\mathbb{R}$ so that for any $x\in\mathbb{R}^n$:
\[
    \Pr_{\bS\sim\calD}[f(\bS x)\not\in\left[\frac{1}{\alpha}\|x\|_2,\alpha\|x\|_2\right] \le \frac{1}{\poly(n)}.
\]
Further, we want $f(\bS x)$ to be a pseudo-deterministic function; i.e., we want $f(\bS x)$ to be a unique number with high probability.

\begin{theorem}\label{thm:l2-main}
    The pseudo-deterministic sketching complexity of $\ell_2$ norm estimation is $\Omega(n)$.
\end{theorem}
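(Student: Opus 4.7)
My plan is to reduce to the adaptive-robustness lower bound of Hardt and Woodruff \cite{hardt2013robust} against low-dimensional linear sketches for $\ell_2$-norm estimation. The crucial leverage provided by pseudo-determinism is that it effectively derandomizes the post-processing, so the output behaves like a fixed function of $x$; this collapses the distinction between adaptive and non-adaptive queries and lets me plug the HW attack into our setting.

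First, by independent parallel repetition of $(\calD, f)$ a polylogarithmic number of times and taking a plurality vote, I may assume (at the cost of a $\poly\log n$ factor in the sketching dimension) that there is a fixed function $F:\R^n\to\R$ such that $\Pr_{\bS\sim\calD}[f(\bS x) = F(x)] \ge 1 - 1/\poly(n)$ for every $x$. Combined with the approximation guarantee and a union bound, $F$ itself is a deterministic $\alpha$-approximation to $\|\cdot\|_2$: $F(x)\in[\|x\|_2/\alpha,\alpha\|x\|_2]$ for all $x\in\R^n$. Next, I invoke Hardt--Woodruff: for any linear sketch of dimension $o(n)$ paired with any post-processing that $\alpha$-approximates $\|\cdot\|_2$ with high probability on each fixed input, there is a randomized adversary $\calA$ that makes $\poly(n)$ adaptive oracle queries to the sketch and outputs a vector $x^*$ on which the sketch's answer fails to be an $\alpha$-approximation with constant probability over $\bS$ and $\calA$'s coins. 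I run $\calA$ \emph{in simulation}, answering each query $x$ by the deterministic value $F(x)$; since $F$ depends only on $(\calD, f)$, the transcript and output $x^*$ of this simulation are random variables depending only on $\calA$'s internal randomness, and in particular are independent of $\bS$.

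Finally, I couple the simulated and real attacks. By a union bound over the $\poly(n)$ queries $\calA$ makes, together with pseudo-determinism from the first step, $f(\bS x) = F(x)$ on every query with probability $1 - 1/\poly(n)$ over $\bS$ and $\calA$'s randomness; on this event the simulated attack and the real attack produce the same $x^*$, so HW guarantees that $f(\bS x^*)$ fails to $\alpha$-approximate $\|x^*\|_2$ with constant probability. On the other hand, $x^*$ is a function of $\calA$'s randomness alone, so conditioning on $\calA$'s coins and applying pseudo-determinism to the (now fixed) vector $x^*$ gives $f(\bS x^*) = F(x^*) \in [\|x^*\|_2/\alpha, \alpha\|x^*\|_2]$ with high probability over $\bS$. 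These two conclusions contradict, forcing the sketching dimension to be $\Omega(n)$ (up to the polylogarithmic slack absorbed in the first step). The main obstacle is invoking Hardt--Woodruff in a precise form where the adversary interacts with the sketch only via oracle queries and its failure guarantee holds jointly over $\bS$ and $\calA$'s randomness; once that form is available, the simulation via the deterministic $F$ is an essentially free coupling that closes the argument.
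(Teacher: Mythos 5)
Your proposal is correct and follows essentially the same route as the paper: upgrade the sketch to a high-probability pseudo-deterministic one (so there is a fixed $g$ with $f(\bS x)=g(x)$ w.h.p.), invoke the Hardt--Woodruff adaptive attack, run it in simulation against the deterministic oracle $g$, and then derive a contradiction by double-counting the probability that the sketch agrees with $g$ on the query transcript and on the adversary's output. The paper packages this coupling as two estimates of a single event $\mathcal{E}$, but the underlying argument is the one you give.
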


% \begin{theorem}\label{thm:l2-main}
%     Any \emph{strong} pseudo-deterministic algorithm for $\ell_2$ norm estimation needs $\Omega\left(\frac{n}{\log m}\right)$ space.
% \end{theorem}

% Recall that strong pseudo-deterministic algorithms were ones that output the same value w.h.p. for any stream that results in the same vector.

% Two main ingredients in our proof are a characterization of all turnstile streaming algorithms as linear sketching algorithms from \cite{li2014turnstile}, and a low query protocol to find an input on which low space linear sketches for approximating the $\ell_2$ norm fail based on \cite{hardt2013robust}.

% \snote{Carefully check theorem statement below.  Check whether $k\log m$ is actually correct.}
% \begin{theorem}[Main theorem of {\cite{li2014turnstile}}]\label{thm:stream-to-sketch}
%    Let $g:\{-m,\dots,m\}^n\rightarrow\R$ be a function on the space of vectors achievable in a turnstile stream where all entries are bounded by $m$ in magnitude.  Suppose there is an algorithm that uses $k$ bits of memory that correctly outputs $g(x)$ with probability at least $9/10$ when $x$ is the vector in the turnstile stream, then there is a function $f$ and a random matrix $\bS$ with entries bounded in magnitude by $O(n\poly\log(m))$ of dimension $O(k\log m\log n)\times n$ such that
%    \[
%        f(\bS x) = g(x)
%    \]
%    with probability at least $1-\frac{1}{n^2}$.
%\end{theorem}
The following query problem is key to our lower bound.
\begin{definition}[$\ell_2$ adaptive attack]\label{def:adaptive-attack}
    Let $\alpha > 0$ be some constant. Let $S$ be an  $s\times n$ matrix with real-valued entries and $f:\R^s \rightarrow \R$ be some function.  Now, consider the query model where an algorithm is allowed to specify a vector $x\in\R^n$ as a query and is given $f(Sx)$ as a response.  The goal of the algorithm is to output $y$ such that
    \[
        f(Sy)\notin\left[\frac{1}{\alpha}\|y\|_2,\alpha\|y\|_2\right]
    \]
    in as few queries as possible.  We call this algorithmic problem the \emph{$\ell_2$-adaptive attack problem}.
\end{definition}

We use a theorem on adaptive attacks on $\ell_2$ sketches proved in \cite{hardt2013robust}.
\begin{theorem}\label{thm:adaptive-attack}
    There is a $\poly(n)$-query protocol to solve the $\ell_2$ adaptive attack problem with probability at least $9/10$, i.e., the problem in \pref{def:adaptive-attack} when $s=o(n)$.
\end{theorem}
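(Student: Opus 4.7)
The plan is to follow the adaptive-attack framework of \cite{hardt2013robust}, adapted to the setting where the attacker only observes the scalar $f(Sx)$ rather than the full sketch $Sx$. The crucial structural fact is that $S$ has rank at most $s = o(n)$, so its kernel is an $\Omega(n)$-dimensional subspace; hence there exist vectors $v$ with $Sv = 0$, for which the sketch is forced to report the single value $f(0)$ regardless of $\|v\|_2$. If the attacker can find even one such $v$ and scale it freely, then at least one query of the form $cv$ (for a suitably chosen positive $c$) must fail the $\alpha$-approximation requirement, since the fixed real $f(0)$ cannot lie in $[c\|v\|_2/\alpha,\alpha c\|v\|_2]$ for every $c$.

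The attack proceeds in two phases. In the first phase, make $T=\poly(n)$ adaptive Gaussian queries $\boldsymbol{g}_1,\dots,\boldsymbol{g}_T\sim\mathcal{N}(0,I_n)$, together with auxiliary scaled queries $\lambda\boldsymbol{g}_i$ for geometrically increasing $\lambda\in\{2,4,8,\dots\}$. Since the sketch must approximate $\|\lambda\boldsymbol{g}_i\|_2$ to within an $\alpha$-factor, the profile of responses $\{f(\lambda S\boldsymbol{g}_i)\}_\lambda$ reveals whether $S\boldsymbol{g}_i$ is negligible: if it is, every response is pinned at $f(0)$; otherwise the responses are forced to grow linearly in $\lambda$. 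Using this scaling test as a primitive, the attacker iteratively builds an approximately orthonormal basis $v_1,\dots,v_k$ spanning a subspace close to $\ker S$, by Gram--Schmidt orthogonalization against previously identified directions and retaining those Gaussian samples whose scaling profiles indicate a substantial near-kernel component.

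In the second phase, the attacker constructs a candidate bad vector $y$ as a large scalar multiple of a vector $v$ lying in the identified near-kernel subspace. Because $Sv \approx 0$, the response $f(Sy)$ is close to $f(0)$, yet $\|y\|_2$ may be taken arbitrarily large; choosing $\|y\|_2 > \alpha^2\cdot(|f(0)|+1)$ forces $f(Sy)\notin[\|y\|_2/\alpha,\alpha\|y\|_2]$. A union bound over the $\poly(n)$ queries, combined with Gaussian concentration on the kernel component of a random Gaussian, yields overall success with probability at least $9/10$.

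The main obstacle is the information bottleneck: each query returns only a single real number, and one that is only promised to approximate $\|x\|_2$ in the first place. Extracting a direction in $\ker S$ from such low-bandwidth feedback requires the Hardt--Woodruff argument that interleaves Gaussian anti-concentration, scaling tests, and inductive orthogonalization across $\poly(n)$ rounds, all while controlling error propagation so that the final attack vector achieves the desired $\alpha$-factor violation with the stated probability. Making this propagation-control precise---in particular, arguing that the ``near-kernel'' subspace extracted from scalar queries is close enough to a true kernel that a single scaled query in it suffices---is the technical heart of the proof.
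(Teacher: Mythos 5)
The paper does not prove this theorem: it is a black-box citation of Hardt and Woodruff \cite{hardt2013robust}, invoked as an off-the-shelf ingredient in the proof of Theorem~\ref{thm:l2-main}. There is therefore no in-paper argument to compare against, and your proposal must be judged as an independent reconstruction of the cited result.

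Your endgame is right and does match the real attack: since $\mathrm{rank}(S) \le s = o(n)$, the kernel of $S$ has dimension $\Omega(n)$, and given a unit vector $v$ with $Sv = 0$ (or $\|Sv\|$ tiny), the response $f(S(cv))$ stays pinned near $f(0)$ while $\|cv\|_2 = c$ grows without bound, so a suitably large $c$ yields the required $\alpha$-factor violation.

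The gap is in phase~1, the routine that is supposed to produce such a $v$. You propose to query raw Gaussians $\boldsymbol{g}_i \sim \mathcal{N}(0,I_n)$ together with geometric rescalings $\lambda\boldsymbol{g}_i$, and to ``retain those Gaussian samples whose scaling profiles indicate a substantial near-kernel component.'' But no Gaussian sample ever has such a profile. For $\boldsymbol{g}\sim\mathcal{N}(0,I_n)$, the projection of $\boldsymbol{g}$ onto $\mathrm{row}(S)$ is an $s$-dimensional standard Gaussian, whose norm is sharply concentrated around $\sqrt{s}$ identically across draws; $S\boldsymbol{g}$ is never negligible and never zero. Moreover, $\lambda\boldsymbol{g}_i$ for predetermined $\lambda$ is still a nonadaptive query, so a correct estimator returns $\approx \lambda\|\boldsymbol{g}_i\|_2$ with high probability on all of them, and the scaling test reports linear growth on \emph{every} sample. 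It carries no signal distinguishing directions near $\ker S$, so the Gram--Schmidt loop is never seeded and phase~1 cannot get off the ground.

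The Hardt--Woodruff attack uses a genuinely different primitive: a \emph{conditional second moment} (correlation-finding) step. One queries Gaussians, records the scalar responses, and forms the empirical covariance of $\boldsymbol{g}$ \emph{conditioned} on the response falling in a chosen range. Because the response is a function of $P\boldsymbol{g}$ alone (where $P$ projects onto $\mathrm{row}(S)$) while $(I-P)\boldsymbol{g}$ is independent of the response, the conditional covariance deviates from the identity only along directions inside $\mathrm{row}(S)$; its extreme eigenvector is therefore correlated with $\mathrm{row}(S)$. The attack extracts that direction, projects it out of subsequent queries, and recurses, learning $\mathrm{row}(S)$ accurately enough to place a vector in its orthogonal complement --- i.e., in $\ker S$ --- and then scales that vector as in your phase~2. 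The information the attacker exploits is correlational and spectral, not the per-sample scaling behavior; that conditional-covariance machinery is the idea missing from your proposal, and it is a different mechanism rather than a detail to be tightened.
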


\begin{proof}[Proof of \pref{thm:l2-main}]
Suppose $\calD$ is a distribution over $s\times n$ sketching matrices and $f$ is a function mapping $\R^s$ to $\R$ with the property that the pair $(\calD,f)$ gives a pseudo-deterministic sketching algorithm for $\ell_2$ norm estimation.  Henceforth, we use $\bS$ to denote a random matrix sampled from $\calD$.  Then there is a function $g:\R^n\rightarrow \R$ such that:
\begin{enumerate}
    \item $g$ is an $\alpha$-approximation of the $\ell_2$ norm.
    \item On every input $x$, $f(\bS x)=g(x)$ with probability at least $1-\frac{1}{n^c}$ for some constant $c$.
\end{enumerate}
We will show that $s$ must be $\Omega(n)$ by deducing a contradiction when $k = o(n)$.  Let $r$ be a parameter to be chosen later.  Let $\bx^{(1)},\bx^{(2)},\dots,\bx^{(r)}$ be the (random) sequence of vectors in $\R^n$ obtained by the adaptive query protocol from \pref{thm:adaptive-attack} based on responses $g(\bx^{(0)}),\dots,g(\bx^{(r)})$ where $r=\poly(n)$, and let $\by$ be the (random) output of the protocol.  Note that the guarantee that $r=\poly(n)$ hinges on assuming $s=o(n)$.  From the guarantees of \pref{thm:adaptive-attack}, for any fixed matrix $B$ and function $h$ such that $h(B\bx^{(i)}) = g(\bx^{(i)})$ for all $i$, it is true with probability at least $9/10$ that $h(B\by) \ne g(\by)$.  On the other hand, for any sequence of $r+2$ fixed vectors $v_0,\dots,v_{r+1}$, $f(\bS v_i) = g(v_i)$ for all $i$ with probability at least $1-\frac{1}{\poly(n)}$.  Call the event $\{f(\bS\bx^{(0)})=g(\bx^{(0)}),\dots,f(\bS\bx^{(r)})=g(\bx^{(r)}),f(\bS\by)=g(\by)\}$ as $\mathcal{E}$.  Let $p_S$ be the probability density function of $\bS$ and let $p_T$ be the probability density function of $(\bx^{(0)},\dots,\bx^{(r)},\by)$.  This results in the following two estimates of $\Pr[\mathcal{E}]$.

On the one hand,
\begin{align*}
    \Pr[\mathcal{E}] &= \int_{\bS}\Pr[\mathcal{E}|\bS]p_S(\bS)\\
    &\le \int_{\bS}\frac{1}{10}p_S(\bS)\\
    &= \frac{1}{10},
\end{align*}
and on the other hand,
\begin{align*}
    \Pr[\mathcal{E}] &= \int_{\bx^{(0)},\dots,\bx^{(r)},\by}\Pr[\mathcal{E}|\bx^{(0)},\dots,\bx^{(r)},\by]p_T(\bx^{(0)},\dots,\bx^{(r)},\by)\\
    &\ge \int_{\bx^{(0)},\dots,\bx^{(r)},\by}\left(1-\frac{1}{\poly(n)}\right)p_T(\bx^{(0)},\dots,\bx^{(r)},\by)\\
    &= 1-\frac{1}{\poly(n)}.
\end{align*}

The contradiction arises since $\Pr[\mathcal{E}]$ cannot simultaneously be at least $1-\frac{1}{\poly(n)}$ and at most $\frac{1}{10}$, and hence $s$ cannot be $o(n)$.

\end{proof}

\begin{corollary}   \label{cor:tight-ell_2-lower}
    For any constant $\delta > 0$, any $(2-\delta)$-concentrated sketching algorithm that where the sketching matrix is $s\times n$ can be turned into a pseudo-deterministic one by running $\log n$ independent copies of the sketch and outputting the majority answer.  Thus, as an upshot of \pref{thm:l2-main} we obtain a lower bound of $\Omega\left(\frac{n}{\log n}\right)$ on $(2-\delta)$-concentrated algorithms for pseudo-deterministic $\ell_2$-norm estimation in the sketching model.
\end{corollary}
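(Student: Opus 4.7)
The plan is to argue in two steps. First, I would establish the amplification claim: if $A$ is a $(2-\delta)$-concentrated sketching algorithm for $\ell_2$ norm estimation using an $s\times n$ sketch, then running $k = \Theta(\log n)$ independent copies and outputting the majority answer yields a pseudo-deterministic sketching algorithm whose sketch has dimension $O(s\log n)$. Second, I would combine this with \pref{thm:l2-main}, which gives an $\Omega(n)$ lower bound on the sketch dimension needed by any pseudo-deterministic sketching algorithm for $\ell_2$ norm estimation, to conclude $s\log n = \Omega(n)$, i.e., $s = \Omega(n/\log n)$.

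For the amplification, the key observation is that $(2-\delta)$-concentration means that for every input $x$ there is some value $F(x)$ with
\[
    \Pr_r[A(x,r) = F(x)] \ge \frac{1}{2-\delta} = \frac{1}{2} + \eta,
\]
where $\eta = \eta(\delta) > 0$ is a constant. Hence if we run $k$ independent copies and count how many output $F(x)$, the expected count is at least $(1/2+\eta)k$, and a standard Chernoff bound shows that for $k = \Theta(\log n)$ with a suitable constant, the count exceeds $k/2$ except with probability $1/n^{10}$, so the majority answer equals $F(x)$ with high probability. The resulting algorithm has a sketching matrix obtained by vertically stacking $k$ independent copies of $\bS$, so it has dimension $ks \times n = O(s\log n)\times n$.

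One subtle point I would address is that $F(x)$ really is a valid approximation to $\|x\|_2$ (and not some spurious value on which the algorithm happens to concentrate). Since a valid approximation algorithm outputs a correct $\alpha$-approximation with probability at least, say, $2/3$, and outputs $F(x)$ with probability at least $1/(2-\delta) > 1/2$, the two events must intersect, so $F(x)$ itself is a valid $\alpha$-approximation of $\|x\|_2$. This shows the amplified algorithm meets both requirements of \pref{thm:l2-main}: with high probability it gives a valid approximation, and with high probability that approximation is the unique value $F(x)$.

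The main obstacle here is minimal; it is essentially a Chernoff amplification plus the observation about overlap of high-probability events. Given that, the bound follows directly: applying \pref{thm:l2-main} to the amplified algorithm forces $s\log n = \Omega(n)$, yielding the stated $\Omega(n/\log n)$ lower bound on any $(2-\delta)$-concentrated sketching algorithm for $\ell_2$-norm estimation.
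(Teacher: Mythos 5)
Your proposal is correct and follows the same route the paper intends: the corollary's statement is essentially its own proof sketch (run $\log n$ independent copies, take the majority, then invoke Theorem~\ref{thm:l2-main}), and you have correctly filled in the supporting details — the Chernoff amplification from $\Pr[A(x,r)=F(x)] \ge \frac{1}{2-\delta} = \frac{1}{2}+\eta$ to a high-probability majority, the observation that $F(x)$ must itself be a valid $\alpha$-approximation because the concentration event and the correctness event have combined probability exceeding $1$, and the bookkeeping that stacking $\Theta(\log n)$ independent $s\times n$ sketches gives an $O(s\log n)\times n$ sketching matrix so that $s\log n = \Omega(n)$.
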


In contrast to \pref{cor:tight-ell_2-lower} which says that $(2-\delta)$-concentrated algorithms for $\ell_2$ estimation in the sketching model need near linear dimension, we show that there is an $O(\poly\log n)$-dimension $(2+\delta)$-concentrated sketching algorithm to solve the problem, thus exhibiting a `phase transition'.
\begin{theorem} \label{thm:low-conc-alg-l2}
    There is a distribution $\calD$ over $s\times n$ matrices and a function $f:\R^s\rightarrow\R$ when $s=O(\poly\log n)$
    For every constant $\delta > 0$, there is an $O(\poly(\log n,\log m))$-space $(2+\delta)$-concentrated sketching algorithm for $\ell_2$-norm estimation. 
\end{theorem}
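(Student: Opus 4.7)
The plan is to couple a standard $\ell_2$ sketch with a coarse multiplicative quantization, after which a pigeonhole argument yields the required concentration.  Concretely, I would first invoke any known $(1+\eps)$-approximate $\ell_2$ sketch (say the AMS sketch or a Johnson--Lindenstrauss based estimator) to obtain a distribution $\calD_0$ over $s_0 \times n$ matrices with $s_0 = O(\log n)$ rows, together with an estimator $g:\R^{s_0} \to \R_{\ge 0}$, such that for every $x \in \R^n$ the estimate $g(\bS x)$ lies in $[(1-\eps)\|x\|_2,(1+\eps)\|x\|_2]$ with probability at least $1 - 1/n^3$ over $\bS \sim \calD_0$.  Setting $\eps := \delta/10$ and noting that the entries of $\bS x$ fit in $O(\log n + \log m)$ bits, this sketch costs only $O(\poly(\log n,\log m))$ bits.

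Next I would quantize $g$ to a multiplicative grid of ratio $\tau := 1 + \delta/3$ and define the final estimator
\[
    f(y) := \begin{cases} 0 & \text{if } g(y)=0,\\ \tau^{\lfloor \log_\tau g(y) \rfloor} & \text{otherwise.}\end{cases}
\]
Since $f(y) \in [g(y)/\tau, g(y)]$, the output remains an $O(1)$-approximation of $\|x\|_2$ on the good event.  The pair $(\calD_0, f)$ is then the claimed sketching algorithm, and its space usage is $O(\poly(\log n,\log m))$.

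The core step is the concentration analysis.  Fix $x$ with $\|x\|_2 > 0$ and write $m := \log_\tau \|x\|_2$.  On the event (of probability at least $1 - 1/n^3$) that $g(\bS x)$ is a $(1\pm\eps)$-approximation, the deviation $\eta := \log_\tau g(\bS x) - m$ satisfies $|\eta| \le \log_\tau(1+\eps) < 1/2$, where the last inequality follows from the choice $\eps = \delta/10$ being sufficiently small relative to $\log \tau = \log(1+\delta/3)$.  Since $|\eta| < 1/2$, the integer $\lfloor m + \eta \rfloor$ belongs to one of the two-element sets $\{\lfloor m \rfloor, \lfloor m \rfloor + 1\}$ (when $\{m\} \ge 1/2$) or $\{\lfloor m \rfloor - 1, \lfloor m \rfloor\}$ (otherwise).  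Hence the output $f(\bS x)$ lies in a fixed two-element set $T(x)$ with probability at least $1 - 1/n^3$, and by pigeonhole some $v \in T(x)$ has $\Pr[f(\bS x) = v] \ge \tfrac{1}{2}(1 - 1/n^3) \ge 1/(2+\delta)$ for all sufficiently large $n$, which is exactly the $(2+\delta)$-concentration property.

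I do not foresee a serious technical obstacle; the argument is essentially a one-step pigeonhole applied after the right quantization.  The only point that needs care is making the $\ell_2$-sketch error $\eps$ small enough relative to $\log \tau$ so that a single $(1+\eps)$-approximation cannot straddle two grid boundaries --- the condition $\log_\tau(1+\eps) < 1/2$ suffices and is easily arranged for any constant $\delta > 0$.
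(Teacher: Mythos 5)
Your proof is correct and follows essentially the same approach as the paper: run a standard $(1+\eps)$-accurate $\ell_2$ sketch of polylogarithmic dimension, then discretize the estimate to a coarse multiplicative grid so that the $(1\pm\eps)$-uncertainty interval straddles at most one grid boundary, leaving at most two candidate outputs and hence $(2+\delta)$-concentration by pigeonhole. The paper implements the rounding by truncating to the top $O(\log(1/\eps))$ significant bits while you round to powers of $\tau=1+\delta/3$; these are the same idea phrased differently, and your inequality $\log_\tau(1+\eps)<1/2$ is precisely the condition the paper enforces by taking its sketch error ($\eps^4$) much smaller than its grid resolution.
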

\begin{proof}
    Let the true $\ell_2$ norm of the input vector be $r$.  Run the classic sketching algorithm of \cite{alon1999space} for randomized $\ell_2$ norm estimation with error $\min\{1/2^{20},\eps^4\}$ and failure probability $\frac{1}{\poly(n)}$ where $(1+\eps)$ is the desired approximation ratio.  This uses a sketch of dimension $O(\poly\log n)$.  Now, we describe the function $f$ we use.  Take the output of the sketching algorithm of \cite{alon1999space} and return the number obtained by zeroing out all its bits beyond the first $\max\{2\log\left(\frac{1}{\eps}\right),5\}$ significant bits.\footnote{The parameters $1/2^{20},5$ and $\eps^4$ are chosen purely for \emph{safety} reasons}  First, the outputted number is a $(1+\eps)$ approximation.  Further, for each input, the output is one of two candidates with probability $1-\frac{1}{\poly(n)} > 1-\delta$ for every constant $\delta$.  This is because \cite{alon1999space} produces a $(1+\eps^4)$-approximation to $r$, and there are only two candidates for the $2\log\left(\frac{1}{\eps}\right)$ most significant bits of any real number that lies in an interval $[(1-\eps^4)r,(1+\eps^4)r]$.
\end{proof}

\section{Pseudo-deterministic Upper Bounds}

%On the other hand, the pseudo-deterministic space complexity of a decision problem is easily seen to be equal to its randomized space complexity.

%Thus, the next question is whether there are natural streaming problems for which the pseudo-deterministic space complexity lies strictly in between the deterministic and randomized space complexity.  We give multiple natural problems for which such a separation holds.

%\onote{I would probably not focus on the 2-way separation in the intro discussion, but just say something like "until here we focused on lower bounds. now we do some upper bounds"}

\subsection{Finding a nonzero row}
Given updates to an $n\times d$ matrix $A$ (where we assume $d\le n$) that is initially $0$ in a turnstile stream such that all entries of $A$ are always in range $[-n^3,n^3]$, the problem \NonzeroRow~is to either output an index $i$ such that the $i$th row of $A$ is nonzero, or output \texttt{none} if $A$ is the zero matrix. 

\begin{theorem}
The randomized space complexity for \NonzeroRow~is $\wt{\Theta}(1)$, the pseudo-deterministic space complexity for \NonzeroRow~is $\wt{\Theta}(n)$, and the deterministic space complexity for \NonzeroRow~is $\wt{\Theta}(nd)$.
\end{theorem}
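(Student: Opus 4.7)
The plan is to establish the four bounds packaged into the theorem one at a time: the deterministic $\wt{O}(nd)$ upper bound is immediate (store the matrix), and the randomized $\Omega(1)$ lower bound is trivial.

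For the \emph{deterministic} $\wt{\Omega}(nd)$ lower bound I would use a state-counting argument restricted to inputs $A\in\{0,1\}^{n\times d}$ streamed as unit updates. If two distinct such matrices $A \ne A'$ led the algorithm to the same internal state, then appending to both streams the same sequence of updates that subtract $A'$ entrywise yields final matrices $A-A'$ (which has a nonzero row since $A\ne A'$) and the all-zero matrix (which forces output \texttt{none}); determinism then produces the same output on both, a contradiction. Hence there must be at least $2^{nd}$ reachable states, giving an $\Omega(nd)$ space lower bound.

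For the \emph{randomized} $\wt{O}(1)$ upper bound I would collapse the problem to $\ell_0$-sampling on a single vector of length $n$: generate $y\in\F_p^d$ (for a prime $p=\poly(n)$) from a short seed using Nisan's PRG (\pref{thm:nisan-prg}), and maintain a turnstile $\ell_0$-sampler for $v := Ay$, converting each incoming update $(i,j,c)$ to $A$ into an update $(i,\,c\cdot y_j)$ to $v$ on the fly (recomputing $y_j$ from the seed as needed). For any fixed nonzero row $A_i$, $\Pr_y[\langle A_i,y\rangle = 0]\le 1/p$; a union bound over the $n$ rows shows that the support of $v$ equals the set of nonzero rows of $A$ with high probability, and the $\ell_0$-sampler then returns one such index. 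Both the PRG seed and the sampler take $\wt{O}(1)$ bits.

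For the \emph{pseudo-deterministic} $\wt{O}(n)$ upper bound I would use the same hashing idea but store $v = Ay$ \emph{exactly} as an $n$-vector over $\F_p$ in $O(n\log n)$ bits, then output the smallest index $i$ with $v_i\ne 0$ (or \texttt{none}). Let $i^{\star}(A)$ be the true smallest nonzero row of $A$. Rows $i<i^{\star}(A)$ are zero and so contribute $v_i = 0$ deterministically, while $\Pr_y[v_{i^{\star}(A)} = 0]\le 1/p = 1/\poly(n)$; hence with high probability the algorithm outputs the canonical answer $i^{\star}(A)$, which depends only on $A$, so the algorithm is pseudo-deterministic. Finally, the \emph{pseudo-deterministic} $\wt{\Omega}(n)$ lower bound is obtained by restricting to $d=1$, which makes \NonzeroRow~identical to \FindSupElem, and invoking \pref{cor:multi-psd-findsupelem}. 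I expect the main technical subtlety to lie in the randomized upper bound, where one must verify that Nisan's PRG fools the combined bounded-space computation of hash generation, stream updates, and $\ell_0$-sampler operations; the pseudo-deterministic variant sidesteps this issue since it stores $v$ explicitly.
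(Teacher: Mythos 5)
Your proposal matches the paper's proof in all essential respects: the randomized upper bound hashes the columns down by maintaining $\ell_0$-sampling on $Ay$ for a pseudorandom $y$ drawn via Nisan's PRG; the pseudo-deterministic upper bound stores $Ay$ exactly and outputs the lexicographically smallest nonzero coordinate; and the pseudo-deterministic lower bound specializes to $d=1$ and invokes \pref{cor:multi-psd-findsupelem}. Two minor comments. First, for the deterministic lower bound the paper reduces to one-way \Equality~(sending the algorithm's state after Alice streams $A$, then Bob subtracts $B$ and checks whether the output is \texttt{none}); your direct state-counting/fooling-set argument over $\{0,1\}^{n\times d}$ is the same argument in a different guise, giving $\Omega(nd)$ which suffices for $\wt{\Theta}(nd)$. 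Second, regarding the ``main technical subtlety'' you flag at the end: the clean way to invoke \pref{thm:nisan-prg}, and the one the paper uses, is \emph{not} to argue that the PRG fools the combined stream-plus-$\ell_0$-sampler computation (which would be awkward, since the stream forces random-access reads of the pseudorandom string and the sampler has its own independent coins), but rather to fix a single row $q = A_i$ and observe that $f(x)=\langle q,x\rangle$ is computable by a left-to-right FSM with $\poly(n,d)$ states; Nisan then gives $\langle A_i,G(\bw)\rangle\neq 0$ with probability $1-O(1/n^3)$ for each nonzero row, and a union bound over rows shows the support of $AG(\bw)$ equals the set of nonzero rows with high probability. With that one adjustment your plan goes through exactly as the paper's does.
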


\begin{proof}

We first will show a randomized $\wt{\Theta}(1)$ space algorithm for the problem, then we will show pseudo-deterministic upper and lower bounds, and then show the deterministic lower bound.

\paragraph{Randomized algorithm for \NonzeroRow.}
A randomized algorithm for this problem is given below.  Note that the version of the algorithm as stated below does not have the desirable $\wt{O}(1)$ space guarantee, but we will show how to use a pseudorandom generator of Nisan \cite{nisan1992pseudorandom} to convert the below algorithm to one that uses low space.
\begin{enumerate}
    \item \label{step:sample-vec} Sample a random $d$-dimensional vector $\bx$ where each entry is an independently drawn integer in $[-n^3, n^3]$ and store it.
    \item \label{step:update-y} Simulate a turnstile stream which maintains $A\bx$.  In particular, consider the $n$-dimensional vector $y$, which is initially $0$, and for each update to $A$ of the form ``add $\Delta$ to $A_{ij}$'', add $\Delta\bx_j$ to $y_i$.  We run an $\ell_0$-sampling algorithm \cite{frahling2008sampling} on this simulated stream updating $y$, and return the output of the $\ell_0$-sampler, which is close in total variation distance to a uniformly random element in the support of $y$.
\end{enumerate}
In the above algorithm, \pref{step:sample-vec} is not low-space as stated.  Before we give a way to perform \pref{step:sample-vec} in $\wt{O}(1)$ space, we prove the correctness of the above algorithm.  Suppose $A_i$ is a nonzero row of $A$, then let $j$ be an index where $A_i$ is nonzero.  Suppose all coordinates of $\bx$ except for the $j$-th coordinate have been sampled, there is at most one value $C$ for $\bx_j$ for which $\langle A_i, \bx\rangle$ is $0$, and there is at most a $1/n^3$ probability that $\bx_j$ equals $C$, which means if $i$ is a nonzero row, then $(A\bx)_i$ is nonzero except with probability at most $1/n^3$.  In fact, by taking a union bound over all nonzero rows we can conclude that the set of nonzero rows and the set of nonzero indices of $A\bx$ are exactly the same, except with probability bounded by $1/n^2$.

Now we turn our attention to implementing \pref{step:sample-vec} in low space.  Towards doing so we use Nisan's pseudorandom generator for space bounded computation in a very similar manner to \cite{indyk2006stable}.

Instead of sampling $3d\log n+1$ bits to store $\bx$, we sample and store a uniformly random seed $\bw$ of length $O(\poly\log(n,d))$ and add $\Delta G(\bw)_j$ to $y_i$ when an update ``add $\Delta$ to $A_{ij}$'' is received, where $G$ is the function from \pref{thm:nisan-prg} that maps the random seed to a sequence $3d\log n+1$ bits.  To prove the algorithm is still correct if we use the pseudorandom vector $G(\bw)$ instead of the uniformly random vector $\bx$, we must show that when $A_i$ is nonzero, then $\langle A_i,G(\bw)\rangle$ is nonzero with probability at least $1-O(1/n^3)$.  Towards this, for a fixed $d$-dimensional vector $q$, consider the following finite state machine.  The states are labeled by pairs $(i,a)$ where $i$ is in $\{0,1,\dots,d\}$ and $a$ is in $[-n^6d,n^6d]$.  The FSM takes a $d$-dimensional vector $r$ as input, starts at state $(0,0)$, and transitions from state $(i,a)$ to $(i+1,a+q_{i+1}\cdot r_{i+1})$ until it reaches a state $(d,\ell)$.  The FSM then outputs $\ell$.  This establishes that for a fixed $q$, the function $f(x)\coloneqq\langle q, x\rangle$ is computable by an FSM on $\poly(d,n)$ states, and hence from \pref{thm:nisan-prg}, $f(\bx)$ and $f(G(\bw))$ are $1/dn^6$ close in total variation distance, which means when $A_i$ is nonzero, then $\langle A_i,G(\bw)\rangle$ is nonzero except with probability bounded by $O(1/n^3)$.

%\onote{also I would maybe separate things more into claims/lemmas and theorem statements}

\paragraph{A pseudo-deterministic algorithm and lower bound for \NonzeroRow}  The pseudo-deterministic algorithm is very similar to the randomized algorithm from the previous section.
\begin{enumerate}
    \item \label{step:sample-vec} Sample a random $d$-dimensional vector $\bx$ where each entry is an independently drawn integer in $[-n^3, n^3]$.  Store $\bx$ and maintain $A\bx$.
    \item Output the smallest index $i$ such that $(A\bx)_i$ is nonzero.
\end{enumerate}
Storing $\bx$ takes $O(d\log n)$ space, and maintaining $A\bx$ takes $O(n\log n)$ space.  Recall from the discussion surrounding the randomized algorithm that the set of nonzero indices of $A\bx$ and the set of nonzero rows were equal with probability $1-1/n^2$, which establishes correctness of the above pseudo-deterministic algorithm.  The space complexity is thus $O((d+n)\log n)$, which is equal to $O(n\log n)$ from the assumption that $d\le n$.

A pseudo-deterministic lower bound of $\wt{\Omega}(n)$ follows immediately from \pref{cor:multi-psd-findsupelem} since \NonzeroRow~specialized to the $d=1$ case is the same as \FindSupElem.

\paragraph{Lower Bound for deterministic algorithms.}  
An $\Omega(nd\log n)$ bit space lower bound for deterministic algorithms follows from a reduction to the communication complexity problem of \Equality.  Alice and Bob are each given $nd\log n$ bit strings as input, which they interpret as $n\times d$ matrices, $A$ and $B$ respectively, where each entry is a chunk of length $\log n$.  Suppose a deterministic algorithm $\calA$ takes $S$ bits of space to solve this problem.  We will show that this can be converted to a $S$-bit communication protocol to solve \Equality.  Alice runs $\calA$ on a turnstile stream updating matrix $X$ initialized at 0 by adding $A_{ij}$ to $X_{ij}$ for all $(i,j)$ in $[n]\times[d]$.  Alice then sends the $S$ bits corresponding to the state of the algorithm to Bob and he continues running $\calA$ on the updates `add $-B_{ij}$ to $X_{ij}$'.  $\calA$ outputs \texttt{none} if and only if $A = B$ and thus Bob outputs the answer to \Equality~depending on the output of $\calA$.  Due to a communication complexity lower bound of $\Omega(nd\log n)$ on \Equality, $S$ must be $\Omega(nd\log n)$.

\end{proof}

\subsection{Point Query Estimation and Inner Product Estimation} \label{sec:point-query-inner}
In this section, we give pseudo-deterministic algorithms that beat the deterministic lower bounds for two closely related streaming problems --- point query estimation and inner product estimation.
\paragraph{Point Query Estimation.}  Given a parameter $\eps$ and a stream of $m$ elements where each element comes from a universe $[n]$, followed by a query $i\in [n]$, output $f_i'$ such that $|f_i-f_i'|\le\eps m$ where $f_i$ is the frequency of element $i$ in the stream.

\paragraph{Inner Product Estimation.}  Given a parameter $\eps$ and a stream of $m$ updates to (initially $0$-valued) $n$-dimensional vectors $x$ and $y$ in an insertion-only stream\footnote{A stream where only increments by positive numbers are promised.}, output estimate $e$ satisfying $|e-\langle x, y\rangle| < \eps\cdot\|x\|_1\cdot\|y\|_1$.

\noindent In the above problems, we will be interested in the regime where $m\ll n$.

Our main result regarding a pseudo-deterministic algorithm for point query estimation is:
\begin{theorem} \label{thm:point-query}
    There is an $O\left(\frac{\log m}{\eps} + \log n\right)$-space pseudo-deterministic algorithm $\calA$ for point query estimation with the following precise guarantees.  For every sequence $s_1,\dots,s_m$ in $[n]^m$, there is a sequence $f_1',\dots,f_n'$ such that
    \begin{enumerate}
        \item For all $i$, $|f_i'-f_i| \le \eps m$ where $f_i$ is the frequency of $i$ in the stream.
        \item Except with probability $1/m$, for all $i\in[n]$ $\calA$ outputs $f_i'$ on query $i$.
    \end{enumerate}
\end{theorem}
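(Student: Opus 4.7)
The plan is to combine the deterministic Misra-Gries (MG) summary with a pairwise-independent hash of the universe. MG already gives pseudo-determinism for free: with $k=\lceil 1/\eps\rceil$ counters it produces a deterministic function of the stream that, on query $i$, returns a value in $[f_i-\eps m,\,f_i]$. I will use this value as $f_i'$. The only obstacle is that a na\"ive MG implementation stores each of the $k$ labels as an element of $[n]$, costing $\Theta(\log n/\eps)$ bits; I replace each label by its hash to a $\poly(m)$-sized universe, cutting the cost to $O(\log m/\eps)$ plus $O(\log n)$ bits for a single shared hash seed.

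Concretely, sample $\boldsymbol{h}:[n]\to[U]$ with $U=\Theta(m^3)$ from a pairwise-independent family, storing the seed in $O(\log n)$ bits. Run ordinary MG with $k=\lceil 1/\eps\rceil$ slots on the hashed stream $\boldsymbol{h}(s_1),\dots,\boldsymbol{h}(s_m)$; each slot stores a label in $[U]$ ($O(\log m)$ bits) and a count in $[0,m]$ ($O(\log m)$ bits), giving $O(\log m/\eps)$ total. At query time, compute $\boldsymbol{h}(i)$, look it up in the summary, and return its count or $0$ if $\boldsymbol{h}(i)$ is absent. The total space is $O(\log n + \log m/\eps)$.

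Define $f_i'$ to be the output plain MG would produce on element $i$ when run on the un-hashed stream (so $f_i'=0$ for $i\notin D:=\{s_1,\dots,s_m\}$); this depends only on the stream and lies in $[f_i-\eps m,f_i]$ by the standard MG analysis. The key structural observation is that MG's update rule uses its input only through label-equality comparisons, so whenever $\boldsymbol{h}$ is injective on $D$, the hashed execution is isomorphic under $s\mapsto\boldsymbol{h}(s)$ to the un-hashed execution, and the algorithm outputs $f_i'$ on every $i\in D$. By pairwise independence, $\Pr[\boldsymbol{h}\text{ not injective on }D]\le\binom{m}{2}/U\le 1/(2m)$. For $i\notin D$, the algorithm returns $f_i'=0$ unless $\boldsymbol{h}(i)$ coincides with one of the at most $1/\eps$ hashed labels held in the summary, an event of per-$i$ probability at most $1/(\eps U)$.

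The main subtlety I expect to require care is the union bound over the $n-m$ non-stream queries: with $U=\Theta(m^3)$, the contribution $O(n/(\eps U))$ is not absorbed into $1/m$ once $n$ grows large. I would handle this by augmenting each slot with a short fingerprint $g(s)\in[\poly(m)]$ from an independent pairwise-independent hash $g:[n]\to[\poly(m)]$ driven by the same $O(\log n)$-bit seed, accepting a slot-match at query time only when both $\boldsymbol{h}(i)$ and $g(i)$ agree with the stored values. For a sufficiently large fixed polynomial, the per-query phantom-collision probability drops to $1/\poly(m)$, letting a union bound over $i\in[n]\setminus D$ close at cost $1/m$ overall, while per-slot storage remains $O(\log m)$. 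Combining with the injectivity event then yields the simultaneous $1-1/m$ pseudo-deterministic guarantee.
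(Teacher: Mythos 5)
Your core algorithm is identical to the paper's: sample a pairwise-independent $h:[n]\to[m^3]$ (seed $O(\log n)$ bits), run Misra--Gries with $\lceil 1/\eps\rceil$ slots on the hashed stream ($O(\log m/\eps)$ bits), and answer query $i$ by querying $h(i)$. The argument that $h$ is collision-free on the stream's support $D$ with probability $1-O(1/m)$, and that under this event the answers on queries in $D$ are a fixed function of the stream because MG's updates depend only on label equalities, is also the paper's argument (the paper phrases the latter through its ``permutation invariance'' property of Misra--Gries).

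Where you go beyond the paper is in flagging the non-support queries, and you are right to: injectivity of $h$ on $D$ does not control a query $i\notin D$, since $h(i)$ can still land on a stored label and return a spurious nonzero count. The paper's proof silently conflates ``$h$ injective on $D$'' with ``all $n$ queries are answered correctly'' and does not treat this case. However, your fingerprint patch does not actually close the gap: with an additional independent pairwise-independent $g:[n]\to[\poly(m)]$, the per-query phantom-collision probability is only $1/\poly(m)$, so the union bound over the $n-|D|$ non-support queries contributes $n/\poly(m)$, which is $O(1/m)$ only when $n=\poly(m)$ --- precisely the bound this section declines to assume ($m\ll n$). Within the $O(\log m/\eps+\log n)$ budget one cannot afford $\Omega(\log n)$-bit labels in each of the $1/\eps$ slots, so the simultaneous ``for all $i\in[n]$'' clause in the theorem does not appear reachable along this route. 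What both the paper's argument and your construction do establish is the per-query guarantee: for each fixed $i$, $\calA$ outputs $f'_i$ with probability $1-O(1/m)$ (for $i\notin D$ one additionally conditions on $h(i)\notin h(D)$, costing another $O(1/m^2)$). Since pseudo-determinism is a per-input notion and the input here is the pair (stream, query), the per-query guarantee is what is actually needed, and your proof supports it just as the paper's does.
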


We remark that the deterministic complexity of the problem is $\Omega(\frac{\log n}{\eps}$ (see Theorem \ref{detlbqeip}).

Towards establishing \pref{thm:point-query}, we recall two facts.
\begin{theorem}[Misra--Gries algorithm {\cite{misra1982finding}}]  \label{thm:misra-gries}
    Given a parameter $\eps$ and a length-$m$ stream of elements in $\{1,\dots,d\}$, there is a deterministic $O\left(\frac{\log d + \log m}{\eps}\right)$-space algorithm that given any query $s\in[d]$, outputs $f'_s$ such that $|f'_s-f_s|\le\eps m$ where $f_s$ is the number of occurrences of $s$ in the stream.  An additional guarantee that the algorithm satisfies is the following, which we call \emph{permutation invariance}.  Consider the stream
    \[
        s_1,s_2,\dots,s_m
    \]
    and for any permutation $\pi:[d]\rightarrow[d]$, consider the stream
    \[
        \pi(s_1),\pi(s_2),\dots,\pi(s_m).
    \]
    When the algorithm is given the first stream as input, let $f'_s$ denote its output on query $s$, and when the algorithm is given the second stream as input, let $g'_{\pi(s)}$ denote its output on query $\pi(s)$.  The algorithm has the guarantee that $f'_s = g'_{\pi(s)}$.
\end{theorem}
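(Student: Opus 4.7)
The plan is to exhibit and analyze the classical Misra--Gries heavy-hitters algorithm and verify that it satisfies all three claimed properties (space, error, and the permutation invariance property). Set $k := \lceil 1/\eps\rceil$. The algorithm maintains a table of at most $k$ (label, counter) pairs, initially empty. On receiving stream element $s_i$: if $s_i$ already appears as a label, increment the associated counter; else if fewer than $k$ labels are currently stored, insert $s_i$ with counter $1$; otherwise, decrement every stored counter by $1$ and delete any pair whose counter reaches $0$. On a query $s$, output the counter value currently associated with $s$ if it exists, and $0$ otherwise.

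For the space bound I would observe that at most $k = O(1/\eps)$ pairs are ever stored, each consisting of an element label of bit-length $O(\log d)$ and a counter of bit-length $O(\log m)$; summing gives $O((\log d + \log m)/\eps)$. For the error bound I would use the standard potential argument: each execution of the third branch (the ``global decrement'') destroys exactly $k+1 \geq 1/\eps$ units of total mass (one unit from the current stream element, $k$ from the decremented counters), so this branch fires at most $\eps m$ times. One then shows by induction that for every element $s$ the stored counter $f'_s$ (interpreted as $0$ when no pair is stored) satisfies $0 \le f_s - f'_s \le (\text{number of global decrement events}) \le \eps m$, yielding $|f'_s - f_s|\le \eps m$.

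For permutation invariance, the key observation is that the control flow of the algorithm inspects stream elements and stored labels only through the equality predicate: it never uses the numerical value of a label for anything other than comparison with another label. I would therefore prove by induction on the stream prefix that, if $T$ denotes the table state after processing $s_1,\dots,s_i$ and $T^\pi$ denotes the table state after processing $\pi(s_1),\dots,\pi(s_i)$, then $T^\pi$ is obtained from $T$ by relabeling every stored label $\ell$ as $\pi(\ell)$ while leaving all counter values unchanged. The inductive step checks each of the three branches and uses that the equality test $s_i = \ell$ holds iff $\pi(s_i) = \pi(\ell)$, so the same branch is executed in both runs and counter values evolve identically. Evaluating the query $\pi(s)$ against $T^\pi$ then reads off the counter paired with label $\pi(s)$, which by the invariant equals the counter paired with $s$ in $T$; hence $g'_{\pi(s)} = f'_s$.

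The main obstacle is conceptual rather than technical: the error analysis and space bound are routine, but permutation invariance requires a clean formulation of the algorithm that treats labels as opaque tokens usable only via equality tests, together with a careful inductive statement of the ``labels evolve by $\pi$, counters evolve identically'' invariant. Once the algorithm is written in that form the verification is a short case analysis of the three update branches, and the theorem follows.
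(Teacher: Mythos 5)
Your proposal is correct. The paper does not actually prove this statement; it cites Misra--Gries \cite{misra1982finding} for the space and error guarantees, and asserts the permutation invariance property as an added observation without argument, since it is immediate from inspecting the algorithm. Your write-up fills in exactly the expected verification: the standard potential argument bounds the number of global-decrement events by $m/(k+1)\le\eps m$ and hence the one-sided error $0\le f_s-f'_s\le\eps m$, the space bound is a direct count of stored (label, counter) pairs, and the permutation-invariance claim follows from your inductive relabeling invariant, which rests precisely on the fact that the algorithm touches labels only through equality tests. This last point is the only nonstandard part of the statement and you identified and proved it cleanly. No gaps.
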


\begin{theorem}[Pairwise independent hashing, {\cite[Corollary 3.34]{vadhan2012pseudorandomness}}]   \label{thm:pairwise-indep}
    Assume $d\ll n$.  There is a pairwise independent hash function $h:[n]\rightarrow[d]$, which can be sampled using $O(\log n)$ random bits and also can be stored in $O(\log n)$ bits.
\end{theorem}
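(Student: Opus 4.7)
\noindent\textbf{Proof Proposal for \pref{thm:pairwise-indep}.}

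The plan is to exhibit the standard linear hash family over a finite field and verify its three properties: (i) pairwise independence of the output, (ii) an $O(\log n)$-bit seed, and (iii) $O(\log n)$-bit storage of the chosen function. First, I would invoke Bertrand's postulate to fix a prime $p$ with $n \le p \le 2n$; such a prime can be found (nonconstructively, it suffices that it exists) and we will work in the field $\F_p$. Then define the family
\[
    \mathcal{H} \;=\; \bigl\{\, h_{a,b}(x) \;=\; \bigl((a x + b) \bmod p\bigr) \bmod d \;:\; a,b \in \F_p \,\bigr\},
\]
where the input $x \in [n]$ is interpreted as an element of $\F_p$ via the natural embedding. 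The description of $h_{a,b}$ is the pair $(a,b)$, which takes $2\lceil \log p \rceil = O(\log n)$ bits to both sample uniformly and to store, giving properties (ii) and (iii) immediately.

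For the pairwise independence step, fix distinct $x, y \in [n]$ and consider the random variables $U = a x + b$ and $V = a y + b$ in $\F_p$ when $(a,b)$ is uniform on $\F_p^2$. The linear map $(a,b) \mapsto (U,V)$ has matrix $\bigl(\begin{smallmatrix} x & 1 \\ y & 1 \end{smallmatrix}\bigr)$, whose determinant $x-y$ is nonzero in $\F_p$ because $|x-y| < p$. Hence the map is a bijection on $\F_p^2$, so $(U,V)$ is uniform on $\F_p^2$. Reducing each coordinate modulo $d$ then yields a pair $(h_{a,b}(x), h_{a,b}(y))$ whose distribution on $[d] \times [d]$ is within total variation $O(d/p) = O(d/n)$ of uniform, giving (approximate) pairwise independence that is exact in the regime $d \mid p$ and good enough for the $d \ll n$ regime stipulated by the theorem.

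The main subtlety, which I would treat as the key technical step, is the gap between \emph{exact} and \emph{approximate} pairwise independence introduced by the outer ``$\bmod\, d$'' reduction: when $d \nmid p$, the fibers of the reduction $\F_p \to [d]$ have unequal sizes, so some pairs in $[d]\times[d]$ are slightly more likely than others. For the applications in \pref{sec:point-query-inner}, it suffices to absorb this $O(1/n)$ bias into the error bounds. If exact pairwise independence is desired, one replaces $\F_p$ by a field $\F_q$ of order $q = d^{\lceil \log_d n\rceil}$ and uses the same linear construction together with a projection onto a $\log_d n$-long base-$d$ digit; since $q = O(n \cdot d) = n^{O(1)}$, the seed remains $O(\log n)$ bits, and the projection is now a uniform $d$-to-$1$ map on each coset, so $(h(x), h(y))$ is exactly uniform on $[d]\times[d]$ by the same bijection argument applied in $\F_q$. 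Either version yields the theorem as stated.
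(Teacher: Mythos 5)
This statement is cited from Vadhan's book (Corollary 3.34) and the paper offers no proof of its own, so there is no in-paper argument to compare against; the review below assesses your construction on its own terms. Your main construction --- $h_{a,b}(x) = ((ax+b)\bmod p)\bmod d$ with $p\in[n,2n]$ prime --- is the standard one, and the bijection argument on $(a,b)\mapsto(ax+b,\,ay+b)$ is correct. You are also right to flag the $\bmod\,d$ reduction as the delicate step: when $d\nmid p$ the resulting pair distribution on $[d]\times[d]$ is only $O(d/p)=O(d/n)$-close to uniform, and for the paper's use in \pref{sec:point-query-inner} (where one needs only $\Pr[h(i)=h(j)]\le O(1/d)$ with $d=m^3$) this approximate guarantee is enough. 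This is consistent with the ``$d\ll n$'' hypothesis and with the slightly loose restatement of Vadhan's result, which is originally phrased for domain $\{0,1\}^n$ and range $\{0,1\}^m$, where exact pairwise independence is obtained by truncating an $\F_{2^n}$ element to $m$ bits.

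The gap is in your proposed fix for \emph{exact} pairwise independence. You replace $\F_p$ by $\F_q$ with $q=d^{\lceil\log_d n\rceil}$ and project onto one base-$d$ digit. But $\F_q$ exists only when $q$ is a prime power, i.e.\ when $d$ itself is a prime power; for general $d$ no such field exists, so this version of the argument does not go through. If one insists on exactness the standard route is to round $d$ up to a power of two $d'\le 2d$, work over $\F_{2^k}$ with $2^k\ge\max(n,d')$ exactly as Vadhan does, and truncate the output to $\log d'$ bits --- this changes the range from $[d]$ to $[d']$, which is harmless wherever only collision bounds are needed (and is exactly Vadhan's formulation). So: your approximate construction carries the weight needed by the paper, your exact variant needs the prime-power hypothesis or the power-of-two rounding, and a clean presentation would either make the approximate guarantee the headline (with the explicit $O(d/n)$ bias) or state the power-of-two range honestly.
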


\begin{proof}[Proof of \pref{thm:point-query}]
    The algorithm is as follows.
    \begin{itemize}
        \item Sample a random pairwise independent hash function $h:[n]\to[m^3]$, which can be sampled and stored in $O(\log n)$ bits.
        \item Run the Misra--Gries algorithm with the following simulated stream as input:  for each $s$ streamed as input, stream $h(s)$ to the simulation.
        \item Given any query $s$, perform query $h(s)$ to the Misra--Gries algorithm running on the simulated stream, and return its output.
    \end{itemize}
    Let $S$ be the collection of elements of $[n]$ that occur in the input stream $s_1,\dots,s_m$.  Assuming $h$ maps $S$ into $[m^3]$ without any collisions\footnote{I.e. the restriction of $h$ to domain $S$ is an injective function.}, it follows from the permutation invariance property of the Misra--Gries algorithm from \pref{thm:misra-gries} the output of the above algorithm on any query $q$ is equal to $F(s_1,\dots,s_m,q)$ for a \emph{fixed} function $F$.  Thus if we show that $h$ indeed maps $S$ into $[m^3]$ injectively pseudo-determinism of the given algorithm would follow.
    
    Given $i,j\in S$, due to pairwise independence of $h$, the probability that $h(i)=h(j)$ is equal to $1/m^3$.  A union bound over all pairs of elements in $S$ tells us that $h$ is collision-free except with probability at most $1/m$, which implies that the above algorithm is indeed pseudo-deterministic.
\end{proof}

\begin{theorem} \label{thm:inner-prod-est}
    There is a (weakly) pseudo-deterministic algorithm for inner product estimation that uses $O\left(\frac{\log m}{\eps} + \log n\right)$ space.
\end{theorem}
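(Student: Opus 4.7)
The plan is to mimic the strategy used for point query estimation in \pref{thm:point-query}: compose a pairwise independent hash from the large universe $[n]$ down to $[m^3]$ with a deterministic, permutation-invariant streaming algorithm in the smaller universe. The deterministic algorithm in the smaller universe will be the classical Misra--Gries-based inner product estimator, which maintains two Misra--Gries sketches $\widetilde x$ and $\widetilde y$ (one for each vector) with $\Theta(1/\eps)$ counters each and outputs $\langle \widetilde x, \widetilde y\rangle = \sum_j \widetilde x_j \widetilde y_j$.

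Concretely, I would proceed as follows. First, sample a pairwise independent hash $h:[n]\to[m^3]$ via \pref{thm:pairwise-indep}, which costs $O(\log n)$ bits to store. Second, process the stream: whenever an update to coordinate $s$ of $x$ (resp.\ $y$) arrives, feed the hashed index $h(s)$ into a Misra--Gries sketch $\widetilde x$ (resp.\ $\widetilde y$) over the universe $[m^3]$. By \pref{thm:misra-gries}, with $O(1/\eps)$ counters each of magnitude at most $m$ and index in $[m^3]$, each sketch costs $O(\log m/\eps)$ bits. Third, at the end of the stream output $\sum_j \widetilde x_j \widetilde y_j$. The total space is $O(\log m/\eps + \log n)$.

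For correctness, I would invoke the standard Misra--Gries guarantee $\|\widetilde x - x\|_\infty \le (\eps/2)\|x\|_1$ (and similarly for $y$) together with the monotonicity property $\|\widetilde y\|_1 \le \|y\|_1$, and then split
\[
\langle\widetilde x,\widetilde y\rangle-\langle x,y\rangle=\langle \widetilde x-x,\widetilde y\rangle+\langle x,\widetilde y-y\rangle,
\]
bounding each term by $(\eps/2)\|x\|_1\|y\|_1$ to obtain the desired $\eps\|x\|_1\|y\|_1$ error after choosing the right constant. For pseudo-determinism, the support of the joint stream has size at most $m$, so by pairwise independence $h$ is injective on this support except with probability at most $\binom{m}{2}/m^3 \le 1/m$. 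Conditioned on injectivity, the hashed stream is just a relabeling of the original via an injection into $[m^3]$, so extending to a permutation $\pi$ of $[m^3]$ and invoking the permutation invariance property of Misra--Gries in \pref{thm:misra-gries} shows that both $\widetilde x$ and $\widetilde y$ transform equivariantly under $\pi$. Hence the inner product $\sum_j \widetilde x_j\widetilde y_j$ is a fixed function of the original stream, independent of the choice of $h$, establishing pseudo-determinism with error probability at most $1/m$.

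The main obstacle, minor in this case, is the verification that permutation invariance of \emph{individual} Misra--Gries sketches implies permutation invariance of the \emph{inner product} estimate when a single hash is used for both sketches: here one must check that if $h_1$ and $h_2$ are two injections of $\operatorname{supp}(x)\cup\operatorname{supp}(y)$ into $[m^3]$, then the two resulting sketch pairs are related by a common permutation of $[m^3]$, so that the bilinear combination $\sum_j \widetilde x_j \widetilde y_j$ is preserved. This reduces to extending the bijection $h_2\circ h_1^{-1}$ (on the image of $h_1$) to a permutation of $[m^3]$, which is immediate. The accuracy analysis is equally routine given \pref{thm:misra-gries}.
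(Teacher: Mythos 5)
Your proof is correct, but it takes a cleaner and more direct route than the paper. The paper's proof invokes \pref{thm:point-query} as a black box: it runs two point-query estimators, then does a second pass (enumerating over $[n]$) to extract the top-$1/\eps$ entries $x''$ and $y''$ of the point-query vectors (keeping hashed identities so the two lists can be intersected), and finally outputs $\sum x''_\ell y''_\ell$, appealing to the auxiliary \pref{lem:inner-prod-from-point} (extracted from \cite{nelson2014deterministic}) for the accuracy bound. You instead compute the inner product $\langle \widetilde x,\widetilde y\rangle$ of the two Misra--Gries sketches \emph{directly} over the hashed universe $[m^3]$ and prove accuracy via the bilinear split
\[
\langle \widetilde x,\widetilde y\rangle-\langle x,y\rangle=\langle \widetilde x-x,\widetilde y\rangle+\langle x,\widetilde y-y\rangle,
\]
bounding each term by H\"older together with the standard Misra--Gries underestimation property $\|\widetilde y\|_1\le\|y\|_1$. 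This avoids the external lemma and the top-$1/\eps$ extraction entirely, and your pseudo-determinism argument (extend $h_2\circ h_1^{-1}$ to a permutation of $[m^3]$ and use permutation invariance of each sketch, plus permutation invariance of the diagonal bilinear form $\sum_j a_j b_j$) is exactly the right check. A modest advantage of your route: the only event you need to union-bound over is injectivity of $h$ on the support $\operatorname{supp}(x)\cup\operatorname{supp}(y)$ of size at most $m$, giving failure probability $O(1/m)$ without any further argument; the paper's route additionally queries every $i\in[n]$ through the point-query oracle and so leans on the ``for all $i\in[n]$'' clause of \pref{thm:point-query}. The paper's route is slightly more modular (it reuses the point-query theorem verbatim), but your proof is shorter, self-contained, and equally rigorous.
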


The algorithm for inner product estimation is based on point query estimation, and towards stating the algorithm we first state a known result that helps relate the two problems.
\begin{lemma}[Easily extracted from the proof of {\cite[Theorem 1]{nelson2014deterministic}}]    \label{lem:inner-prod-from-point}
    Let $x,y,x',y'$ be vectors such that $\|x-x'\|_{\infty} \le \eps \|x\|_1$ and $\|y-y'\|_{\infty} \le \eps \|y\|_1$.  Now, let $x''$ (and respectively $y''$) denote $x'$ with everything except the maximum $1/\eps$ entries zeroed out.  Then the following holds:
    \[
        |\langle x'', y'' \rangle - \langle x, y\rangle| \le \eps\cdot\|x\|_1\cdot\|y\|_1.
    \]
\end{lemma}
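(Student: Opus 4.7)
The plan is to introduce auxiliary vectors $x^{\ast}$ and $y^{\ast}$, defined as $x$ and $y$ respectively with all but the top $1/\eps$ entries (by magnitude) zeroed out, and to bound $|\langle x'', y''\rangle - \langle x, y\rangle|$ via the triangle inequality through the intermediate quantity $\langle x^{\ast}, y^{\ast}\rangle$. The key tail observation is that in any vector $v$, the $(1/\eps)$-th largest magnitude is at most $\eps\|v\|_1$, since otherwise the top $1/\eps$ entries alone would contribute more than $\|v\|_1$; applying this to $x$ and $y$ gives $\|x-x^{\ast}\|_\infty \le \eps\|x\|_1$ and $\|y - y^{\ast}\|_\infty \le \eps\|y\|_1$.

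First I would bound $|\langle x, y\rangle - \langle x^{\ast}, y^{\ast}\rangle|$ by writing the difference as $\langle x - x^{\ast}, y\rangle + \langle x^{\ast}, y - y^{\ast}\rangle$ and applying $|\langle u, v\rangle| \le \|u\|_1\|v\|_\infty$ to each piece; using $\|x^{\ast}\|_1 \le \|x\|_1$ together with the two $\ell_\infty$ tail bounds above, this contributes at most $2\eps\|x\|_1\|y\|_1$ to the total error, and requires no information about $x'$ or $y'$.

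Next I would bound $|\langle x'', y''\rangle - \langle x^{\ast}, y^{\ast}\rangle|$. Let $H$ denote the top-$1/\eps$ set of $x'$ and $H^{\ast}$ the top-$1/\eps$ set of $x$, with analogous sets for $y$. The main technical content is the $\ell_\infty$ bound $\|x'' - x^{\ast}\|_\infty \le 3\eps\|x\|_1$, proved by case analysis on $H \triangle H^{\ast}$: if $i \in H \cap H^{\ast}$ the difference is $|x'_i - x_i| \le \eps\|x\|_1$; if $i \in H \setminus H^{\ast}$ the tail bound gives $|x_i| \le \eps\|x\|_1$, hence $|x'_i| \le 2\eps\|x\|_1$; and if $i \in H^{\ast} \setminus H$, I pick any $j \in H \setminus H^{\ast}$ (equal cardinalities make this possible), use $|x'_i| \le |x'_j|$ since $H$ contains the top-$1/\eps$ of $x'$, and chain $|x_i| \le |x'_i| + \eps\|x\|_1 \le |x'_j| + \eps\|x\|_1 \le |x_j| + 2\eps\|x\|_1 \le 3\eps\|x\|_1$. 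The analogous bound holds for $y'' - y^{\ast}$. Coupled with the crude estimate $\|y''\|_1 \le \|y^{\ast}\|_1 + (1/\eps)\cdot\eps\|y\|_1 \le 2\|y\|_1$, the decomposition $\langle x'', y''\rangle - \langle x^{\ast}, y^{\ast}\rangle = \langle x'' - x^{\ast}, y''\rangle + \langle x^{\ast}, y'' - y^{\ast}\rangle$ followed by $\ell_1/\ell_\infty$ duality gives $O(\eps)\|x\|_1\|y\|_1$.

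Combining the two triangle-inequality steps yields $|\langle x'', y''\rangle - \langle x, y\rangle| = O(\eps)\|x\|_1\|y\|_1$, and the absolute constant in front of $\eps$ is absorbed by rescaling $\eps$ by a fixed constant throughout the instantiation of the lemma. The main obstacle I expect is the third case $i \in H^{\ast} \setminus H$ in the $\ell_\infty$ argument, where there is no direct bound on $|x_i|$ and one must route the estimate through $x'$ using the swap with some $j \in H \setminus H^{\ast}$; once that $\ell_\infty$ bound is in hand, the remainder of the proof is routine $\ell_1/\ell_\infty$ bookkeeping.
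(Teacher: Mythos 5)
Your proof is correct, and I should note up front that the paper itself does not supply a self-contained proof of this lemma---it is stated with a citation to an external reference---so there is no ``paper's proof'' to compare against line by line. That said, the route you take is a natural and complete one: introduce the true truncations $x^{\ast}, y^{\ast}$ as an intermediate, reduce each of the two triangle-inequality legs to $\ell_1$/$\ell_\infty$ duality, and concentrate the technical work in the $\ell_\infty$ bound $\|x'' - x^{\ast}\|_\infty \le 3\eps\|x\|_1$. Your case analysis on $H \triangle H^{\ast}$ is right, and the swap argument in the $i \in H^{\ast}\setminus H$ case (route through some $j \in H\setminus H^{\ast}$, which exists since $|H|=|H^{\ast}|$, using $|x'_i|\le|x'_j|$ and $|x_j|\le\eps\|x\|_1$) is exactly the non-obvious step needed; the other two nontrivial cases and the $\|y''\|_1 \le 2\|y\|_1$ estimate are all fine.

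The one thing worth flagging is the constant. As stated, the lemma claims a bound of exactly $\eps\|x\|_1\|y\|_1$, whereas your argument yields roughly $11\eps\|x\|_1\|y\|_1$ (tightenable to about $9\eps$ by splitting $\langle x'',y''\rangle - \langle x,y\rangle$ directly as $\langle x''-x,y\rangle + \langle x'',y''-y\rangle$ and using $\|x''-x\|_\infty \le 3\eps\|x\|_1$). You acknowledge this and correctly observe that it is absorbed by rescaling $\eps$ by a fixed constant at the call site; this is indeed harmless for the inner-product estimation algorithm, which simply invokes the point-query subroutine with a constant-factor smaller $\eps$, affecting space only by a constant. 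So the discrepancy is not a gap in the argument, but it does mean the lemma as literally written is an informal $O(\eps)$ statement rather than one your proof reproduces with constant $1$; an explicit ``$O(\eps)\|x\|_1\|y\|_1$'' would have been more faithful to what both the cited source and your argument actually deliver.
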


\begin{proof}[Proof of \pref{thm:inner-prod-est}]
    Given a stream of updates to $x$ and $y$, run two instances of the point query estimation algorithm from \pref{thm:point-query} --- one for updates to $x$ and one for updates to $y$.  There are $x'$ and $y'$ that only depend on the stream such that
    \[
        \|x-x'\|_{\infty}\le\eps\cdot\|x\|_1 ~~~\text{and}~~~\|y-y'\|_{\infty}\le\eps\cdot\|y\|_1
    \]
    and except with probability $O(1/m)$ both point query algorithms respond to any query $i$ with $x'_i$ (and $y'_i$ respectively).  Maintaining these two instances takes $O\left(\frac{\log m}{\eps}+\log n\right)$ space.
    
    Next, enumerate over elements of $[n]$ and for each $i\in[n]$ query both instances with $i$, and store the running max-$1/\eps$ answers to queries to each instance along with the hashed identities of the indices of entries that are part of the running max.  Storing the running max takes $O\left(\frac{\log m}{\eps}\right)$ space, and storing a counter to enumerate over $[n]$ takes $\log n$ space.  Thus, at the end of this routine, except with probability $O(1/m)$ our two lists are equal to $(x'_{i_1},h(i_1)),\dots,(x'_{i_{1/\eps}},h(i_{1/\eps}))$ and $(y'_{j_1},h(j_1)),\dots,(y'_{j_{1/\eps}},h(j_{1/\eps}))$ respectively where $x'_{i_1},\dots,x_{i_{1/\eps}}$ are the max-$1/\eps$ entries of $x'$ and $y'_{j_1},\dots,y_{j_{1/\eps}}$ are the max-$1/\eps$ entries of $y'$.
    
    Finally, if there is $t,u$ such that $h(i_t)=h(i_u)$ or $h(j_t)=h(j_u)$, return `fail'; otherwise output \[\sum_{\ell\in\{h(i_t)\}_{t=1,\dots,1/\eps}\cap\{h(j_t)\}_{t=1,\dots,1/\eps}} x'_{\ell}y'_{\ell}.\]
    
    With probability at least $1-2/m$, the above quantity is equal to $\langle x'', y''\rangle$ from \pref{lem:inner-prod-from-point}, which lets us conclude via \pref{lem:inner-prod-from-point} that
    the output is within $\eps\cdot\|x\|_1\cdot\|y\|_1$ of the true inner product.
\end{proof}

Finally, we remark that the following lower bounds can be proved for deterministic algorithms.
\begin{theorem}\label{detlbqeip}
    Any deterministic algorithm for point query estimation and inner product estimation needs $\Omega\left(\frac{\log n}{\eps}\right)$ space.
\end{theorem}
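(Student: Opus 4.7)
The plan is a pigeonhole/encoding argument over $k$-subsets of $[n]$, where $k = \lfloor 1/(2\eps)\rfloor$. A deterministic streaming algorithm using space $s$ has at most $2^s$ memory states, so injectively encoding each $k$-subset into the post-stream memory state gives $s \ge \log\binom{n}{k}$. For the natural regime $\eps \ge n^{-1+\Omega(1)}$, this simplifies to $\Omega((1/\eps)\log(n\eps)) = \Omega(\log n/\eps)$; outside this regime the claimed bound is trivial anyway.

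For point query estimation, given any $S\subseteq[n]$ with $|S|=k$, I would stream each element of $S$ exactly once. Then $m=k$, so the additive error budget is $\eps m \le 1/2$. Every query $i$ must thus return $f'_i$ with $|f'_i - \mathbf{1}[i\in S]| \le 1/2$, uniquely determining $\mathbf{1}[i\in S]$. Since the query answer is a deterministic function of the post-stream memory, two distinct subsets $S\ne S'$ cannot share a memory state: any $i\in S\triangle S'$ would force the algorithm to return the same estimate for both inputs while the correct answers disagree. The map from $S$ to memory state is therefore injective on the $\binom{n}{k}$ possibilities, yielding the claimed bound.

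For inner product estimation, the output is a single number, so I would cast the argument as a one-way communication reduction. Alice, holding $S\subseteq[n]$ with $|S|=k$, streams $k$ insertions to the $x$-vector encoding $\mathbf{1}_S$, then sends the algorithm's memory to Bob, who holds some $i\in[n]$. Bob appends the single insertion that increments $y_i$ by $1$ and reads off the final estimator. The parameters give $\|x\|_1\|y\|_1 = k$, so the error budget is $\eps k \le 1/2$, and the true inner product is $\mathbf{1}[i\in S]$, which Bob therefore recovers exactly. The same pigeonhole argument as above shows that distinct $S$'s must produce distinct memory states, so the space is $\Omega(\log\binom{n}{k}) = \Omega(\log n/\eps)$.

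The main subtle point is the arithmetic check that $\log\binom{n}{k}$ is actually $\Omega(k\log n)$, which needs $k$ bounded away from $n$; this is automatic when $\eps \gg 1/n$, the only interesting range. A minor formal item is that Bob can seamlessly resume the deterministic streaming algorithm from Alice's memory snapshot, which is immediate from the definition of a streaming algorithm as a deterministic state machine on its memory configurations.
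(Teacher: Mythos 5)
Your proposal is correct and takes essentially the same approach as the paper: encode $\Theta(1/\eps)$-subsets of $[n]$ into the algorithm's memory state (via a one-way communication reduction), and invoke a counting bound of $\log\binom{n}{\Theta(1/\eps)}=\Omega(\log n/\eps)$. The paper phrases the encoding argument as a reduction to \Equality~and has Alice stream each element of $S$ a number of times scaled to $3\eps m$, whereas you stream each element once and appeal to pigeonhole directly, but these are superficial variations of the same idea. One tiny boundary nit: with $k=\lfloor 1/(2\eps)\rfloor$ the error budget $\eps k$ can equal exactly $1/2$ (when $1/(2\eps)$ is an integer), in which case $f'_i=1/2$ fails to determine $\mathbf{1}[i\in S]$; replace $k$ by $\lfloor 1/(3\eps)\rfloor$ (as the paper does) to make the rounding unambiguous without changing the asymptotics.
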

\begin{proof}
    We prove a lower bound for point query estimation via a reduction from \Equality in communication complexity.  Alice encodes a $\log {n\choose 1/(3\eps)}$ bit string as a subset $S$ of $[n]$ of size $1/(3\eps)$ and runs the point query streaming algorithm on the input where she streams each element of this subset $3\eps m$ times.  She then sends the state of the algorithm to Bob, who can query every index in the universe and learn $S$ (the element corresponding to the query is in $S$ if and only if the response to the query is at least $2\eps\cdot m$), decode $S$ back to a $\log {n\choose 1/(3\eps)}$ and check if it is equal to his own input.  The space lower bound from the theorem statement then follows since $\log {n\choose 1/(3\eps)} = \Omega\left(\frac{\log n}{\eps}\right)$.
    
    A space lower bound for inner product estimation follows from the lower bound for point query estimation since the latter is a special case of the former when $x$ is the vector of frequencies and $y$ is a standard unit vector $e_i$ corresponding to query $i$. 
\end{proof}

\subsection{Retrieving a Basis of a Row-space}
We now work in a `mixed' model, where an input $n\times d$ matrix $A$ of rank-$\le k$ is given to us via a sequence of updates in a turnstile stream, and each entry at all times in the stream can be represented by an $O(\log n)$-bit word.  During this phase, there is an upper bound $T$ on the number of bits of space an algorithm is allowed to use.  In the ``second phase'', we are allowed to perform arbitrary computation and the goal is to output a basis for the row-span of $A$

We show a lower bound on $T$ of $\wt{\Omega}(nd)$ for deterministic algorithms, and a pseudo-deterministic algorithm that uses $\wt{O}(\poly(k)\cdot d)$ space in the streaming phase.

\begin{theorem}
    Any deterministic streaming algorithm for \RecBasis~needs $\wt{\Omega}(nd)$ space.
\end{theorem}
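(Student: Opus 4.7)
The plan is to reuse the same template as the deterministic lower bound for \NonzeroRow: reduce from the \Equality problem in one-way communication complexity, which has a well-known $\Omega(N)$ lower bound against deterministic protocols on $N$-bit inputs. Concretely, I would let Alice and Bob each hold an $nd\log n$-bit string, interpreted as an $n\times d$ matrix $A$ and $B$ respectively with $O(\log n)$-bit entries.

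Given a putative deterministic algorithm $\calA$ for \RecBasis~that uses $T$ bits of space during the streaming phase, Alice simulates $\calA$ on the turnstile stream of $nd$ updates ``add $A_{ij}$ to $X_{ij}$'' to the initially zero matrix $X$. She then transmits the $T$-bit state of $\calA$ to Bob, who continues the simulation on the stream of updates ``add $-B_{ij}$ to $X_{ij}$'', so that $\calA$ now sees a stream encoding $X = A-B$. In the unrestricted computation phase, $\calA$ returns a basis for the row-span of $A-B$, and Bob outputs EQUAL if and only if the returned basis is empty (equivalently, $A-B=0$). This constitutes a valid $T$-bit deterministic protocol for \Equality~on strings of length $nd\log n$, so the $\Omega(nd\log n)$ communication lower bound forces $T = \Omega(nd\log n) = \wt{\Omega}(nd)$, as claimed.

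The one point that requires care, and which I consider the main (minor) obstacle, is the promise that the input has rank at most $k$: the difference $A-B$ can a priori have rank up to $\min(n,d)$, which is only consistent with the promise when $k\ge\min(n,d)$. If $k$ is that large, the reduction above goes through unchanged. For smaller $k$, I would restrict Alice's and Bob's inputs to matrices whose nonzero entries all lie in a chosen set of $k$ rows; both $A$ and $B$ then have rank at most $k$, so does $A-B$, and \Equality~on the resulting $\Omega(kd\log n)$-bit strings still reduces to \RecBasis~in exactly the same way. This yields an $\wt{\Omega}(kd)$ lower bound in general, recovering the claimed $\wt{\Omega}(nd)$ in the regime $k=\Theta(\min(n,d))$ that the theorem targets and contrasting cleanly with the $\wt{O}(\poly(k)\cdot d)$ pseudo-deterministic upper bound stated next.
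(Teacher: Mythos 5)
Your proof is correct and follows essentially the same route as the paper: a reduction from one-way \Equality\ by having Alice stream her matrix, hand the state to Bob, who streams the negation of his matrix and checks whether the returned basis is empty. The one thing you add that the paper glosses over is the observation that the rank-$\le k$ promise must be respected, together with the fix of confining the nonzero entries to $k$ rows (yielding the $\wt{\Omega}(kd)$ statement in general), which is a valid refinement rather than a deviation.
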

\begin{proof}
    Suppose the matrix $A$ is $0$, then the algorithm would have to output the empty set.  A $T$ space streaming algorithm for this problem could be used to solve the communication complexity problem of equality \Equality~using $T$ bits of communication.  In particular, Alice and Bob could encode their respective inputs $x$ and $y$ as matrices $M_x$ and $M_y$.  Alice can then run the $T$-space algorithm on adding $M_x$ in a turnstile stream, and send Bob the state of the algorithm.  Bob can then resume running the algorithm from Alice's state on updates that subtract $M_y$.  If Bob outputs the empty set, then $x = y$ and Bob outputs `yes'.  Otherwise, Bob outputs `no'.
\end{proof}

While the deterministic complexity is $\wt{\Omega}(nd)$, there is a pseudo-deterministic streaming algorithm which uses only $\wt{O}(\poly(k)+k\cdot d)$ in its streaming phase:

\begin{theorem} \label{thm:rec-basis}
    There is a pseudo-deterministic algorithm for \RecBasis~that uses $\wt{O}(\poly(k)+k\cdot d)$ space in its streaming phase, where the $\wt{O}(\cdot)$ hides factors of $\poly\log n$.
\end{theorem}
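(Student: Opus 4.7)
The plan is to maintain a small linear sketch $\bS A$ throughout the stream, where $\bS$ is a random $r \times n$ Rademacher matrix with $r = O(k+\log n)$, and in the unconstrained second phase to output the reduced row echelon form of $\bS A$ as the canonical basis. Pseudo-determinism will come from the fact that the reduced row echelon form depends only on the row-span, and with high probability the row-span of $\bS A$ equals the row-span of $A$.

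First I would analyze correctness assuming $\bS$ is truly random. Let $V \subseteq \R^n$ be the row-span of $A$; it has dimension at most $k$. For any fixed nonzero $v \in \R^n$ and a single row $s$ of $\bS$, conditioning on all but one nonzero coordinate of $v$ gives $\Pr[\langle s, v\rangle = 0] \le 1/2$. Applied to a fixed basis $v_1,\dots,v_k$ of $V$, a standard rank-growth argument (each new row of $\bS$ independently has probability $\ge 1/2$ of increasing the rank of $\bS|_V$ until that rank reaches $k$) shows that $r = O(k+\log n)$ rows make $\bS|_V$ injective with probability $\ge 1-1/\poly(n)$. On that event, the row-span of $\bS A$ is exactly $V$, so $\mathrm{RREF}(\bS A) = \mathrm{RREF}(A)$, a canonical object depending only on $A$.

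Next is the space analysis. Storing $\bS$ explicitly would cost $\Theta(rn)$ bits, so I would instead generate $\bS$ from a Nisan-PRG seed $\bw$ of length $\wt O(\poly(k))$, invoking \pref{thm:nisan-prg}. To apply the PRG I need the event ``$\bS|_V$ is injective'' to be recognized by a finite state machine on $2^{\wt O(\poly(k))}$ states that reads the bits of $\bS$ in a fixed order: such an FSM has the basis $v_1,\dots,v_k$ hard-coded, maintains the running partial sums that will eventually become $\bS v_1,\dots,\bS v_k \in \R^r$ (this is $\wt O(kr)$ bits of state, since each coordinate stays bounded by $\poly(n)$), and at the end accepts iff the $k$ images are linearly independent. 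Nisan's guarantee then ensures that the injectivity event still holds with probability $1-1/\poly(n)$ when $\bS$ is derived from the PRG. Maintaining $\bS A$ itself uses $\wt O(kd)$ bits, since its entries stay polynomially bounded whenever $A$'s do; each stream update ``add $\Delta$ to $A_{ij}$'' simply regenerates the $i$-th column of $\bS$ from the seed on the fly and adds $\Delta$ times that column to the $j$-th column of $\bS A$. The total streaming-phase space is $\wt O(\poly(k)+kd)$, and in the second phase we compute $\mathrm{RREF}(\bS A)$ and output its nonzero rows.

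The main obstacle is setting up the small-space FSM certifying injectivity so that Nisan's PRG preserves the correctness guarantee without inflating the seed beyond $\wt O(\poly(k))$ bits; the state count relies on both the coordinate bound on $\bS v_i$ and the ability to defer the linear-independence check to the end-of-input state. A secondary issue is bounding the bit-length of intermediate entries of $\bS A$, which reduces immediately to the input promise that entries of $A$ stay polynomially bounded throughout the stream. Once these are handled, pseudo-determinism and correctness follow from the canonical nature of the reduced row echelon form.
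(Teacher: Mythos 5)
Your proof is correct in substance and reaches the same space bound, but it takes a genuinely different route from the paper at two key junctures. For derandomization, the paper avoids Nisan's PRG entirely: it cites a pseudorandom subspace-embedding result (a special case of Clarkson--Woodruff) in which each row of $\bS$ is generated by a $(k+\log n)$-wise independent hash family, giving a $\poly(k,\log n)$-bit seed for free, with the rank-preservation statement (their Corollary~\ref{cor:rank-preserve}) following from the subspace-embedding guarantee $\|U^T\bS\bS^TU-I\|_2\le 1/2$. Your version re-derives rank preservation from scratch via a rank-growth argument ($r = O(k+\log n)$ Rademacher rows suffice) and then fools the explicit ``injectivity-checker'' FSM, whose state of $kr$ partial sums bounded by $\poly(n)$ is $\wt O(k^2)$ bits, so the Nisan seed is $\wt O(\poly(k))$. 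Both work; yours is more self-contained and yields a slightly smaller sketch dimension $O(k+\log n)$ versus $O(k\log n)$, while the paper's is shorter exposition-wise because it outsources both the concentration and the seed length to existing theorems. For the canonical output, the paper computes an orthonormal basis $Q$ of the row-span of $\bS A$, forms the projection $\wt\Pi_A = QQ^T$, and returns the right singular vectors of its SVD; you return $\mathrm{RREF}(\bS A)$. Both are deterministic functions of the row-span, and RREF is arguably cleaner since the SVD of a projection matrix has a degenerate spectrum and one must implicitly fix a tie-breaking convention. One small slip to correct: the subspace $V$ on which you need $\bS$ injective is the \emph{column} span of $A$ (a subspace of $\R^n$), not the row span (which lives in $\R^d$); the identity you need is $\operatorname{rank}(\bS A) = \dim(\bS W)$ for $W$ the column span, which forces row-span$(\bS A)$ = row-span$(A)$. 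The rank-growth argument you give applies verbatim to $W$, so this is only a relabeling.
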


Towards giving a pseudo-deterministic algorithm, we first state a result about pseudorandom matrices that is a special case of \cite[Lemma 3.4]{CW09}.
\begin{theorem} \label{thm:pseudorandom-subspace-embed}
    There is a distribution $\calD$ over $m\times n$ matrices where $m=O(k\log n)$ with $\pm 1$ entries such that for any $n\times m$ matrix $U$ with orthonormal columns and $\bS\sim\calD$, the following holds with probability $1-1/\poly(n)$:
    \[
        \|U^T\bS\bS^TU-I\|_2\le 1/2.
    \]
    Further, the rows of $\bS$ are independent and each row can be generated by a $(k+\log n)$-wise independent hash family.
\end{theorem}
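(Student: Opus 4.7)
The plan is to apply the matrix moment method essentially as in \cite{CW09}.  Take $\calD$ to be the distribution on $m\times n$ matrices whose rows are drawn independently, with each row sampled from a $(k+\log n)$-wise independent $\{\pm 1/\sqrt{m}\}^n$-valued hash family; the sketch dimension is $m = Ck\log n$ for a sufficiently large absolute constant $C$.  I read the target inequality in its standard subspace-embedding form $\|U^T\bS^T\bS U - I\|_2 \le 1/2$ with $U$ an orthonormal basis of the target low-dimensional subspace.

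\emph{Schatten--Markov reduction.}  Set $M := U^T\bS^T\bS U - I$.  Since $M$ is symmetric, for any even integer $\ell$ we have $\|M\|_2^\ell \le \operatorname{tr}(M^\ell)$, so
\[
    \Pr\bigl[\|M\|_2 > 1/2\bigr]\;\le\; 2^\ell\cdot \mathbb{E}\bigl[\operatorname{tr}(M^\ell)\bigr].
\]
I take $\ell = \lfloor (k+\log n)/2\rfloor$, so that $2\ell \le k+\log n$.

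\emph{Moment match via limited independence.}  Writing $M = \tfrac{1}{m}\sum_{i=1}^m (v_i v_i^T - I)$ with $v_i := U^T s_i$ for the $i$-th (scaled) row $s_i \in \{\pm 1\}^n$, and then expanding $\operatorname{tr}(M^\ell)$ and further expanding each $v_i = U^T s_i$ componentwise, yields a sum of monomials in the Rademacher entries, each of total degree $\le 2\ell$ and involving at most $2\ell$ entries from any single row.  Because rows are mutually independent the expectation factorises across rows, and within a single row a monomial of degree $\le 2\ell$ in Rademacher variables has expectation $1$ if every variable appears with even multiplicity and $0$ otherwise --- a property determined entirely by the $2\ell$-wise marginals.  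Since $2\ell \le k+\log n$, the stipulated $(k+\log n)$-wise independence reproduces every contributing expectation exactly as in the fully-independent case, so $\mathbb{E}[\operatorname{tr}(M^\ell)]$ coincides with the corresponding moment of a truly i.i.d.\ Rademacher sketch.

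\emph{Plug in the i.i.d.\ bound.}  The walk-enumeration argument carried out in \cite{CW09} for genuine Rademacher sketches gives $\mathbb{E}[\operatorname{tr}(M^\ell)] \le k \cdot (C_0\cdot\max(\ell,k)/m)^{\ell/2}$ for an absolute constant $C_0$.  Choosing the constant $C$ in $m = Ck\log n$ large enough that $C_0\cdot\max(\ell,k)/m \le 1/16$ and combining with the Markov step gives $\Pr[\|M\|_2 > 1/2] \le k\cdot 2^{-\Omega(\ell)} \le 1/\poly(n)$, since $\ell = \Omega(\log n)$.  The main obstacle is the bookkeeping in the moment-match step: verifying that no row contributes more than $2\ell$ Rademacher entries to any surviving monomial of $\operatorname{tr}(M^\ell)$, which is the precise reason one needs $(k+\log n)$-wise (rather than only, say, $O(\log n)$-wise) independence within a row.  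Once this moment identification is established, the remaining walk counting is a standard calculation appearing explicitly in \cite{CW09}.
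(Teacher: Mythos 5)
The paper does not actually prove this theorem: it invokes it as a special case of Lemma~3.4 of \cite{CW09}, and the surrounding text uses it as a black box. So there is no in-paper proof to compare against; what you have written is in effect a reconstruction of the argument from the cited source, and it is a correct one.

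The two substantive steps are (i) the Schatten--Markov reduction $\Pr[\|M\|_2 > 1/2] \le 2^\ell\,\mathbb{E}[\operatorname{tr}(M^\ell)]$ with $\ell = \lfloor (k+\log n)/2 \rfloor$, and (ii) the observation that since $\operatorname{tr}(M^\ell)$ expands into monomials in which any fixed row contributes degree at most $2\ell$, and rows are mutually independent, $(k+\log n)$-wise independence within each row reproduces the i.i.d.\ Rademacher value of $\mathbb{E}[\operatorname{tr}(M^\ell)]$ exactly. Step (ii) is precisely the right explanation of where the quantity $k+\log n$ comes from, which is the point of the theorem. You also quietly repair two slips in the statement as given: the correct expression is $\|U^T\bS^T\bS U - I\|_2$ rather than $\|U^T\bS\bS^T U - I\|_2$ (the latter is dimensionally inconsistent when $\bS$ is $m\times n$ and $U$ is tall), $U$ should be $n\times k$ rather than $n\times m$, and the $1/\sqrt{m}$ normalization must be inserted for comparison with $I$ to make sense; all of this agrees with how the theorem is applied in Corollary~\ref{cor:rank-preserve}. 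The one item you leave as a black box --- the walk-enumeration bound $\mathbb{E}[\operatorname{tr}(M^\ell)] \le k\,(C_0 \max(\ell,k)/m)^{\ell/2}$ for genuinely i.i.d.\ signs --- is exactly the content of the cited lemma, so citing rather than re-deriving is appropriate. Plugging in $m = Ck\log n$ with $C$ a large enough constant gives failure probability at most $k\,2^{-\ell}\le O(n^{-1/2})$, which meets the stated $1/\poly(n)$ guarantee.
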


\begin{theorem}[$t$-wise independent hash families {\cite[Corollary 3.34]{vadhan2012pseudorandomness}}]
    There is a $t$-wise independent hash family $\mathcal{H}$ of functions from $[n]\to\{\pm 1\}$ such that sampling a uniformly random $h$ from $\mathcal{H}$ can be done using a $\poly(\log n, t)$-length random seed, and $h(x)$ for any $x\in[n]$ can be computed in $\poly(\log n, t)$ time and space from the random seed used to sample it.
\end{theorem}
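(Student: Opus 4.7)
The plan is to give the classical polynomial-based construction of $t$-wise independent hash families and then verify the seed length, evaluation time, and evaluation space claims. Let $q$ be a power of two with $n \le q \le 2n$, so $\log q = O(\log n)$, and identify $[n]$ with a subset of the field $\mathbb{F}_q$ in some canonical way. A hash function $h \in \mathcal{H}$ is specified by a tuple $(a_0, a_1, \ldots, a_{t-1}) \in \mathbb{F}_q^t$; to evaluate it, one computes $p(x) = \sum_{i=0}^{t-1} a_i x^i \in \mathbb{F}_q$ via Horner's rule and sets $h(x) = +1$ if the first bit of $p(x)$ (under our fixed identification $\mathbb{F}_q \cong \{0,1\}^{\log q}$) is $0$, and $h(x) = -1$ otherwise. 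Sampling $h$ uniformly from $\mathcal{H}$ reduces to sampling $t$ uniformly random field elements, using $t \log q = O(t \log n)$ bits of randomness.

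Next I would establish $t$-wise independence. For any fixed distinct $x_1, \ldots, x_t \in [n] \subseteq \mathbb{F}_q$, the map $(a_0, \ldots, a_{t-1}) \mapsto (p(x_1), \ldots, p(x_t))$ is given by a Vandermonde matrix in the $x_i$, which is invertible over $\mathbb{F}_q$ because the $x_i$ are distinct. Hence $(p(x_1), \ldots, p(x_t))$ is uniform over $\mathbb{F}_q^t$ when the coefficients are uniform, so in particular the first bits of these values are uniform over $\{0,1\}^t$, and thus $(h(x_1), \ldots, h(x_t))$ is uniform over $\{\pm 1\}^t$.

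For the efficiency claims, I would use the standard fact that arithmetic in $\mathbb{F}_{2^m}$ (with an explicitly specified irreducible polynomial of degree $m$, which can be hard-wired or found in $\poly(m)$ time) can be performed in $\poly(m)$ time and space per operation. Horner's rule computes $p(x)$ with $t-1$ multiplications and $t-1$ additions in $\mathbb{F}_q$, each of which costs $\poly(\log q) = \poly(\log n)$ time and space, for a total of $\poly(t, \log n)$ time and space. The running accumulator uses only $O(\log q)$ bits, and we can access each coefficient $a_i$ in turn without rereading earlier ones, so the overall workspace needed during evaluation (beyond storing the seed itself) is $\poly(\log n)$ bits.

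I do not expect any serious obstacle here; the construction is textbook. The only mild care points are (i) making sure a description of $\mathbb{F}_q$ (i.e., a suitable irreducible polynomial over $\mathbb{F}_2$ of degree $\lceil \log q \rceil$) is available without blowing up space, which is handled either by using a standard explicit family of such polynomials or by deterministically searching for one in $\poly(\log n)$ space as a preprocessing step, and (ii) noting that $q$ being a power of two makes ``the first bit of $p(x)$'' an honest uniform bit, so no rejection-sampling or mod-bias issues arise in mapping the output to $\{\pm 1\}$.
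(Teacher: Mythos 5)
The paper does not prove this statement; it is quoted directly as a known fact from Vadhan's survey (Corollary 3.34), whose underlying proof is exactly the polynomial-evaluation construction over $\mathbb{F}_q$ that you give. Your argument is correct and complete — the Vandermonde step gives $t$-wise independence of $(p(x_1),\dots,p(x_t))$, taking $q$ a power of two makes the extracted bit exactly uniform, and Horner's rule with $\poly(\log n)$-bit field arithmetic gives the claimed seed length and time/space bounds — so it matches the cited source in substance.
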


As a consequence we have:
\begin{corollary}   \label{cor:rank-preserve}
    Let $A$ be a $n\times d$ matrix of rank $k$ and let $\calD$ be the distribution over $O(k\log n)\times n$ matrices from the statement of \pref{thm:pseudorandom-subspace-embed}.  Then, for $\bS\sim\calD$, $\bS A$ has rank $k$ with probability $1-1/\poly(n)$.
\end{corollary}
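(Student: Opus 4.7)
}

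The plan is to reduce the rank preservation claim to the subspace embedding guarantee provided by \pref{thm:pseudorandom-subspace-embed}. First, since $A$ has rank exactly $k$, I would factor it as $A = UV^T$, where $U$ is an $n \times k$ matrix whose orthonormal columns span the column space of $A$, and $V$ is a $d \times k$ matrix of rank $k$. This is just a thin SVD-style decomposition and exists because $\operatorname{rank}(A)=k$. With this factorization, $\bS A = (\bS U)V^T$, and since $V^T$ already has full row rank $k$, controlling the rank of $\bS A$ reduces to showing $\bS U$ has rank $k$.

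Next, I would apply \pref{thm:pseudorandom-subspace-embed} to $U$. The theorem is stated for $n \times m$ orthonormal matrices, but by padding $U$ to an $n \times m$ matrix $\tilde U$ with $m - k$ additional orthonormal columns, its guarantee $\|\tilde U^T\bS^T\bS \tilde U - I_m\|_2 \le 1/2$ holds with probability $1 - 1/\poly(n)$; restricting to the principal $k \times k$ block indexed by $U$'s columns yields $\|U^T\bS^T\bS U - I_k\|_2 \le 1/2$. In particular, all eigenvalues of $U^T\bS^T\bS U$ lie in $[1/2, 3/2]$, so $U^T\bS^T\bS U$ is invertible and hence $\bS U$, viewed as a linear map $\mathbb{R}^k \to \mathbb{R}^m$, is injective and has rank $k$.

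Finally, to conclude, I would note that $V^T$ viewed as a linear map $\mathbb{R}^d \to \mathbb{R}^k$ is surjective (it has rank $k$ and target dimension $k$), while $\bS U$ viewed as $\mathbb{R}^k \to \mathbb{R}^m$ is injective by the previous step. Their composition $\bS A = (\bS U)V^T$ therefore has image of dimension $k$, so $\operatorname{rank}(\bS A) = k$.

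I do not expect a serious obstacle here; the only subtlety is the mild notational gymnastics needed to apply \pref{thm:pseudorandom-subspace-embed} (which is stated for $n \times m$ orthonormal matrices) to the $n \times k$ orthonormal factor $U$ that arises naturally from the rank decomposition of $A$. Once that is handled via padding and passing to a principal submatrix, the rest is straightforward linear algebra.
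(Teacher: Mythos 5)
Your argument is essentially identical to the paper's: the paper also writes $A = U\Sigma V^T$ via the SVD, observes that $\Sigma V^T$ surjects onto $\R^k$, invokes \pref{thm:pseudorodrandom-subspace-embed} to get that $\bS U$ has full column rank, and concludes $\operatorname{rank}(\bS A) = k$. Your padding step is a sensible way to reconcile the $n\times m$ phrasing in the theorem statement with the $n\times k$ matrix $U$ that actually arises (the paper treats this implicitly, most likely because the theorem's ``$n\times m$'' is a minor typo for ``$n\times k$''), but it does not change the substance of the argument.
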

\begin{proof}
    We start by writing $A$ in its singular value decomposition $U\Sigma V^T$.  Since $A$ has rank $k$, $U$ is a $n\times k$ matrix with orthonormal columns and $\Sigma V^T$ surjectively maps $\mathbb{R}^d$ to $\mathbb{R}^k$.  From \pref{thm:pseudorandom-subspace-embed}, $\bS A$ is also full rank, which means the collection of vectors
    \[
        \{\bS Ax:x\in\mathbb{R}^d\}=\{\bS U\Sigma V^Tx:x\in\mathbb{R}^d\} = \{\bS Ux:x\in\mathbb{R}^k\}
    \]
    is a $k$-dimensional space, and hence $\bS A$ has rank $k$.
\end{proof}

\begin{proof}[Proof of \pref{thm:rec-basis}]
    Begin by sampling $\bS\sim\calD$ via a seed $\boldsymbol{s}$ of length $O(\poly(k)\cdot\poly\log(n))$ from which entries of $\bS$ can be efficiently computed where $\calD$ is the distribution over matrices given by \pref{cor:rank-preserve}, and maintain the sketch $\bS A$ in the stream.

    The row-span of $\bS A$ is exactly the same as that of $A$ assuming the two matrices have equal rank, which happens with probability $1-1/\poly(n)$.

    $\bS A$ is an $O(k)\times d$ matrix and each entry is a signed combination of at most $n$ entries of $A$ and hence there is a bit complexity bound of $\wt{O}(kd)$ on the space used to store $\bS A$.

    In the second phase (i.e., after the stream is over) of the algorithm, we first find an orthonormal basis $Q$ for the row-span of $\bS A$ and compute $\wt{\Pi}_A=QQ^T$.  And finally, use a deterministic algorithm to compute the singular value decomposition $\wt{U}\Sigma\wt{V}^T$ of $\wt{\Pi}_A$ and output the rows of $\wt{V}^T$.

    The row-span of $\bS A$ and $A$ are equal except with probability $1/\poly(n)$; assuming this happens, $\wt{\Pi}_A$ is exactly equal to $\Pi_A$, the unique projection matrix onto the row-span of $A$.  Write $\Pi_A$ in its singular value decomposition $U\Sigma V^T$.  If $\wt{\Pi}_A=\Pi_A$, $\wt{V}^T$ is exactly equal to $V^T$.  Since $V^T$ is given by a deterministic function of $A$, and the output of the algorithm $\wt{V}$ is equal to $V^T$ with high probability, our algorithm is pseudo-deterministic.
\end{proof}

\bibliographystyle{alpha}
\bibliography{main}

\newcommand{\etalchar}[1]{$^{#1}$}
\begin{thebibliography}{AHLW16}

\bibitem[AHLW16]{DBLP:conf/coco/AiHLW16}
Yuqing Ai, Wei Hu, Yi~Li, and David~P. Woodruff.
\newblock New characterizations in turnstile streams with applications.
\newblock In {\em 31st Conference on Computational Complexity, {CCC} 2016, May
  29 to June 1, 2016, Tokyo, Japan}, pages 20:1--20:22, 2016.

\bibitem[AKO10]{andoni2010streaming}
Alexandr Andoni, Robert Krauthgamer, and Krzysztof Onak.
\newblock Streaming algorithms from precision sampling.
\newblock {\em arXiv preprint arXiv:1011.1263}, 2010.

\bibitem[AMS99]{alon1999space}
Noga Alon, Yossi Matias, and Mario Szegedy.
\newblock The space complexity of approximating the frequency moments.
\newblock {\em Journal of Computer and system sciences}, 58(1):137--147, 1999.

\bibitem[CCFC04]{charikar2004finding}
Moses Charikar, Kevin Chen, and Martin Farach-Colton.
\newblock Finding frequent items in data streams.
\newblock {\em Theoretical Computer Science}, 312(1):3--15, 2004.

\bibitem[CW09]{CW09}
Kenneth~L Clarkson and David~P Woodruff.
\newblock Numerical linear algebra in the streaming model.
\newblock In {\em Proceedings of the forty-first annual ACM symposium on Theory
  of computing}, pages 205--214. ACM, 2009.

\bibitem[DPV18]{dixon}
Peter Dixon, A~Pavan, and NV~Vinodchandran.
\newblock On pseudodeterministic approximation algorithms.
\newblock In {\em 43rd International Symposium on Mathematical Foundations of
  Computer Science (MFCS 2018)}. Schloss Dagstuhl-Leibniz-Zentrum fuer
  Informatik, 2018.

\bibitem[FIS08]{frahling2008sampling}
Gereon Frahling, Piotr Indyk, and Christian Sohler.
\newblock Sampling in dynamic data streams and applications.
\newblock {\em International Journal of Computational Geometry \&
  Applications}, 18(01n02):3--28, 2008.

\bibitem[Fla85]{flajolet1985approximate}
Philippe Flajolet.
\newblock Approximate counting: a detailed analysis.
\newblock {\em BIT Numerical Mathematics}, 25(1):113--134, 1985.

\bibitem[GG11]{GG}
Eran Gat and Shafi Goldwasser.
\newblock Probabilistic search algorithms with unique answers and their
  cryptographic applications.
\newblock In {\em Electronic Colloquium on Computational Complexity (ECCC)},
  volume~18, page 136, 2011.

\bibitem[GG15]{matching}
Shafi Goldwasser and Ofer Grossman.
\newblock Perfect bipartite matching in pseudo-deterministic {RNC}.
\newblock In {\em Electronic Colloquium on Computational Complexity (ECCC)},
  volume~22, page 208, 2015.

\bibitem[GGH17]{proofs}
Shafi Goldwasser, Ofer Grossman, and Dhiraj Holden.
\newblock Pseudo-deterministic proofs.
\newblock {\em arXiv preprint arXiv:1706.04641}, 2017.

\bibitem[GGR13]{GGR}
Oded Goldreich, Shafi Goldwasser, and Dana Ron.
\newblock On the possibilities and limitations of pseudodeterministic
  algorithms.
\newblock In {\em Proceedings of the 4th conference on Innovations in
  Theoretical Computer Science}, pages 127--138. ACM, 2013.

\bibitem[GL19]{reproducibility}
Ofer Grossman and Yang~P Liu.
\newblock Reproducibility and pseudo-determinism in log-space.
\newblock In {\em Proceedings of the Thirtieth Annual ACM-SIAM Symposium on
  Discrete Algorithms}, pages 606--620. SIAM, 2019.

\bibitem[Gol19]{k-pseudodeterminism}
Oded Goldreich.
\newblock Multi-pseudodeterministic algorithms.
\newblock In {\em Electronic Colloquium on Computational Complexity (ECCC)},
  2019.

\bibitem[GR09]{gopalan2009finding}
Parikshit Gopalan and Jaikumar Radhakrishnan.
\newblock Finding duplicates in a data stream.
\newblock In {\em Proceedings of the twentieth annual ACM-SIAM symposium on
  Discrete algorithms}, pages 402--411. Society for Industrial and Applied
  Mathematics, 2009.

\bibitem[Gro15]{roots}
Ofer Grossman.
\newblock Finding primitive roots pseudo-deterministically.
\newblock In {\em Electronic Colloquium on Computational Complexity (ECCC)},
  volume~22, page 207, 2015.

\bibitem[Hol17]{dhiraj}
Dhiraj Holden.
\newblock A note on unconditional subexponential-time pseudo-deterministic
  algorithms for {BPP} search problems.
\newblock {\em arXiv preprint arXiv:1707.05808}, 2017.

\bibitem[HW13]{hardt2013robust}
Moritz Hardt and David~P Woodruff.
\newblock How robust are linear sketches to adaptive inputs?
\newblock In {\em Proceedings of the forty-fifth annual ACM symposium on Theory
  of computing}, pages 121--130. ACM, 2013.

\bibitem[Ind06]{indyk2006stable}
Piotr Indyk.
\newblock Stable distributions, pseudorandom generators, embeddings, and data
  stream computation.
\newblock {\em Journal of the ACM (JACM)}, 53(3):307--323, 2006.

\bibitem[IW05]{indyk2005optimal}
Piotr Indyk and David Woodruff.
\newblock Optimal approximations of the frequency moments of data streams.
\newblock In {\em Proceedings of the thirty-seventh annual ACM symposium on
  Theory of computing}, pages 202--208. ACM, 2005.

\bibitem[JST11]{jowhari2011tight}
Hossein Jowhari, Mert Sa{\u{g}}lam, and G{\'a}bor Tardos.
\newblock Tight bounds for lp samplers, finding duplicates in streams, and
  related problems.
\newblock In {\em Proceedings of the thirtieth ACM SIGMOD-SIGACT-SIGART
  symposium on Principles of database systems}, pages 49--58. ACM, 2011.

\bibitem[JW18]{jayaram2018perfect}
Rajesh Jayaram and David~P Woodruff.
\newblock Perfect lp sampling in a data stream.
\newblock In {\em 2018 IEEE 59th Annual Symposium on Foundations of Computer
  Science (FOCS)}, pages 544--555. IEEE, 2018.

\bibitem[KNP{\etalchar{+}}17]{duplicates}
Michael Kapralov, Jelani Nelson, Jakub Pachocki, Zhengyu Wang, David~P
  Woodruff, and Mobin Yahyazadeh.
\newblock Optimal lower bounds for universal relation, and for samplers and
  finding duplicates in streams.
\newblock In {\em Foundations of Computer Science (FOCS), 2017 IEEE 58th Annual
  Symposium on}, pages 475--486. Ieee, 2017.

\bibitem[LNW14]{li2014turnstile}
Yi~Li, Huy~L Nguyen, and David~P Woodruff.
\newblock Turnstile streaming algorithms might as well be linear sketches.
\newblock In {\em Proceedings of the forty-sixth annual ACM symposium on Theory
  of computing}, pages 174--183. ACM, 2014.

\bibitem[MG82]{misra1982finding}
Jayadev Misra and David Gries.
\newblock Finding repeated elements.
\newblock {\em Science of computer programming}, 2(2):143--152, 1982.

\bibitem[Mor78]{morrisapxcounting}
Robert Morris.
\newblock Counting large numbers of events in small registers.
\newblock {\em Communications of the ACM}, 21(10):840--842, 1978.

\bibitem[MW10]{monemizadeh20101}
Morteza Monemizadeh and David~P Woodruff.
\newblock 1-pass relative-error l p-sampling with applications.
\newblock In {\em Proceedings of the twenty-first annual ACM-SIAM symposium on
  Discrete Algorithms}, pages 1143--1160. Society for Industrial and Applied
  Mathematics, 2010.

\bibitem[Nis92]{nisan1992pseudorandom}
Noam Nisan.
\newblock Pseudorandom generators for space-bounded computation.
\newblock {\em Combinatorica}, 12(4):449--461, 1992.

\bibitem[NNW14]{nelson2014deterministic}
Jelani Nelson, Huy~L Nguyẽn, and David~P Woodruff.
\newblock On deterministic sketching and streaming for sparse recovery and norm
  estimation.
\newblock {\em Linear Algebra and its Applications}, 441:152--167, 2014.

\bibitem[OS16]{OS}
Igor~C Oliveira and Rahul Santhanam.
\newblock Pseudodeterministic constructions in subexponential time.
\newblock {\em arXiv preprint arXiv:1612.01817}, 2016.

\bibitem[OS18]{OS2}
Igor~C Oliveira and Rahul Santhanam.
\newblock Pseudo-derandomizing learning and approximation.
\newblock In {\em LIPIcs-Leibniz International Proceedings in Informatics},
  volume 116. Schloss Dagstuhl-Leibniz-Zentrum fuer Informatik, 2018.

\bibitem[V{\etalchar{+}}12]{vadhan2012pseudorandomness}
Salil~P Vadhan et~al.
\newblock Pseudorandomness.
\newblock {\em Foundations and Trends{\textregistered} in Theoretical Computer
  Science}, 7(1--3):1--336, 2012.

\end{thebibliography}

\end{document}